\documentclass[11pt]{article}

\usepackage[a4paper,margin=1in]{geometry}
\usepackage{amsmath,amssymb,amsfonts}
\usepackage{graphicx}
\usepackage{multirow}
\usepackage{amsthm}
\usepackage{mathrsfs}
\usepackage[title]{appendix}
\usepackage{xcolor}
\usepackage{textcomp}
\usepackage{manyfoot}
\usepackage{booktabs}
\usepackage[colorlinks=true,linkcolor=blue,citecolor=blue,urlcolor=blue]{hyperref}

\newtheorem{thm}{Theorem}[section]
\newtheorem{theorem}[thm]{Theorem}
\newtheorem{lemma}[thm]{Lemma}
\newtheorem{proposition}[thm]{Proposition}
\newtheorem{corollary}[thm]{Corollary}
\theoremstyle{definition}

\newtheorem{remark}[thm]{Remark}
\newtheorem{example}[thm]{Example}

\newtheorem{assumption}[thm]{Assumption}

\begin{document}

\title{Genealogical Expansions of Positive Fredholm Operators via a Reference-Point Method}

\author{
Ryo Oizumi\thanks{Email: \texttt{ooizumi-ryou@ipss.go.jp}}\\
International Relations, National Institute of Population and Social Security Research\\
2-2-3 Uchisaiwai-cho, Chiyoda-ku, Tokyo 100-0011, Japan
\and
Kensaku Kinjo\thanks{Email: \texttt{kinjo@kushiro-ct.ac.jp}}\\
National Institute of Technology, Kushiro College\\
Otanoshike-Nishi 2-32-1, Kushiro-Shi, Hokkaido 084-0916, Japan
\and
Yuki Chino\thanks{Email: \texttt{chino@nycu.edu.tw}}\\
Department of Applied Mathematics, National Yang Ming Chiao Tung University\\
Hsinchu 30010, Taiwan
}
\date{}

\maketitle

\begin{abstract}
We study positive Fredholm integral operators that arise as next-generation
operators in structured population models.  The main problem is to represent
the dominant eigenvalue and the associated right and left eigenfunctions
without using Fredholm determinants or finite-dimensional discretization.  We
introduce a reference-point construction: a rank-one correction on the space
of kernels, determined by a fixed pair \((x_0,y_0)\), which reorganizes
iterated kernels into a renewal-type series.  Under an explicit dominant
spectral separation assumption and a scalar non-resonance condition for the
chosen reference pair, the resulting \(\Gamma_n\)-series converges at the
spectral radius pointwise absolutely and gives the leading eigensystem.  The coefficients also
have a closed combinatorial expression in terms of ordinary partial Bell
polynomials.  For discrete-time integral projection models and for
multi-state McKendrick equations, the same construction yields
Euler--Lotka-type characteristic equations and formulas for demographic
quantities such as stable distributions, reproductive values, type
reproduction numbers, generation intervals, and expected generation numbers.
The resulting genealogical expansion resolves the leading eigensystem into
successive reproductive and transition contributions encoded by the iterated
kernels.
\end{abstract}

\noindent\textbf{Keywords:} integral projection models, multi-state age-structured population models, Fredholm theory, positive operators, generation intervals, life-history kernels

\medskip
\noindent\textbf{MSC 2020:} 92D25, 45B05, 45C05, 47G10, 47B65

\medskip

\section{Introduction}

This is an operator-theoretic paper on the dominant eigensystem of positive
Fredholm integral operators.  Given a positive kernel operator arising from a
renewal or next-generation formulation, our aim is to reconstruct its
dominant eigenvalue and right and left eigenfunctions directly at the kernel
level.  In this sense the paper complements, rather than replaces, existing
population models, numerical discretizations, and renewal-equation
formulations.

Integral Projection Models (IPMs) provide one important source of such
operators in population dynamics with continuous state variables
\cite{ellner2006integral,ellner2016data}.  A broad literature
\cite{white2016fitting,nicole2011interdependent,doak2021critical,
merow2014advancing,elderd2016quantifying,coulson2008dynamics,
coulson2010using} connects their spectral theory with stable states,
generation times, reproductive values, and related demographic quantities.
Renewal and structured-population theory already provide general
formulations and asymptotic frameworks \cite{feller1941integral}.
Diekmann et al.~\cite{diekmann1998formulation}
formulate general deterministic structured-population models in terms of
individual and population states.  Boldin et al.~\cite{boldin2023population}
emphasize the renewal-equation bookkeeping in discrete time: instead of
updating the whole population state directly, the renewal formulation focuses
on newborn individuals and connects naturally with next-generation matrices
and reproduction numbers.  Related discrete-time population dynamics can also
be formulated directly on spaces of measures \cite{thieme2020discrete}.
Franco et al.~\cite{franco2023modelling} treat
measure-valued renewal equations for physiologically structured populations,
prove long-time behaviour under a regularising kernel assumption, and relate
the renewal-equation formulation to PDE models.  Our purpose is different and
more local: we add a determinant-free representation of the dominant
eigensystem inside these frameworks.

The same operator-theoretic issue appears for continuous-time structured
models such as multi-state McKendrick equations
\cite{m1925applications,foerster1959some,perthame2007transport}
and path-integral population models
\cite{oizumi2013optimal,oizumi2022book}, where the renewal kernel encodes
state variables beyond age.

\medskip
\noindent\textbf{Aim and main result.}

Beyond the long-run growth rate and stable structure, the dominant eigensystem
of an IPM describes how initial traits, stages, or age--state conditions
contribute to population-scale outcomes.  These quantities are governed by the
spectral radius and eigenfunctions of a positive Fredholm operator.  Fredholm's
classical determinant formulation \cite{fredholm1903class} is fundamental, but
it does not directly display the multigenerational transition and reproductive
contributions that are central in demographic interpretation.

The aim of this paper is to give an explicit determinant-free representation
of this dominant eigensystem.  The central construction is a reference-point
recursion at the kernel level.  It produces iterates
\(\{\Gamma_n\}_{n\ge1}\) which reorganize the leading Perron contribution of
the iterated kernels into a renewal-type,
genealogical expansion.  The word ``genealogical'' is used in this constructive
sense: the terms are indexed by successive transition or reproductive steps.
This is not meant as an alternative to numerical discretization.  Its purpose
is to display the continuous-state eigenstructure that is obscured once the
kernel has been replaced by a quadrature matrix.

The essential assumptions are dominant spectral separation and reference
non-resonance.  They have different roles: the former isolates the Perron
component, while the latter controls a scalar denominator associated with the
chosen reference pair.  The algebraic reference-point identities do not require
non-resonance; that condition enters only for ordinary absolute convergence at
the Perron root and conclusions based on that convergence.  The other
hypotheses are technical conditions ensuring that the kernel-level construction
is meaningful or that these two assumptions can be verified in concrete models.
Under these assumptions, the main abstract statement,
Theorem~\ref{thm:nonHS_extension}, conditionally extends the coefficient
construction identified on the admissible Hilbert--Schmidt subclass to
non-Hilbert--Schmidt kernels and represents the leading eigenfunction by a
pointwise absolutely convergent \(\Gamma_n\)-series.  The iterates also
admit a closed form in terms of ordinary partial Bell polynomials.

We then apply the framework to two model classes.  For simple discrete-time
IPMs, the stable distribution and reproductive value are expressed as sums of
multistep contributions centered at a reference state.  For multi-state
McKendrick equations, the age-zero renewal kernel
\(\psi(\cdot,\cdot;r)\) leads to an Euler--Lotka-type characteristic equation
and to demographic indicators such as type reproduction numbers, expected
generation numbers, and generation intervals, all expressed directly through
continuous-state kernels without discretization.

\medskip
\noindent\textbf{Organization of the paper.}

Section~2 develops the determinant-free Fredholm formulation using the
reference-point operator and establishes the convergence mechanism of
the $\Gamma_n$-series under dominant spectral separation and reference non-resonance assumptions.
Section~3 applies the construction to discrete-time IPMs and clarifies
its connection with the taboo decomposition in Markov chains.
Section~4 treats the multi-state McKendrick equation and derives
genealogical interpretations and demographic indicators.
Section~5 concludes with a discussion of implications and future
directions.
\section{Reference-point construction for nonnegative Fredholm integral equations}
\label{sec:fredholm}

In this section, we consider Fredholm integral equations with nonnegative
kernels.  The purpose is not to introduce a new population model, but to
construct a determinant-free representation of the dominant eigensystem for
the integral operator induced by the kernel.  The reference point fixes the
normalization of the eigenfunctions at the kernel level, while the Bell
polynomials record the combinatorics generated by the resulting recursive
kernel expansion.

Throughout this section, we proceed in four steps.  First, we derive the
reference-point recursion on the admissible Hilbert--Schmidt subclass.  Next,
we show that the recursion is purely algebraic and therefore extends to
kernels in \(\mathbb K\), while its convergent representation at the dominant
spectral value remains conditional on the assumptions stated below.  We then
introduce two explicit assumptions governing spectral separation and reference
non-resonance.  Finally, under these assumptions we establish the absolute
convergence of the \(\Gamma_n\)-series and derive a determinant-free
representation of the dominant eigensystem.

\paragraph{Notation.}
Throughout this section, the reference pair $(x_0,y_0)$ is fixed unless otherwise stated. When the dependence on the reference pair is essential, we write
\[
\Gamma_n(x,y;x_0,y_0),
\qquad
w(x,y;x_0,y_0),
\qquad
v(y,x;x_0,y_0).
\]
When no confusion is likely, we suppress $(x_0,y_0)$ and simply write
\[
\Gamma_n(x,y),\qquad w(x,y),\qquad v(y,x).
\]
In later sections, when the diagonal choice $x_0=y_0$ is used, we further abbreviate
\[
\Gamma_n(x,y_0):=\Gamma_n(x,y_0;y_0,y_0),
\qquad
\Gamma_n^*(y_0,x):=\Gamma_n^*(y_0,y_0;y_0,x).
\]
This convention is used only to simplify notation; the underlying constructions remain those of the general reference-point framework.

\subsection{Structure of eigenfunction for the Hilbert--Schmidt case}
\label{ss:kernel-space}

We first investigate the structure of eigenfunctions for a Fredholm integral
operator with a positive kernel.

Let $\Omega^d \subseteq \mathbb{R}^d$ be a measurable domain and write $\mu$ for the Lebesgue measure on $\Omega^d$. The domain need not be bounded unless this is required by a separate compactness or Hilbert--Schmidt assumption. Let $K(x,y)$ be a measurable kernel on $\Omega^d \times \Omega^d$. To specify the class of kernels, we define the mixed-norm space by
\begin{equation}\label{eq:def-X}
\mathcal{X}
:=
L^\infty\!\bigl(\Omega^d;L^1(\Omega^d)\bigr),
\qquad
\|F\|_{\mathcal X}
:=
\operatorname*{ess\,sup}_{y\in\Omega^d}
\int_{\Omega^d}|F(x,y)|\,dx.
\end{equation}
Thus $F\in\mathcal X$ means that $x\mapsto F(x,y)$ is integrable for a.e.\ $y\in\Omega^d$, with an $L^1$ bound uniform in $y$.

Fix a reference pair \((x_0,y_0)\in\Omega^d\times\Omega^d\) and a relatively
compact open neighbourhood \(U\) of this pair.  Throughout this paper, we
denote by \(\mathbb K\) the class of kernels
\(K:\Omega^d\times\Omega^d\to\mathbb R\) such that:
\begin{itemize}
\item \textbf{Positivity:} $K(x,y)>0$ for all $x,y\in\Omega^d$.
\item \textbf{Mixed-norm integrability:} $K\in\mathcal X$.
\item \textbf{Local continuity at the reference pair:} $K$ is continuous on \(\overline U\).
\item \textbf{Uniform $L^\infty$-boundedness in the second variable:} there exists $M>0$ such that
\begin{equation}\label{eq:uniform-Linfty-second}
\operatorname*{ess\,sup}_{x\in\Omega^d}\|K(x,\cdot)\|_{L^\infty(\Omega^d)}\le M.
\end{equation}
\end{itemize}

\begin{remark}
\rm
No boundary condition is imposed on functions in \(\mathcal X\); in particular,
they are not assumed to vanish at the boundary of \(\Omega^d\).  The local
continuity assumption is used only to make point evaluation at the chosen
reference pair meaningful.  The neighbourhood \(U\) is fixed together with
the reference pair; it is not an additional state constraint.  The reference
pair is a normalization device for the eigensystem.  In applications it may
be chosen as a focal trait or stage, but the construction below does not
require this biological interpretation.
\end{remark}

\begin{remark}[On positivity]
\rm
Strict positivity is stronger than the irreducibility assumptions often used
in Perron--Frobenius theory.  We use it here to keep the presentation of the
reference-point formula uncluttered, not as the substantive spectral input of
the main theorem.  The latter is the dominant spectral separation assumption
below.  The same algebraic construction remains available under weaker
positivity-improving or irreducibility hypotheses, provided that spectral
separation and the reference non-resonance condition can be verified for the
induced operator.
\end{remark}

Let \(\mathbb K_2\subset\mathbb K\) denote the admissible
Hilbert--Schmidt subclass consisting of kernels that satisfy
\begin{equation}\label{eq:HS-assumption}
\int_{\Omega^d}\int_{\Omega^d}|K(x,y)|^2\,dx\,dy<\infty,
\qquad
\int_{\Omega^d}|K(x,x)|^2\,dx<\infty,
\end{equation}
and for which the classical Fredholm determinant \(D(\lambda)\), the first
Fredholm minor \(D(x,y;\lambda)\), and the corresponding resolvent formula
used below are well defined in the sense of classical kernel Fredholm theory
\cite{fredholm1903class}.

\noindent\textit{Role of the admissible Hilbert--Schmidt subclass.}
The class \(\mathbb K_2\) is used only to connect the reference-point
recursion with the classical determinant-and-minor construction.  We do not
assert that Hilbert--Schmidt membership alone makes the unregularized
determinant used below available for every abstract Hilbert--Schmidt operator.
The later algebraic recursion and its conditional extension are formulated on
\(\mathbb K\) and do not use a determinant.

Define
\[
\mathcal X_\star:=\{F\in\mathcal X:\ F \text{ is continuous on } \overline U\},
\qquad
\|F\|_\star:=\|F\|_{\mathcal X}+\sup_{(x,y)\in\overline U}|F(x,y)|.
\]
Then the point evaluation \(F\mapsto F(x_0,y_0)\) is a bounded functional on
\(\mathcal X_\star\).

\paragraph{Operators induced by kernels.}

Given $K\in\mathbb K$, define the operator $\mathbf K$ acting on the first variable by
\begin{equation}\label{eq:def-Kop}
(\mathbf K f)(x):=\int_{\Omega^d}K(x,\xi)f(\xi)\,d\xi,
\qquad f\in L^1(\Omega^d).
\end{equation}
We also define the kernel lift by
\begin{equation}\label{eq:def-Klift}
(\mathbf K F)(x,y):=\int_{\Omega^d}K(x,\xi)F(\xi,y)\,d\xi,
\qquad F\in\mathcal X.
\end{equation}
Then $\mathbf K F\in\mathcal X$.  The lift records all iterated kernels
\(K^{(n)}\) while retaining the second variable as a parameter.  The
eigenvalue problem remains the one-variable problem for \(\mathbf K\) on
\(L^1(\Omega^d)\), whereas the recursion defining \(\Gamma_n\) acts on
two-variable kernels.
We define the iterated kernels by
\begin{equation}\label{eq:def-iterated-kernels}
\begin{aligned}
K^{(1)}(x,y)&:=K(x,y),\\
K^{(n+1)}(x,y)&:=\int_{\Omega^d}K(x,\xi)K^{(n)}(\xi,y)\,d\xi,
\qquad n\ge1.
\end{aligned}
\end{equation}
Equivalently, $K^{(n)}(x,y)=(\mathbf K^{n-1}K)(x,y)$ for $n\ge1$.

On the algebraic class generated by the iterated kernels, with the pointwise
representatives fixed at the reference pair, define the rank-one correction
operation by
\begin{equation}\label{eq:def-P}
(\mathbf P F)(x,y):=K(x,y)F(x_0,y_0),
\end{equation}
and write $\mathbf A:=\mathbf K-\mathbf P$ for the corresponding taboo-type
operation:
\begin{equation}\label{eq:def-A}
(\mathbf A F)(x,y)
=
\int_{\Omega^d}K(x,\xi)F(\xi,y)\,d\xi
-
K(x,y)F(x_0,y_0).
\end{equation}

\begin{remark}[Algebraic role of the reference-point operator]
\label{rem:P_algebraic_role}
\rm
The operator $\mathbf P$ is introduced purely as an algebraic device on the kernel space. It does not represent a literal taboo event in the measure-theoretic sense, since a single point has Lebesgue measure zero in a continuous state space. Its role is to extract a rank-one component at the reference pair and thereby generate the recursion that underlies the determinant-free expansion.
The introduction of the reference pair is also natural from the viewpoint of eigenfunction normalization. Since an eigenfunction is determined only up to a nonzero multiplicative constant, one must fix its scale by prescribing its value at some reference point. The operator \(\mathbf P\) incorporates this normalization directly at the kernel level and thereby generates the recursive representation.
\end{remark}

\begin{remark}[Why ``taboo-type'']
\label{rem:why_taboo_type}
\rm
At this stage, the terminology ``taboo'' is only suggestive: the iterates $\{\Gamma_n\}_{n\ge1}$ are defined algebraically by the recursion $\Gamma_{n+1}=\mathbf A\Gamma_n$ and need not yet be interpreted as probabilities. The reason for the name is that, in the discrete-time matrix/Markov-chain setting discussed later, analogous quantities arise from subtracting a rank-one term that reinjects mass through a distinguished state. Here $\mathbf P$ plays exactly this algebraic role at the kernel level.
\end{remark}

\paragraph{Classical Fredholm-minor expansion and its coefficients.}

We consider the characteristic equation
\begin{equation}\label{eq:eigen-eq-K}
(\mathbf K w)(x,y)=\lambda_0 w(x,y),
\end{equation}
where $\lambda_0$ is the dominant positive eigenvalue and \(w\in\mathcal X\)
is a corresponding eigenfunction.  Such an eigenfunction is, of course,
determined only up to a nonzero scalar; the constants appearing below encode
the chosen normalization.  For general \(\lambda\) in the resolvent region,
the quantity introduced below is a normalized first-minor family; it becomes
an eigenfunction only when \(\lambda\) is specialized to a simple
characteristic value.  We first identify its coefficient expansion in the
admissible Hilbert--Schmidt case.

\begin{proposition}\label{prop:HS_eigenfunction}
Let $K\in\mathbb K_2$.  For \(|\lambda|\) sufficiently large in the domain
of the classical Fredholm resolvent expansion, the normalized first-minor
solution has the convergent expansion
\begin{equation}\label{eq:HS_structure}
w(x,y,x_0,y_0;\lambda)
=
c_0(x_0,y_0)\sum_{n=1}^\infty\frac{1}{\lambda^n}\Gamma_n(x,y,x_0,y_0),
\end{equation}
for a fixed reference point $(x_0,y_0)\in U$, where $\Gamma_1(x,y,x_0,y_0)=K^{(1)}(x,y)$ and, for $n\ge2$,
\begin{equation}\label{eq:HS_gamma}
\Gamma_n(x,y)
=
K^{(n)}(x,y)
+
\sum_{\ell=1}^{n-1}(-1)^\ell\sum_{k=\ell}^{n-1}
K^{(n-k)}(x,y)\,
\widehat B_{k,\ell}\!\left(K^{(1)},K^{(2)},\dots,K^{(k)}\right),
\end{equation}
where \(\widehat B\) denotes the ordinary partial Bell polynomial, defined by
\[
\left(\sum_{j\ge1} z_j t^j\right)^\ell
=
\sum_{m\ge0}\widehat B_{m,\ell}(z_1,z_2,\ldots)t^m .
\]
Equivalently, for \(\ell\ge1\),
\[
\widehat B_{m,\ell}(z_1,z_2,\ldots)
:=
\sum_{\substack{j_1+\cdots+j_\ell=m\\ j_i\ge1}}
z_{j_1}\cdots z_{j_\ell},
\qquad
\widehat B_{0,0}:=1 .
\]
Thus ``ordinary'' refers to ordinary generating functions; no factorial
weights are included \cite{riordan1958,comtet1974}.
Pointwise absolute convergence of \eqref{eq:HS_structure} at
\(\lambda=\lambda_0\) is established only by
Corollary~\ref{cor:gamma_normalized_convergence} under
Assumptions~\ref{ass:dominant_spectral_separation} and
\ref{ass:reference_generating_nonresonance}.
\end{proposition}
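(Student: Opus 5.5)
We work with the Neumann/resolvent series rather than expanding determinant and minor separately; the Bell-polynomial coefficients then appear from a single division of power series. For \(|\lambda|>\|\mathbf K\|\), write \(R(x,y;\lambda):=\sum_{n\ge1}\lambda^{-n}K^{(n)}(x,y)\). The Hilbert--Schmidt/Fredholm resolvent identity for \(K\in\mathbb K_2\) gives \(D(x,y;\lambda)/D(\lambda)=R(x,y;\lambda)\), up to the convention \(\mu=1/\lambda\). The normalized first-minor solution is the minor family scaled so that its value at the reference pair is fixed. We therefore define
\[
w(x,y;\lambda):=c_0\,\frac{R(x,y;\lambda)}{1+R(x_0,y_0;\lambda)} .
\]
The denominator \(1+R(x_0,y_0;\lambda)\) equals \(1+O(1/\lambda)\), so it is nonzero for \(|\lambda|\) large. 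This is the first step: identify the normalized minor with this quotient, which is a ratio of two convergent Laurent series in \(1/\lambda\).

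The second step is to expand \(1/(1+s)\) with \(s=\sum_{j\ge1}z_j t^j\), where \(t=1/\lambda\) and \(z_j=K^{(j)}(x_0,y_0)\). This gives
\[
\frac{1}{1+s}=\sum_{\ell\ge0}(-1)^\ell s^\ell=\sum_{k\ge0}t^k\sum_{\ell=0}^{k}(-1)^\ell\widehat B_{k,\ell}(z_1,z_2,\dots),
\]
using the defining generating function of \(\widehat B\). Multiplying by \(\sum_{m\ge1}K^{(m)}(x,y)t^m\) and collecting the coefficient of \(t^n\) with \(m=n-k\), we get
\[
\Gamma_n(x,y)=K^{(n)}(x,y)+\sum_{k=1}^{n-1}K^{(n-k)}(x,y)\sum_{\ell=1}^k(-1)^\ell\widehat B_{k,\ell}.
\]
Exchanging the order of summation over \((k,\ell)\) yields exactly \eqref{eq:HS_gamma}, and \(\Gamma_1=K^{(1)}\). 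The reading \(\widehat B_{k,\ell}(K^{(1)},\dots,K^{(k)})\) means evaluation at the reference pair. We should also check that this agrees with the recursion \(\Gamma_{n+1}=\mathbf A\Gamma_n\). Indeed, \((1+R(x_0,y_0))W=R\) with \(W=\sum t^n\Gamma_n\) gives \(\Gamma_{n+1}=K^{(n+1)}-\sum_{j=1}^{n}K^{(j)}(x_0,y_0)\Gamma_{n+1-j}\). Applying \(\mathbf K\) to the identity for \(\Gamma_n\) then reproduces this, because \(\mathbf P\) contributes the \(K\,\Gamma_n(x_0,y_0)\) term. This confirms the coefficients and links them to the taboo-type operator.

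The third step is convergence for \(|\lambda|\) large. The quotient is analytic in \(t\) in a disc \(|t|<\rho\). Here \(\rho\) is bounded below by \(1/\|\mathbf K\|\) and by the first zero of \(1+R(x_0,y_0;1/t)\). Pointwise, we bound \(|K^{(n)}(x,y)|\le M\|K\|_{\mathcal X}^{\,n-1}\) using \eqref{eq:uniform-Linfty-second} and the mixed norm. This gives absolute convergence of both series and of their Cauchy product, uniformly in \((x,y)\), for \(|t|\) small. The coefficients at the reference pair are finite because \(K^{(n)}\) is continuous on \(\overline U\), so point evaluation is legitimate in \(\mathcal X_\star\).

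The main obstacle is the first step: making the identification with the classical normalized first minor rigorous. That requires the resolvent identity \(D(x,y;\lambda)=D(\lambda)R(x,y;\lambda)\) pointwise, not just in \(L^2\). It also requires tracking the normalization so that the constant comes out as \(c_0(x_0,y_0)\), presumably the inverse of the scaling at the reference pair. I would first rely on the admissibility clause in the definition of \(\mathbb K_2\), which builds in the classical pointwise Fredholm theory. With \(D\) and \(D(x,y;\cdot)\) entire in \(\mu\), the quotient identity holds on a neighbourhood of \(\mu=0\). Pointwise values are then secured by the continuity and uniform bounds of the iterated kernels.
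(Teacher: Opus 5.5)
Your proposal is correct, but it reaches the Bell-polynomial coefficients by a different route from the paper.

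Both arguments start from the same object. The paper's $w=c_0D(x,y;\lambda)/(D(\lambda)+D(x_0,y_0;\lambda))$ is exactly your $c_0R(x,y;\lambda)/(1+R(x_0,y_0;\lambda))$ after dividing by $D(\lambda)$.

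The paper then proceeds in three steps:
\begin{itemize}
\item It rewrites the resolvent equation as the fixed-point relation $w=\lambda^{-1}K\,(c_0-w(x_0,y_0;\lambda))+\lambda^{-1}\mathbf K w$.
\item It compares powers of $\lambda^{-1}$ to obtain the recursion $\Gamma_{n+1}=(\mathbf K-\mathbf P)\Gamma_n$.
\item It obtains the Bell polynomials only afterwards, by expanding $(\mathbf K-\mathbf P)^{n-1}K$ into words that collapse to $K^{(i_0+1)}\prod_r b_{i_r+1}$ (Lemma~\ref{lem:Gamma_Bell_nonHS}).
\end{itemize}
You instead divide power series directly and read $\widehat B_{k,\ell}$ off its generating function. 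This is Lemma~\ref{lem:reference_generating_identity} run backwards, with the recursion recovered as a consequence.

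Each order has its advantage.
\begin{itemize}
\item The paper's order puts the determinant-free recursion and word expansion first. These hold verbatim for every $K\in\mathbb K$ and are exactly what Theorem~\ref{thm:nonHS_extension} reuses; the Fredholm side serves only to identify the object.
\item Your order proves in one stroke the convergence for large $|\lambda|$, which the paper's formal coefficient comparison takes for granted. Your tools are the bound $|K^{(n)}|\le M\|K\|_{\mathcal X}^{n-1}$ and the non-vanishing of $1+\mathcal K(t;x_0,y_0)$ near $t=0$. Your route also exhibits directly the quotient form that Corollary~\ref{cor:gamma_normalized_convergence} later relies on.
\end{itemize}

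Two minor corrections.
\begin{itemize}
\item The step you flag as the main obstacle is definitional here. Equation \eqref{eq:def_Dxy_lambda} defines $D(x,y;\lambda)$ as $D(\lambda)$ times the Neumann-series resolvent applied to $K/\lambda$, so $D(x,y;\lambda)=D(\lambda)R(x,y;\lambda)$ holds pointwise for large $|\lambda|$ by construction.
\item Continuity of $K^{(n)}$ on $\overline U$ does not follow from continuity of $K$ there, because $K^{(n)}(x,y)$ integrates $K(x,\xi)$ over all $\xi$. Finiteness of $b_n=K^{(n)}(x_0,y_0)$ should instead rest on your pointwise bound, applied to the fixed representatives.
\end{itemize}
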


\begin{proof}[Derivation from Fredholm theory]
For \(K\in\mathbb K_2\), the Fredholm determinant \(D(\lambda)\) and the
first Fredholm minor are well defined.  For \(|\lambda|\) sufficiently large,
the resolvent expansion gives
\[
\frac{1}{\lambda}
\left(\left(\mathbf I-\frac{1}{\lambda}\mathbf K\right)^{-1}K\right)(x,y)
=
\sum_{n\ge1}\lambda^{-n}K^{(n)}(x,y).
\]
Let
\[
D(x,y;\lambda)
:=
D(\lambda)
\left(\mathbf I-\frac{1}{\lambda}\mathbf K\right)^{-1}
\frac{K(x,y)}{\lambda}.
\]
If \(\lambda_0\) is a simple zero of \(D\), then
\(D(\cdot,\cdot;\lambda_0)\) is a nonzero eigenfunction for the eigenvalue
\(\lambda_0\).  Choose \((x_0,y_0)\) so that
\(D(x_0,y_0;\lambda_0)\ne0\), and normalize
\[
w(x,y;\lambda)=c(x_0,y_0;\lambda)D(x,y;\lambda),\qquad
c(x_0,y_0;\lambda)=
\frac{c_0}{D(\lambda)+D(x_0,y_0;\lambda)} .
\]
Then
\[
\frac{c(x_0,y_0;\lambda)}{c_0}D(\lambda)
=1-\frac{w(x_0,y_0;\lambda)}{c_0}.
\]
Substitution into the Fredholm resolvent equation yields
\[
w(x,y;\lambda)
=
\frac{K(x,y)}{\lambda}\bigl(c_0-w(x_0,y_0;\lambda)\bigr)
+\frac1\lambda(\mathbf K w)(x,y;\lambda).
\]
Writing \(w(x,y;\lambda)=\sum_{n\ge1}c_0\lambda^{-n}\Gamma_n(x,y)\) and
comparing powers of \(\lambda^{-1}\) gives
\[
\Gamma_1=K,\qquad
\Gamma_{n+1}=\mathbf K\Gamma_n-K\,\Gamma_n(x_0,y_0).
\]
Thus \(\Gamma_n=(\mathbf K-\mathbf P)^{n-1}K\).  Expanding the words in
\(\mathbf K\) and the rank-one operator \(\mathbf P\) gives the displayed
Bell-polynomial formula; the detailed word expansion is recorded below in
Lemma~\ref{lem:Gamma_Bell_nonHS}.  Evaluating the normalized Fredholm minor at
\(\lambda=\lambda_0\) identifies the associated eigenfunction when
\(\lambda_0\) is a simple zero of \(D\).
\end{proof}

The spectral assumptions introduced below are \emph{additional assumptions}
and do not follow from $K\in\mathbb K$ alone.

\paragraph{Construction of the kernel recursion.}

The recursion derived below from the classical Fredholm expansion on
\(\mathbb K_2\) is algebraic and is defined for every \(K\in\mathbb K\).

\smallskip
\noindent\textbf{Step 1. Fredholm-side derivation of the recursion for $K\in\mathbb K_2$.}
For $K\in\mathbb K_2$, the classical Fredholm determinant and first minor are
well defined by the admissibility built into the definition of \(\mathbb K_2\),
and, for large \(|\lambda|\), the resolvent of $\mathbf K$ is given by the
Neumann series \cite{fredholm1903class,neumann1877untersuchungen}
\begin{equation}\label{eq:neumann_resolvent_K}
\frac{1}{\lambda}\left(\left(\mathbf I-\frac{1}{\lambda}\mathbf K\right)^{-1}K\right)(x,y)
=
\sum_{n=1}^\infty\frac{1}{\lambda^n}K^{(n)}(x,y),
\qquad |\lambda|>\|\mathbf K\|_{\mathrm{op}}.
\end{equation}
We define
\begin{equation}\label{eq:def_Dxy_lambda}
D(x,y;\lambda)
:=
D(\lambda)\left(\mathbf I-\frac{1}{\lambda}\mathbf K\right)^{-1}\frac{K(x,y)}{\lambda}.
\end{equation}
If $\lambda_0$ is a zero of $D(\lambda)$ corresponding to the dominant simple eigenvalue, then $D(x,y;\lambda_0)$ is a nontrivial eigenfunction of $\mathbf K$. Choose a reference pair $(x_0,y_0)$ so that
\begin{equation}\label{eq:ref_pair_nonzero_D}
D(x_0,y_0;\lambda_0)\neq0.
\end{equation}
We normalize by
\begin{equation}\label{eq:def_w_normalized}
w(x,y;\lambda)
=
w(x,y;x_0,y_0;\lambda)
:=
c(x_0,y_0;\lambda)\,D(x,y;\lambda),
\end{equation}
with
\begin{equation}\label{eq:def_c_normalized}
c(x_0,y_0;\lambda)
:=
\frac{c_0}{D(\lambda)+D(x_0,y_0;\lambda)},
\qquad c_0=c_0(x_0,y_0)\neq0.
\end{equation}
Combining \eqref{eq:def_w_normalized} and \eqref{eq:def_c_normalized}, we obtain
\begin{equation}\label{eq:key_relation_cD}
\frac{c(x_0,y_0;\lambda)}{c_0}D(\lambda)
=
1-\frac{w(x_0,y_0;\lambda)}{c_0}.
\end{equation}

Substituting \eqref{eq:def_Dxy_lambda} into \eqref{eq:def_w_normalized}, we obtain
\[
\left(\mathbf I-\frac{1}{\lambda}\mathbf K\right)w(x,y;\lambda)
=
c(x_0,y_0;\lambda)D(\lambda)\frac{K(x,y)}{\lambda}.
\]
Hence
\begin{equation}\label{eq:fredholm_rewritten}
\begin{aligned}
w(x,y;\lambda)
&=
c(x_0,y_0;\lambda)D(\lambda)\frac{K(x,y)}{\lambda}
+\frac{1}{\lambda}\mathbf K w(x,y;\lambda)\\
&=
c(x_0,y_0;\lambda)D(\lambda)\frac{K(x,y)}{\lambda}
+
\int_{\Omega^d}\frac{K(x,\xi)}{\lambda}w(\xi,y;\lambda)\,d\xi.
\end{aligned}
\end{equation}
Set formally
\begin{equation}\label{eq:w_Gamma_series_formal}
w(x,y;\lambda)
=
\sum_{n=1}^\infty\frac{c_0}{\lambda^n}\Gamma_n(x,y).
\end{equation}
Using \eqref{eq:key_relation_cD}, equation \eqref{eq:fredholm_rewritten} becomes
\begin{equation}\label{eq:fredholm_rewritten_2}
w(x,y;\lambda)
=
\frac{K(x,y)}{\lambda}\bigl(c_0-w(x_0,y_0;\lambda)\bigr)
+
\frac{1}{\lambda}(\mathbf K w)(x,y;\lambda).
\end{equation}
Substituting \eqref{eq:w_Gamma_series_formal} into both sides of \eqref{eq:fredholm_rewritten_2} and comparing coefficients of $\lambda^{-n}$, we obtain the recursion
\begin{equation}\label{eq:Gamma_recursion_main}
\begin{aligned}
\Gamma_1(x,y)&=K(x,y),\\
\Gamma_{n+1}(x,y)
&=
(\mathbf K\Gamma_n)(x,y)-K(x,y)\Gamma_n(x_0,y_0),
\qquad n\ge1.
\end{aligned}
\end{equation}

\smallskip
\noindent\textbf{Step 2. Operator form of the recursion.}
Recall the rank-one reference-point operator
\begin{equation}\label{eq:def_P_refpoint_again}
(\mathbf P F)(x,y):=K(x,y)F(x_0,y_0),
\end{equation}
and set
\begin{equation}\label{eq:def_A_refpoint_again}
\mathbf A:=\mathbf K-\mathbf P.
\end{equation}
Then \eqref{eq:Gamma_recursion_main} is simply
\begin{equation}\label{eq:Gamma_recursion_operator}
\Gamma_1=K,
\qquad
\Gamma_{n+1}=\mathbf A\Gamma_n,
\qquad n\ge1,
\end{equation}
that is,
\begin{equation}\label{eq:Gamma_A_power}
\Gamma_n=\mathbf A^{\,n-1}K,
\qquad n\ge1.
\end{equation}

\begin{lemma}[Bell-polynomial representation of $\Gamma_n$]
\label{lem:Gamma_Bell_nonHS}
Let $K\in\mathbb K$. Then, for every $n\ge1$,
\begin{equation}\label{eq:Gamma_Bell_nonHS}
\Gamma_n
=
\sum_{k=1}^{n}K^{(k)}
\sum_{\ell=0}^{n-k}(-1)^\ell\,
\widehat B_{n-k,\ell}(b_1,b_2,\dots),
\end{equation}
and consequently
\begin{equation}\label{eq:a_Bell_nonHS}
\Gamma_n(x_0,y_0)
=
\sum_{k=1}^{n}b_k
\sum_{\ell=0}^{n-k}(-1)^\ell\,
\widehat B_{n-k,\ell}(b_1,b_2,\dots,b_{n-k}),
\end{equation}
where
\[
b_n:=K^{(n)}(x_0,y_0),
\qquad n\ge1.
\]
\end{lemma}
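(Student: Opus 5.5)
Since $\Gamma_n=\mathbf A^{n-1}K$ with $\mathbf A=\mathbf K-\mathbf P$ by \eqref{eq:Gamma_A_power}, the proof is purely combinatorial. The first step is to write each $\Gamma_n$ as a finite linear combination of iterated kernels with scalar coefficients,
\[
\Gamma_n=\sum_{k=1}^{n}\gamma_{n,k}\,K^{(k)},
\]
and to determine the $\gamma_{n,k}$. This ansatz is stable under $\mathbf A$. Linearity and the definition \eqref{eq:def-iterated-kernels} give $\mathbf K K^{(k)}=K^{(k+1)}$. The definition \eqref{eq:def-P} gives $\mathbf P K^{(k)}=K^{(1)}\,b_k$, where the point value $b_k=K^{(k)}(x_0,y_0)$ is meaningful because every iterate is continuous on $\overline U$; this is the same algebraic class on which $\mathbf P$ was defined. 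Applying $\mathbf A$ to the ansatz therefore gives the coefficient recursion
\[
\gamma_{n+1,k+1}=\gamma_{n,k}\quad(k\ge1),\qquad
\gamma_{n+1,1}=-\sum_{k=1}^{n}\gamma_{n,k}\,b_k,\qquad \gamma_{1,1}=1 .
\]

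The shift relation immediately yields $\gamma_{n,k}=\gamma_{n-k+1,1}$. Setting $g_m:=\gamma_{m+1,1}$ for $m\ge0$, we have $g_0=1$ and
\[
g_m=-\sum_{j=1}^{m} b_j\, g_{m-j}, \qquad m\ge 1 .
\]
This is a scalar renewal equation. In generating-function form, with $B(t)=\sum_{j\ge1}b_jt^j$ and $G(t)=\sum_{m\ge0}g_mt^m$, it reads $G(t)=1-B(t)G(t)$, so $G=(1+B)^{-1}=\sum_{\ell\ge0}(-1)^\ell B^\ell$ as formal power series. Extracting the coefficient of $t^m$ and using the defining identity of the ordinary partial Bell polynomials gives
\[
g_m=\sum_{\ell=0}^{m}(-1)^\ell\widehat B_{m,\ell}(b_1,b_2,\dots).
\]
The sum may be truncated at $\ell=m$ because $\widehat B_{m,\ell}=0$ for $\ell>m$, since each part is at least $1$. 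Since $\gamma_{n,k}=g_{n-k}$, this is exactly \eqref{eq:Gamma_Bell_nonHS}.

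To get \eqref{eq:a_Bell_nonHS}, we evaluate \eqref{eq:Gamma_Bell_nonHS} at $(x_0,y_0)$, which is legitimate because the sum is finite and each $K^{(k)}$ is continuous at the reference pair. We also note that $\widehat B_{n-k,\ell}$ depends only on $b_1,\dots,b_{n-k}$, because all parts are bounded by the total $n-k$.

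The only step needing care is justifying continuity of $K^{(k)}$ on $\overline U$, so that the $b_k$ are well defined. We would argue this via dominated convergence: $K$ is continuous on $\overline U$, and $|K(x,\xi)K^{(k-1)}(\xi,y)|\le M\,|K^{(k-1)}(\xi,y)|$ by \eqref{eq:uniform-Linfty-second}, with $\xi\mapsto K^{(k-1)}(\xi,y)$ in $L^1$ by the mixed-norm bound. However, continuity in $y$ of that integrable factor requires an inductive argument. If this proves delicate, we would instead appeal, as the paper does, to the convention that pointwise representatives are fixed on the algebraic class generated by the iterated kernels. Either way, the combinatorial identity itself is unaffected.
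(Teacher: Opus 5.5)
Your argument is correct, but you reach the Bell polynomials by a different route from the paper.

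\textbf{The paper's route.} The paper expands $(\mathbf K-\mathbf P)^{n-1}$ directly into noncommuting words $\mathbf K^{i_0}\mathbf P\mathbf K^{i_1}\cdots\mathbf P\mathbf K^{i_\ell}$. Because $\mathbf P$ always returns a scalar multiple of $K$, each word applied to $K$ collapses to $K^{(i_0+1)}\prod_{r=1}^{\ell}b_{i_r+1}$. After the shift $j_r=i_r+1$, the sum over $(i_1,\dots,i_\ell)$ is recognized as $\widehat B_{n-k,\ell}$. That proof needs no ansatz, no induction and no series inversion. It also keeps, term by term, the reading as a composition with $\ell$ reference corrections.

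\textbf{Your route.} You push the ansatz $\Gamma_n=\sum_k\gamma_{n,k}K^{(k)}$ through $\Gamma_{n+1}=\mathbf A\Gamma_n$. This isolates the Toeplitz structure $\gamma_{n,k}=g_{n-k}$, and you solve the scalar renewal equation $g_m=-\sum_{j=1}^m b_jg_{m-j}$ by formal inversion, $G=(1+B)^{-1}$.

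\textbf{What each buys.} Your approach reduces all coefficients to a convolution recursion in the scalars $b_j$ alone. It also gives Lemma~\ref{lem:reference_generating_identity} essentially for free, since $\sum_n\Gamma_nt^n=\mathcal K(t;x,y)\,G(t)$ with $B(t)=\mathcal K(t;x_0,y_0)$. The paper goes the other way and derives that lemma from its word expansion. The paper's approach, in turn, is more direct and keeps the term-by-term combinatorial reading.

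\textbf{Well-definedness of $b_k$.} Your first justification, that every iterate is continuous on $\overline U$, does not follow from membership in $\mathbb K$, and the dominated-convergence sketch fails.
\begin{itemize}
\item Continuity of $K$ is assumed only on $\overline U$. So $x\mapsto K(x,\xi)$ need not be continuous at $x_0$ for $\xi$ away from $y_0$; the obstruction is in $x$, not in $y$.
\item A concrete example: take $K(x,\xi)=1+\mathbf 1_{\{x>x_0\}}\mathbf 1_B(\xi)$, with $B$ disjoint from the $\xi$-projection of $\overline U$. Then $K^{(2)}$ jumps at $x=x_0$.
\end{itemize}
Your fallback is the right one, and it is exactly what the paper does. $\mathbf P$ is defined on the algebraic class generated by the iterated kernels, with pointwise representatives fixed at $(x_0,y_0)$. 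Given those scalars, the identity is purely combinatorial.
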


\begin{proof}
\noindent\textbf{Step 1.}
Expanding $(\mathbf K-\mathbf P)^{n-1}$ yields
\[
(\mathbf K-\mathbf P)^{n-1}
=
\sum_{\ell=0}^{n-1}(-1)^\ell
\sum_{\substack{i_0,\dots,i_\ell\ge0\\ i_0+\cdots+i_\ell=n-1-\ell}}
\mathbf K^{i_0}\mathbf P\,\mathbf K^{i_1}\mathbf P\cdots\mathbf P\,\mathbf K^{i_\ell}.
\]
Applying this to $K$ gives
\begin{equation}\label{eq:Gamma_word_expansion}
\Gamma_n
=
\sum_{\ell=0}^{n-1}(-1)^\ell
\sum_{\substack{i_0,\dots,i_\ell\ge0\\ i_0+\cdots+i_\ell=n-1-\ell}}
\mathbf K^{i_0}\mathbf P\,\mathbf K^{i_1}\mathbf P\cdots\mathbf P\,\mathbf K^{i_\ell}K.
\end{equation}

\smallskip
\noindent\textbf{Step 2.}
Since $\mathbf K^jK=K^{(j+1)}$ for $j\ge0$, we have
\[
\mathbf P\,\mathbf K^jK
=
K\,(\mathbf K^jK)(x_0,y_0)
=
K\,K^{(j+1)}(x_0,y_0)
=
K\,b_{j+1}.
\]
Thus each word in \eqref{eq:Gamma_word_expansion} collapses to
\[
\mathbf K^{i_0}\mathbf P\,\mathbf K^{i_1}\mathbf P\cdots\mathbf P\,\mathbf K^{i_\ell}K
=
K^{(i_0+1)}\prod_{r=1}^{\ell}b_{i_r+1}.
\]
Therefore
\begin{equation}\label{eq:Gamma_comp_expansion}
\Gamma_n
=
\sum_{\ell=0}^{n-1}(-1)^\ell
\sum_{\substack{i_0,\dots,i_\ell\ge0\\ i_0+\cdots+i_\ell=n-1-\ell}}
K^{(i_0+1)}\prod_{r=1}^{\ell}b_{i_r+1}.
\end{equation}

\smallskip
\noindent\textbf{Step 3.}
Fix $n\ge1$ and set $k:=i_0+1\in\{1,\dots,n\}$. Then $m:=n-k$ and
\[
i_1+\cdots+i_\ell=m-\ell.
\]
Writing $j_r:=i_r+1\ge1$ for $1\le r\le\ell$, we obtain
\[
j_1+\cdots+j_\ell=m,
\qquad
\prod_{r=1}^{\ell}b_{i_r+1}=\prod_{r=1}^{\ell}b_{j_r}.
\]
Hence
\[
\sum_{\substack{i_1,\dots,i_\ell\ge0\\ i_1+\cdots+i_\ell=m-\ell}}
\prod_{r=1}^{\ell}b_{i_r+1}
=
\sum_{\substack{j_1,\dots,j_\ell\ge1\\ j_1+\cdots+j_\ell=m}}
\prod_{r=1}^{\ell}b_{j_r}
=
\widehat B_{m,\ell}(b_1,b_2,\dots).
\]
Substituting this into \eqref{eq:Gamma_comp_expansion} yields \eqref{eq:Gamma_Bell_nonHS}, and evaluating at $(x_0,y_0)$ gives \eqref{eq:a_Bell_nonHS}.
\end{proof}

\paragraph{Spectral assumptions for the non-Hilbert--Schmidt extension.}

The class $\mathbb K$ ensures that the reference-point construction is well posed at the kernel level, but it does \emph{not} by itself imply quasi-compactness, simplicity of the spectral radius, or the existence of a Riesz decomposition. For the extension beyond $\mathbb K_2$, we therefore impose the following spectral assumption explicitly.

\begin{assumption}[Dominant spectral separation]
\label{ass:dominant_spectral_separation}
Let $\mathbf K:L^1(\Omega^d)\to L^1(\Omega^d)$ be the integral operator induced by a kernel $K\in\mathbb K$. Assume that:
\begin{enumerate}
\item[(i)] $\lambda_0:=\rho(\mathbf K)>0$ is an isolated algebraically simple eigenvalue of $\mathbf K$;
\item[(ii)] there exist a rank-one Riesz projection with kernel
\(\Pi(x,y)\), a number $\theta\in(0,\lambda_0)$, and a function
$C_1(x,y)>0$ such that
\begin{equation}\label{eq:K_n_riesz_kernel_form}
K^{(n)}(x,y)
=
\lambda_0^{\,n}\Pi(x,y)+R_n(x,y),
\qquad
|R_n(x,y)|\le C_1(x,y)\theta^n,
\qquad n\ge1;
\end{equation}
\item[(iii)] the reference pair $(x_0,y_0)$ is chosen so that
\(\kappa:=\Pi(x_0,y_0)>0\).  We then define the normalized product
\begin{equation}\label{eq:def-U-riesz}
U(x,y):=\frac{\Pi(x,y)}{\kappa}=u(x)v(y),
\qquad u(x_0)=v(y_0)=1,
\end{equation}
so that
\begin{equation}\label{eq:U_normalization_refpoint}
U(x_0,y_0)=1,
\qquad
\Pi(x,y)=\kappa U(x,y).
\end{equation}
\end{enumerate}
\end{assumption}

\begin{remark}
\rm
Assumption~\ref{ass:dominant_spectral_separation} is not a consequence of the definition of $\mathbb K$; it is an additional spectral hypothesis used to isolate the Perron contribution at the kernel level. In particular, the pointwise decomposition \eqref{eq:K_n_riesz_kernel_form} is the precise place where the dominant rank-one contribution and the exponentially smaller remainder are separated. The pointwise bound is stronger than an operator-norm Riesz decomposition; it is imposed here because the reference-point construction and the resulting demographic quantities are evaluated at fixed kernel points.  The normalization in item (iii) does not restrict the kernel or the operator: the positive right and left eigenfunctions may be scaled so that $u(x_0)=v(y_0)=1$.  The invariant Riesz projection is $\Pi=\kappa U$; the scalar $\kappa$ records the normalization required to distinguish the projection kernel from the normalized eigenfunction product.
\end{remark}

\paragraph{Sufficient conditions for Assumption~\ref{ass:dominant_spectral_separation}.}

We record standard mechanisms by which Assumption~\ref{ass:dominant_spectral_separation}
can be verified.  The following remarks are intended as checkable routes, not
as substitutes for verification in a concrete model.  They are based on
classical Perron--Frobenius, Krein--Rutman, and quasi-compactness theory
\cite{kreinrutman1948,henry1981,hennion1993}.  The additional reference
non-resonance condition below is scalar and depends on the chosen reference
pair; it is therefore stated separately from these operator-theoretic
conditions.

\begin{remark}[Hilbert--Schmidt case]
\label{rem:HS_sufficient}
\rm
If $K\in\mathbb K_2$ is a positive Hilbert--Schmidt kernel, then the induced operator $\mathbf K$ on $L^2(\Omega^d)$ is compact. 
Under the usual irreducibility or positivity-improving conditions, classical results imply that $\rho(\mathbf K)$ is a simple positive eigenvalue and that the iterates admit a decomposition of the form
\[
K^{(n)}(x,y)
=
\rho(\mathbf K)^n u(x)v(y)+R_n(x,y),
\]
with exponential decay of $R_n$ in operator norm.  Assumption~\ref{ass:dominant_spectral_separation} follows when the kernel has sufficient additional regularity to upgrade this to the pointwise bound in \eqref{eq:K_n_riesz_kernel_form}.  Thus Hilbert--Schmidt membership alone is not asserted to imply the pointwise assumption.
\end{remark}

\begin{remark}[Doeblin-type condition]
\label{rem:doeblin}
\rm
If, in addition to the boundedness assumptions above, the operator satisfies a
Doeblin-type minorization: there exist measurable sets
\(A,B\subset\Omega^d\) with positive measure and \(\delta>0\) such that
\[
K(x,y)\ge \delta\,\mathbf 1_A(x)\mathbf 1_B(y),
\]
and the complementary part is controlled in a standard Lasota--Yorke or
compactness estimate, then \(\mathbf K\) is quasi-compact on \(L^1(\Omega^d)\).
Standard quasi-compactness theory yields a spectral gap in operator norm.  If
the kernel has sufficient regularity to pass this estimate to pointwise
representatives, the iterates have a decomposition of the form
\[
K^{(n)}(x,y)
=
\rho(\mathbf K)^n u(x)v(y)+R_n(x,y),
\]
with the pointwise exponential decay required in
\eqref{eq:K_n_riesz_kernel_form}. Under these additional hypotheses,
Assumption~\ref{ass:dominant_spectral_separation} holds.
\end{remark}

\begin{remark}[Rank-one perturbation]
\label{rem:rank_one}
\rm
If the induced operator admits a compatible spectral decomposition
\[
\mathbf K=\lambda_0\mathbf U+\mathbf R,
\]
where \(\mathbf U\) is the rank-one spectral projection with kernel
\(u(x)v(y)\), \(\mathbf U\mathbf R=\mathbf R\mathbf U=0\), and
\(\rho(\mathbf R)<\lambda_0\), then the operator iterates have the corresponding
spectral decomposition.  If, in addition, the kernels of \(\mathbf R^n\) obey
the required pointwise exponential bound, the iterated kernels satisfy
\[
K^{(n)}(x,y)
=
\lambda_0^n u(x)v(y)+R_n(x,y),
\]
with exponential decay governed by any rate strictly between
$\rho(\mathbf R)$ and $\lambda_0$. Hence
Assumption~\ref{ass:dominant_spectral_separation} holds under these compatibility
and pointwise-regularity conditions.
\end{remark}

\begin{example}[Convolution-type kernel]
\label{ex:convolution}
Let
\[
K(x,y)=\phi(x-y)g(y),
\]
where $\phi\ge0$ is continuous with $\phi(0)>0$ and $g(y)>0$ is bounded. 
Then $K$ provides a local Doeblin-type minorization near the diagonal.  This is
one ingredient of Remark~\ref{rem:doeblin}; Assumption
\ref{ass:dominant_spectral_separation} follows when the complementary
quasi-compactness and pointwise-regularity requirements stated there are also
verified.
\end{example}

\begin{lemma}[Reference-point generating identity]
\label{lem:reference_generating_identity}
For \(|t|\) sufficiently small one has
\begin{equation}\label{eq:reference_generating_identity}
\Gamma(t;x,y)
:=
\sum_{n\ge1}\Gamma_n(x,y)t^n
=
\frac{\mathcal K(t;x,y)}
{1+\mathcal K(t;x_0,y_0)},
\qquad
\mathcal K(t;x,y):=\sum_{n\ge1}K^{(n)}(x,y)t^n .
\end{equation}
\end{lemma}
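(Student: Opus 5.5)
The plan is to prove the identity in the multiplied-out form
\[
\Gamma(t;x,y)\bigl(1+\mathcal K(t;x_0,y_0)\bigr)=\mathcal K(t;x,y)
\]
at the level of formal power series, and then upgrade it to a genuine identity for small \(|t|\) by showing that both generating functions have positive radius of convergence.

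\textbf{Formal identity.} I will use the recursion \eqref{eq:Gamma_recursion_main} rather than the Bell expansion. The first step is to prove by induction the convolution identity
\[
\Gamma_n+\sum_{k=1}^{n-1}\Gamma_{n-k}\,b_k=K^{(n)},\qquad n\ge1,
\]
where \(b_k=K^{(k)}(x_0,y_0)\).

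For \(n=1\) the sum is empty and \(\Gamma_1=K=K^{(1)}\). For the inductive step, apply the lift \(\mathbf K\) to the identity at level \(n\). Since \(\mathbf K\) is linear and the \(b_k\) are scalars,
\[
K^{(n+1)}=\mathbf K\Gamma_n+\sum_{k=1}^{n-1}b_k\,\mathbf K\Gamma_{n-k}.
\]
By the recursion, \(\mathbf K\Gamma_m=\Gamma_{m+1}+K\,\Gamma_m(x_0,y_0)\). Substituting this gives
\[
K^{(n+1)}=\Gamma_{n+1}+\sum_{k=1}^{n-1}b_k\Gamma_{n+1-k}
+K\Bigl[\Gamma_n(x_0,y_0)+\sum_{k=1}^{n-1}b_k\Gamma_{n-k}(x_0,y_0)\Bigr].
\]
The induction hypothesis evaluated at \((x_0,y_0)\) says the bracket equals \(K^{(n)}(x_0,y_0)=b_n\). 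The last term is therefore \(b_n\Gamma_1\), which supplies exactly the missing \(k=n\) summand. This closes the induction.

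Reading the convolution identity coefficientwise in \(t\) gives the multiplied-out form above. Alternatively, one can obtain it from Lemma~\ref{lem:Gamma_Bell_nonHS}: the inner sum \(\sum_\ell(-1)^\ell\widehat B_{m,\ell}(b)\) is the \(t^m\)-coefficient of \(\sum_\ell(-\beta(t))^\ell=(1+\beta(t))^{-1}\), where \(\beta(t)=\sum_j b_jt^j\). I will mention this as a consistency check.

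\textbf{Analytic part.} I expect this to be the main obstacle, because the bounds must be pointwise for each fixed \((x,y)\).

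First I bound \(\mathcal K\). Write \(K^{(n+1)}(x,y)=\int K^{(n)}(x,\xi)K(\xi,y)\,d\xi\), or use the other association. Combining the \(L^\infty\) bound \eqref{eq:uniform-Linfty-second} with the mixed norm \(\|K\|_{\mathcal X}\) gives
\[
|K^{(n)}(x,y)|\le M\,\|K\|_{\mathcal X}^{\,n-1}.
\]
The idea is to put \(K(x,\cdot)\) in \(L^\infty\) and iterate the \(L^1\) bound uniform in the second variable. Here I will be careful about the order of variables, so that each factor is estimated with the correct norm. With this bound, \(\mathcal K(t;x,y)\) converges for \(|t|<\|K\|_{\mathcal X}^{-1}\).

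Next, \(\beta(0)=0\), so \(1+\mathcal K(t;x_0,y_0)\ne0\) for \(|t|\) small. The quotient is then analytic near \(0\), and its Taylor coefficients must coincide with the \(\Gamma_n\) by the formal identity. Concretely, a bound \(|\Gamma_n|\le C r^{-n}\) follows from the convolution identity by Cauchy estimates, or by a crude induction \(|\Gamma_n|\le M(\|K\|_{\mathcal X}+M')^{n-1}\) using \(|\Gamma_m(x_0,y_0)|\) bounded by the same quantity.

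Hence \(\Gamma(t;x,y)\) converges absolutely for small \(|t|\) and equals \(\mathcal K/(1+\mathcal K(\cdot;x_0,y_0))\) there. Point evaluation at \((x_0,y_0)\) is meaningful because of the continuity on \(\overline U\), as noted for \(\mathcal X_\star\).
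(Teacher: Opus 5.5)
Your proof is correct, but the formal part takes a different route from the paper.

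\textbf{The paper's route.} The paper starts from the word expansion \eqref{eq:Gamma_comp_expansion}, the same one behind Lemma~\ref{lem:Gamma_Bell_nonHS}. A word with $\ell$ reference corrections, weighted by $t^n$ and summed over all $i_0,\dots,i_\ell$, factorizes as $\mathcal K(t;x,y)\bigl(-\mathcal K(t;x_0,y_0)\bigr)^\ell$. Summing the geometric series in $\ell$ then gives the quotient directly. Your ``consistency check'' through $(1+\beta(t))^{-1}$ is essentially this argument.

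\textbf{Your route.} You use only the recursion to derive the renewal (first-passage) convolution $K^{(n)}=\Gamma_n+\sum_{k=1}^{n-1}b_k\Gamma_{n-k}$. This is the coefficientwise form of $\Gamma(t;x,y)\bigl(1+\mathcal K(t;x_0,y_0)\bigr)=\mathcal K(t;x,y)$, and you then invert $1+\beta$ in $\mathbb C[[t]]$.

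\textbf{What each buys.} Your approach needs no word or Bell combinatorics and stays at the level of finite coefficient identities. It also makes the taboo/renewal meaning of the identity visible. The paper's approach is shorter once the word expansion is in hand, and it ties the generating identity directly to the Bell-polynomial coefficients.

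\textbf{Analytic part.} Here you do more than the paper, which only asserts that the formal identity ``hence'' holds analytically for small $|t|$. The three ingredients you give are exactly what is needed:
\begin{itemize}
\item the bound $|K^{(n)}|\le M\|K\|_{\mathcal X}^{n-1}$;
\item nonvanishing of $1+\beta$ near $0$;
\item uniqueness of the Taylor coefficients of the analytic quotient, via the triangular convolution system.
\end{itemize}
The third point alone shows that $\sum_n\Gamma_n(x,y)t^n$ converges to the quotient, so the separate crude induction on $\Gamma_n$ is unnecessary.

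\textbf{A caveat shared with the paper.} Continuity of $K$ on $\overline U$ does not make $K^{(n)}$, $n\ge2$, continuous there, because the defining integral involves $K(x_0,\xi)$ for all $\xi$. Also, the ess-sup hypotheses give your bound only almost everywhere. So at $(x_0,y_0)$ you are relying, as the paper does, on the iterated kernels being defined pointwise by their integral formulas, rather than on continuity.
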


\begin{proof}
The identity follows directly from the word expansion
\eqref{eq:Gamma_comp_expansion}.  A word containing \(\ell\) occurrences
of the reference-point correction contributes
\[
(-1)^\ell K^{(i_0+1)}(x,y)\prod_{r=1}^{\ell}b_{i_r+1}
\]
to \(\Gamma_n(x,y)\), where
\[
n=(i_0+1)+\sum_{r=1}^{\ell}(i_r+1).
\]
After multiplication by \(t^n\) and summation over all
\(i_0,\ldots,i_\ell\ge0\), this contribution becomes
\[
\mathcal K(t;x,y)\bigl(-\mathcal K(t;x_0,y_0)\bigr)^\ell .
\]
Summing over \(\ell\ge0\) gives
\[
\Gamma(t;x,y)
=
\mathcal K(t;x,y)
\sum_{\ell\ge0}\bigl(-\mathcal K(t;x_0,y_0)\bigr)^\ell
=
\frac{\mathcal K(t;x,y)}
{1+\mathcal K(t;x_0,y_0)}
\]
where the geometric series is convergent.  The identity also holds formally,
and hence analytically for sufficiently small \(|t|\).
\end{proof}

\begin{assumption}[Reference generating non-resonance]
\label{ass:reference_generating_nonresonance}
Put
\[
r_n:=R_n(x_0,y_0),
\qquad
E(t):=\sum_{n\ge1}r_nt^n .
\]
There exists a number \(r_\ast\) such that
\[
\lambda_0^{-1}<r_\ast<\theta^{-1}
\]
and
\begin{equation}\label{eq:reference_denominator_nonzero}
1+(\kappa-1)\lambda_0t+(1-\lambda_0t)E(t)\ne0
\qquad (|t|\le r_\ast).
\end{equation}
\end{assumption}

\begin{remark}
\rm
Assumption~\ref{ass:reference_generating_nonresonance} is a scalar
non-resonance condition for the chosen reference pair.  A direct sufficient
condition is
\[
|(\kappa-1)\lambda_0t+(1-\lambda_0t)E(t)|<1
\qquad (|t|\le r_\ast).
\]
Its
role is to prevent the reference denominator from producing singularities
before the circle on which the normalized \(\Gamma\)-series is evaluated.
Without such a condition, the Bell-polynomial coefficients may have
singularities generated by the scalar reference denominator, and
ordinary absolute convergence at \(t=\lambda_0^{-1}\) does not follow
from the Perron--remainder decomposition alone.  This condition does not
change membership in the ambient kernel class \(\mathbb K\), nor does it impose
compactness in an operator norm; it restricts the choice of reference pair for
the ordinary convergent-series representation.  Hilbert--Schmidt membership by
itself does not imply this scalar condition.
\end{remark}

\begin{remark}[Logical roles of the two assumptions]
\label{rem:logical_roles_assumptions}
\rm
Assumptions~\ref{ass:dominant_spectral_separation} and
\ref{ass:reference_generating_nonresonance} are kept separate because they are
logically distinct.  The first is a condition on the induced operator and its
iterated kernel; the second is a scalar condition attached to the chosen
reference pair.  The recursion, the Bell-polynomial identity, and the formal
generating identity preceding this remark require neither assumption.  Both
assumptions are invoked only when ordinary absolute convergence at
\(t=\lambda_0^{-1}\), and conclusions based on that convergence, are asserted.
\end{remark}

\begin{remark}[Reference condition under an exact Perron decomposition]
\label{rem:reference_condition_automatic}
\rm
If the Perron decomposition is exact at the kernel level, that is,
\[
K^{(n)}(x,y)=\kappa\lambda_0^n U(x,y)
\qquad (n\ge1),
\]
then \(R_n\equiv0\), hence \(E(t)\equiv0\).  The reference denominator is
\(1+(\kappa-1)\lambda_0t\).  The condition holds precisely when this linear
factor has no zero on the chosen disk.
Small perturbations also satisfy the condition whenever the full denominator
in \eqref{eq:reference_denominator_nonzero} remains nonzero there.
\end{remark}

\begin{corollary}[Absolute convergence of the normalized $\Gamma$-series at $\lambda_0$]
\label{cor:gamma_normalized_convergence}
Assume Assumptions~\ref{ass:dominant_spectral_separation} and
\ref{ass:reference_generating_nonresonance}.  Then, for every fixed
\((x,y)\) and every
\(r\) with
\[
\lambda_0^{-1}<r<r_\ast,
\]
there exists a function \(C_r(x,y)>0\) such that
\begin{equation}\label{eq:normalized_gamma_bound}
\left|
\lambda_0^{-n}\Gamma_n(x,y)
\right|
\le
C_r(x,y)(\lambda_0 r)^{-n},
\qquad
n\ge1.
\end{equation}
In particular,
\begin{equation}\label{eq:normalized_gamma_series}
\sum_{n=1}^{\infty}\lambda_0^{-n}\Gamma_n(x,y)
\end{equation}
converges absolutely.
\end{corollary}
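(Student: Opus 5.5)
The plan is to read the bound \eqref{eq:normalized_gamma_bound} off a Cauchy estimate for the generating function $\Gamma(t;x,y)$ of Lemma~\ref{lem:reference_generating_identity}. The approach is to show that this function extends analytically to an open disk of radius strictly larger than $r$. Note that \eqref{eq:normalized_gamma_bound} is equivalent to $|\Gamma_n(x,y)|\le C_r(x,y)r^{-n}$. Absolute convergence of \eqref{eq:normalized_gamma_series} then follows from $\lambda_0 r>1$.

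\emph{Step 1: insert the Perron decomposition into the generating functions.} Fix $(x,y)$ and use Assumption~\ref{ass:dominant_spectral_separation}(ii)--(iii), so that $K^{(n)}=\kappa\lambda_0^nU+R_n$. Then for $|t|<\lambda_0^{-1}$,
\[
\mathcal K(t;x,y)=\kappa U(x,y)\frac{\lambda_0t}{1-\lambda_0t}+E_{x,y}(t),
\qquad E_{x,y}(t):=\sum_{n\ge1}R_n(x,y)t^n .
\]
Since $|R_n(x,y)|\le C_1(x,y)\theta^n$, the series $E_{x,y}$ is holomorphic on $|t|<\theta^{-1}$. At the reference pair we have $U(x_0,y_0)=1$ and $E_{x_0,y_0}=E$.

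\emph{Step 2: rewrite the quotient.} Multiply numerator and denominator of \eqref{eq:reference_generating_identity} by $1-\lambda_0t$. This gives
\[
\Gamma(t;x,y)=\frac{\kappa U(x,y)\lambda_0t+(1-\lambda_0t)E_{x,y}(t)}{1+(\kappa-1)\lambda_0t+(1-\lambda_0t)E(t)} .
\]
The apparent pole at $t=\lambda_0^{-1}$ cancels, which is exactly why the denominator of Assumption~\ref{ass:reference_generating_nonresonance} appears. The numerator and denominator are both holomorphic on $|t|<\theta^{-1}$. By \eqref{eq:reference_denominator_nonzero} the denominator does not vanish on the closed disk $|t|\le r_\ast$, where $r_\ast<\theta^{-1}$. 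The right-hand side is therefore holomorphic on the open disk $|t|<r_\ast$, and the identity holds there.

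\emph{Step 3: identify the Taylor coefficients and apply Cauchy.} Near $t=0$ the function $\Gamma(t;x,y)$ is given by the power series $\sum\Gamma_n t^n$, and this series coincides with the holomorphic function of Step~2. By uniqueness of Taylor expansions, the radius of convergence of $\sum_n\Gamma_n(x,y)t^n$ is at least $r_\ast$, and its coefficients are the Taylor coefficients of that function. For $\lambda_0^{-1}<r<r_\ast$, the Cauchy estimate on the circle $|t|=r$ gives
\[
|\Gamma_n(x,y)|\le C_r(x,y)r^{-n},\qquad C_r(x,y):=\max_{|t|=r}|\Gamma(t;x,y)|<\infty .
\]
Dividing by $\lambda_0^n$ yields \eqref{eq:normalized_gamma_bound}. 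The geometric majorant $(\lambda_0r)^{-n}$ is summable because $\lambda_0 r>1$, so \eqref{eq:normalized_gamma_series} converges absolutely.

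The main point requiring care is Step~3, namely passing from the small-$|t|$ identity of Lemma~\ref{lem:reference_generating_identity} to analyticity on the larger disk. The issue is that the original expression $\mathcal K/(1+\mathcal K(\cdot;x_0,y_0))$ is singular at $\lambda_0^{-1}$, and only the rewritten form of Step~2 continues past it. I would handle this with the identity theorem. It also has to be checked that the cancellation of $(1-\lambda_0t)$ is legitimate at the level of analytic functions, and not merely formal. A minor further check is that $C_1(x,y)$ being finite pointwise suffices, since all estimates are made at a fixed $(x,y)$.
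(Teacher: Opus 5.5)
Your proposal is correct and follows the paper's proof essentially step for step. You insert the Perron decomposition into $\mathcal K$, clear the factor $1-\lambda_0t$ to get the quotient whose denominator is controlled by Assumption~\ref{ass:reference_generating_nonresonance}, continue analytically past $t=\lambda_0^{-1}$, and apply Cauchy's estimate on $|t|=r$ with $C_r(x,y)=\max_{|t|=r}|\Gamma(t;x,y)|$, all as the paper does. Your appeal to uniqueness of Taylor coefficients, to show that $\sum_n\Gamma_n(x,y)t^n$ has radius of convergence at least $r_\ast$, only makes explicit what the paper compresses into the phrase ``continues \ldots\ to the analytic representation''.
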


\begin{proof}
Fix \(x,y\) and \(r\) with \(\lambda_0^{-1}<r<r_\ast\).  By
Assumption~\ref{ass:dominant_spectral_separation}, the series
\[
R(t;x,y):=\sum_{n\ge1}R_n(x,y)t^n,
\qquad
E(t):=\sum_{n\ge1}R_n(x_0,y_0)t^n
\]
are analytic for \(|t|<\theta^{-1}\).  The identity in
Lemma~\ref{lem:reference_generating_identity}, initially obtained as a formal
power-series identity and hence as an analytic identity near the origin,
therefore continues under Assumptions
\ref{ass:dominant_spectral_separation} and
\ref{ass:reference_generating_nonresonance} to the analytic representation
\[
\Gamma(t;x,y)
=
\frac{\kappa\lambda_0t\,U(x,y)+(1-\lambda_0t)R(t;x,y)}
{1+(\kappa-1)\lambda_0t+(1-\lambda_0t)E(t)}
\]
on a neighbourhood of the closed disk \(|t|\le r\).  At
\(t=\lambda_0^{-1}\), the denominator equals \(\kappa>0\) and the numerator
equals \(\kappa U(x,y)\); hence the singularity is removable.  Thus
\[
M_r(x,y):=\sup_{|t|=r}|\Gamma(t;x,y)|<\infty .
\]
Cauchy's coefficient estimate gives
\[
|\Gamma_n(x,y)|\le M_r(x,y)r^{-n}.
\]
Therefore
\[
\left|\lambda_0^{-n}\Gamma_n(x,y)\right|
\le M_r(x,y)(\lambda_0 r)^{-n}.
\]
This is \eqref{eq:normalized_gamma_bound} with
\(C_r(x,y)=M_r(x,y)\).  Since \(\lambda_0 r>1\), the resulting
majorant is summable.
The conclusion is pointwise in \((x,y)\); no uniform convergence over
\(\Omega^d\times\Omega^d\) is asserted without an additional uniform bound on
\(M_r(x,y)\).
\end{proof}

Under Assumptions~\ref{ass:dominant_spectral_separation} and
\ref{ass:reference_generating_nonresonance},
Corollary~\ref{cor:gamma_normalized_convergence} implies that the normalized series
\eqref{eq:w_Gamma_series_formal} converges absolutely at $\lambda=\lambda_0$. Therefore the formal recursion \eqref{eq:Gamma_recursion_main} yields a well-defined eigenfunction representation at the dominant spectral value.

\begin{remark}[Role of the $\Gamma$-series beyond the Perron case]
\label{rem:peripheral_spectrum}
\rm
The series representation
\[
w(x,y)
=
\sum_{n\ge1}\frac{1}{\lambda_0^n}\Gamma_n(x,y)
\]
should be interpreted with care when the peripheral spectrum of $\mathbf K$ contains multiple eigenvalues of modulus $\lambda_0$.

In the Perron--Frobenius case, where $\lambda_0$ is a simple dominant eigenvalue and no other spectral values lie on the circle $|\lambda|=\lambda_0$, one has
\[
\lambda_0^{-n}K^{(n)}(x,y)\to \Pi(x,y)=\kappa U(x,y),
\]
and the $\Gamma$-series converges in the usual sense. However, in the presence of peripheral eigenvalues $\lambda_j=\lambda_0e^{i\theta_j}$, the iterates may satisfy
\[
\lambda_0^{-n}K^{(n)}(x,y)=\sum_j e^{in\theta_j}U_j(x,y)+o(1),
\]
so that the general term does not decay. As a consequence, the series
\[
\sum_{n\ge1}\lambda_0^{-n}\Gamma_n(x,y)
\]
typically fails to converge termwise.

Nevertheless, the resolvent quotient
\[
\frac{(\lambda I-\mathbf K)^{-1}K(x,y)}
{1+(\lambda I-\mathbf K)^{-1}K(x_0,y_0)}
\]
may still admit a finite limit as $\lambda\downarrow\lambda_0$ along the real axis. In that situation, the $\Gamma$-series should be understood in an Abel-type sense rather than as an ordinary convergent series at $\lambda=\lambda_0$.
\end{remark}

\paragraph{Adjoint counterpart.}

In applications, the adjoint eigenfunction plays the role of a reproductive
value.  Its reference-point expansion is as follows.

\begin{remark}[Adjoint expansion]
\label{rem:adjoint_expansion}
\rm
Let $\mathbf K^*$ act on functions $v(y,\cdot)\in L^\infty(\Omega^d)$ by
\begin{equation}\label{eq:def-Kstar}
(\mathbf K^*v)(y,x)
:=
\int_{\Omega^d}v(y,\xi)K(\xi,x)\,d\xi.
\end{equation}
Consider the adjoint eigen-equation
\begin{equation}\label{eq:adjoint_eigen_eq}
v(y,x;\lambda)
=
\frac{1}{\lambda}(\mathbf K^*v)(y,x;\lambda)
=
\frac{1}{\lambda}\int_{\Omega^d}v(y,\xi;\lambda)K(\xi,x)\,d\xi.
\end{equation}
Introduce the transpose kernel
\[
K^\top(x,y):=K(y,x),
\]
and define the adjoint reference-point operator $\mathbf P^*$ by
\begin{equation}\label{eq:def-P_star}
(\mathbf P^*G)(y,x):=G(y_0,x_0)K^\top(x,y).
\end{equation}
Then $\mathbf P^*$ is rank one. Define the adjoint taboo-type iterates by
\begin{equation}\label{eq:Gamma_star_recursion}
\Gamma_1^*:=K^\top,
\qquad
\Gamma_{n+1}^*:=(\mathbf K^*-\mathbf P^*)\Gamma_n^*,
\qquad n\ge1,
\end{equation}
equivalently,
\[
\Gamma_n^*=(\mathbf K^*-\mathbf P^*)^{n-1}K^\top.
\]
Keeping the reference variables explicit, the Bell-polynomial coefficient
formula gives
\begin{equation}\label{eq:Gamma_star_transpose_relation}
\Gamma_n^*(y,x;y_0,x_0)=\Gamma_n(y,x;y_0,x_0),
\end{equation}
where the right-hand side is the reference-point coefficient with the variables
and the reference pair interchanged.  Thus the adjoint coefficients use the
reference scalars \(K^{(n)}(y_0,x_0)\), not in general
\(K^{(n)}(x_0,y_0)\).
In kernel form,
\[
\Gamma_{n+1}^*(y,x)
=
\int_{\Omega^d}\Gamma_n^*(y,\xi)K(\xi,x)\,d\xi
-
\Gamma_n^*(y_0,x_0)K^\top(x,y).
\]
Consequently, for \(|\lambda|\) sufficiently large, the adjoint solution
admits the expansion
\begin{equation}\label{eq:adjoint_series_final}
v(y,x;\lambda)
=
\frac{c_1}{\lambda}
\left[\left(I-\frac{1}{\lambda}(\mathbf K^*-\mathbf P^*)\right)^{-1}K^\top\right](x,y)
=
c_1\sum_{n=1}^{\infty}\frac{1}{\lambda^n}\Gamma_n^*(y,x),
\end{equation}
with $c_1=c_1(x_0,y_0)\neq0$.  To evaluate this series at
\(\lambda=\lambda_0\), the interchanged reference pair must satisfy the scalar
condition corresponding to Assumption
\ref{ass:reference_generating_nonresonance}.  More precisely, put
\[
\kappa^*:=\Pi(y_0,x_0)>0,
\qquad
E^*(t):=\sum_{n\ge1}R_n(y_0,x_0)t^n,
\]
and assume that, for some
\(\lambda_0^{-1}<r_*^*<\theta^{-1}\),
\[
1+(\kappa^*-1)\lambda_0t
 +(1-\lambda_0t)E^*(t)\ne0
\qquad (|t|\le r_*^*).
\]
Then \eqref{eq:Gamma_star_transpose_relation} and
Corollary~\ref{cor:gamma_normalized_convergence}, applied to the interchanged
reference pair, imply pointwise absolute convergence of the adjoint series at
\(\lambda_0\).  In the diagonal-reference applications below,
\(x_0=y_0\), so this is exactly the original reference non-resonance condition
and no additional scalar assumption is required.
\end{remark}

\paragraph{Spectral analysis.}

\begin{proposition}\label{prop:dominant_eigenvalue_riesz}
Assume Assumptions~\ref{ass:dominant_spectral_separation} and
\ref{ass:reference_generating_nonresonance}. Then there exists a nonnegative
nontrivial eigenfunction $w_\rho$ such that
\[
\mathbf K w_\rho=\rho(\mathbf K)\,w_\rho.
\]
Moreover,
\begin{equation}\label{eq:w_rho_gamma_representation}
w_\rho(x,y)
=
c_0\sum_{n=1}^{\infty}\frac{\Gamma_n(x,y)}{\rho(\mathbf K)^n},
\end{equation}
where $c_0=w_\rho(x_0,y_0)$, and the normalization identity
\begin{equation}\label{eq:reference_identity_rho}
1=
\sum_{n=1}^{\infty}\frac{\Gamma_n(x_0,y_0)}{\rho(\mathbf K)^n}
\end{equation}
holds whenever $c_0\neq0$.
\end{proposition}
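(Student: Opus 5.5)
The plan is to define \(w_\rho\) directly as the series in \eqref{eq:w_rho_gamma_representation}, using the generating function \(\Gamma(t;x,y)\) of Lemma~\ref{lem:reference_generating_identity} at \(t=\lambda_0^{-1}\). By Corollary~\ref{cor:gamma_normalized_convergence}, the series \(S(x,y):=\sum_{n\ge1}\lambda_0^{-n}\Gamma_n(x,y)\) converges absolutely for every fixed \((x,y)\). Its value equals \(\Gamma(\lambda_0^{-1};x,y)\), understood via the analytic continuation obtained in that proof. That continuation gives the closed form
\[
\Gamma(t;x,y)=\frac{\kappa\lambda_0t\,U(x,y)+(1-\lambda_0t)R(t;x,y)}{1+(\kappa-1)\lambda_0t+(1-\lambda_0t)E(t)},
\]
with a nonvanishing denominator on \(|t|\le r\) for some \(r>\lambda_0^{-1}\). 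Since the power series converges absolutely on the closed disk of radius \(r\), Abel's theorem is not even needed: evaluating at \(t=\lambda_0^{-1}\) gives numerator \(\kappa U(x,y)\) and denominator \(\kappa\). Hence \(S(x,y)=U(x,y)=u(x)v(y)\).

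In particular \(S(x_0,y_0)=U(x_0,y_0)=1\), which is the normalization identity \eqref{eq:reference_identity_rho}.

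Next, set \(w_\rho:=c_0S\) for any \(c_0\neq0\); taking \(c_0>0\) makes \(w_\rho\) nonnegative. Then \(w_\rho(x_0,y_0)=c_0\), consistent with the statement, and \(w_\rho=c_0\,u(x)v(y)\). The eigen-relation can be obtained in two ways, and I would present both, the second as a check independent of the closed form.

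\begin{enumerate}
\item \emph{Direct route.} Since \(\Pi=\kappa U\) is the kernel of the rank-one Riesz projection for the simple eigenvalue \(\lambda_0\) (Assumption~\ref{ass:dominant_spectral_separation}), \(u\) is the right eigenfunction. Hence \(\mathbf K(u\,v(y))=\lambda_0u\,v(y)\) for each fixed \(y\).

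\item \emph{Recursion route.} Multiply \(\Gamma_{n+1}=\mathbf K\Gamma_n-K\,\Gamma_n(x_0,y_0)\) by \(\lambda_0^{-n}\) and sum over \(n\). This gives
\[
\lambda_0(S-\lambda_0^{-1}K)=\mathbf K S-K\,S(x_0,y_0).
\]
Using \(S(x_0,y_0)=1\), the \(K\) terms cancel and \(\mathbf K S=\lambda_0 S\).
\end{enumerate}

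The remaining verifications are routine. Positivity of \(u,v\) follows from the strict positivity of \(K\) and the relation \(u=\lambda_0^{-1}\mathbf K u\) with \(u\not\equiv0\). Nontriviality follows because \(S(x_0,y_0)=1\).

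The main obstacle is the interchange \(\mathbf K\sum_n=\sum_n\mathbf K\) in the recursion route. Corollary~\ref{cor:gamma_normalized_convergence} gives only pointwise bounds \(C_r(x,y)(\lambda_0r)^{-n}\), and nothing guarantees that \(C_r(\cdot,y)\) is integrable against \(K(x,\cdot)\). I would handle this by preferring the direct route, which uses the identification \(S=U\) and needs only the spectral meaning of \(\Pi\). If the recursion route is wanted anyway, I would justify the swap using the decomposition \(\Gamma_n=\sum_k K^{(k)}(\cdots)\) from Lemma~\ref{lem:Gamma_Bell_nonHS}. Together with \(K^{(k)}=\lambda_0^k\Pi+R_k\), this reduces everything to \(\mathbf K\Pi=\lambda_0\Pi\) and \(\mathbf K R_k=R_{k+1}\). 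Both hold exactly at kernel level, and the scalar coefficients converge absolutely by the non-resonance assumption.
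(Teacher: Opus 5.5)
Your main line is essentially the paper's proof. You get absolute convergence from Corollary~\ref{cor:gamma_normalized_convergence}, evaluate the closed-form generating function at $t=\lambda_0^{-1}$ to get $S=U$, then use $\mathbf K U=\lambda_0U$ from the Riesz projection and $S(x_0,y_0)=U(x_0,y_0)=1$. Defining $w_\rho:=c_0S$ directly, rather than starting from an existing eigenfunction and invoking simplicity as the paper does, is an inessential difference.

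Two small repairs are needed. First, your positivity argument needs $u\ge0$ before strict positivity of $K$ can be used. This follows from $\Pi=\lim_{n\to\infty}\lambda_0^{-n}K^{(n)}\ge0$ pointwise under Assumption~\ref{ass:dominant_spectral_separation}(ii), which gives $u=\Pi(\cdot,y_0)/\kappa\ge0$. Second, your proposed fix for the recursion route does not remove the obstacle you flagged. Passing $\mathbf K$ through the infinite series of $R_k$-terms still needs $\int K(x,\xi)C_1(\xi,y)\,d\xi<\infty$, which is not assumed, so that route is best dropped.
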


\begin{proof}
Since $\rho(\mathbf K)=\lambda_0$ is assumed to be a simple isolated eigenvalue, there exists a nonnegative nontrivial eigenfunction
\[
w_\rho\not\equiv0,
\qquad
\mathbf K w_\rho=\rho(\mathbf K)\,w_\rho.
\]
By Corollary~\ref{cor:gamma_normalized_convergence}, the series
\[
\sum_{n=1}^{\infty}\frac{\Gamma_n(x,y)}{\rho(\mathbf K)^n}
\]
converges absolutely because \(\rho(\mathbf K)=\lambda_0\).  Put
\[
S(x,y):=\sum_{n=1}^{\infty}\frac{\Gamma_n(x,y)}{\rho(\mathbf K)^n}.
\]
By Corollary~\ref{cor:gamma_normalized_convergence}, the analytic
representation is valid near \(t=\lambda_0^{-1}\) and hence may be evaluated
there:
\[
S(x,y)
=\Gamma(\lambda_0^{-1};x,y)
=U(x,y).
\]
Since \(U=\Pi/\kappa\) is a scalar normalization of the kernel of the rank-one
Riesz projection associated with \(\lambda_0\),
\[
\mathbf K U=\lambda_0U.
\]
Consequently \(S\) is an eigenfunction for \(\rho(\mathbf K)=\lambda_0\),
and \(U(x_0,y_0)=1\) gives \(S(x_0,y_0)=1\).  Since the eigenvalue is simple,
\(w_\rho=c_0S\) for a nonzero scalar \(c_0\), and
\eqref{eq:w_rho_gamma_representation} follows.

Finally, evaluating \(S(x_0,y_0)=1\) in its defining series gives
\eqref{eq:reference_identity_rho}.
\end{proof}

\subsection{Non-Hilbert--Schmidt solution and its property}

\begin{theorem}[Main theorem: non-Hilbert--Schmidt extension]\label{thm:nonHS_extension}
Assume that $K\in\mathbb K$, that the induced operator $\mathbf K$ satisfies
Assumptions~\ref{ass:dominant_spectral_separation} and
\ref{ass:reference_generating_nonresonance}.  Then the series
\[
w_\rho(x,y)
=
c_0\sum_{n=1}^{\infty}
\frac{\Gamma_n(x,y)}{\rho(\mathbf K)^n}
\]
converges pointwise absolutely and represents a nontrivial dominant
eigenfunction.  Its coefficients satisfy the Bell-polynomial formula
\eqref{eq:HS_gamma}.  Thus the coefficient construction identified in
Proposition~\ref{prop:HS_eigenfunction} extends conditionally to kernels in
\(\mathbb K\), without use of a Fredholm determinant.
\end{theorem}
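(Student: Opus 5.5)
The theorem is essentially an assembly of results already established, so I would prove it by combining them in order rather than by any new estimate. First, I note that the recursion \eqref{eq:Gamma_recursion_operator}, $\Gamma_n=\mathbf A^{n-1}K$, is purely algebraic. It is well defined for every $K\in\mathbb K$: each $\Gamma_n$ is a finite linear combination of iterated kernels $K^{(k)}\in\mathcal X$ with scalar coefficients built from $b_j=K^{(j)}(x_0,y_0)$. These scalars are meaningful because of local continuity on $\overline U$; continuity of $K^{(j)}$ near the reference pair follows from continuity of $K$ there together with the uniform bound \eqref{eq:uniform-Linfty-second} and dominated convergence. Lemma~\ref{lem:Gamma_Bell_nonHS} then gives the Bell-polynomial form \eqref{eq:Gamma_Bell_nonHS}. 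I would check by reindexing that this coincides with \eqref{eq:HS_gamma}: the $\ell=0$ term contributes only at $k=n$, which gives $K^{(n)}$. For $\ell\ge1$, one sets $k\mapsto n-k$ and uses $\widehat B_{m,\ell}=0$ for $m<\ell$, which yields the stated double sum. No determinant is used at any point.

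Second, for convergence, I invoke Corollary~\ref{cor:gamma_normalized_convergence} under Assumptions~\ref{ass:dominant_spectral_separation} and \ref{ass:reference_generating_nonresonance}. It gives, for each fixed $(x,y)$ and any $r\in(\lambda_0^{-1},r_\ast)$, the geometric majorant $|\lambda_0^{-n}\Gamma_n(x,y)|\le C_r(x,y)(\lambda_0 r)^{-n}$. Since $\lambda_0=\rho(\mathbf K)$ by Assumption~\ref{ass:dominant_spectral_separation}(i), the series $\sum_n\Gamma_n(x,y)/\rho(\mathbf K)^n$ converges pointwise absolutely.

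Third, to identify the sum, I apply Proposition~\ref{prop:dominant_eigenvalue_riesz}. The sum $S$ equals $\Gamma(\lambda_0^{-1};x,y)=U(x,y)=u(x)v(y)$, using the analytic continuation of the generating identity of Lemma~\ref{lem:reference_generating_identity} and the removable singularity at $t=\lambda_0^{-1}$. Since $U=\Pi/\kappa$ is a normalized Riesz projection kernel, it satisfies $\mathbf K U=\lambda_0 U$. Moreover $U(x_0,y_0)=1$, so $S\not\equiv0$, and $w_\rho=c_0S$ with $c_0\ne0$ is a nontrivial dominant eigenfunction. This holds for the lifted operator, and in the first variable $x\mapsto u(x)$ for fixed $y$ with $v(y)>0$.

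The main obstacle is the step $\mathbf K U=\lambda_0U$ at the pointwise kernel level, for a non-Hilbert--Schmidt kernel. I would derive it from \eqref{eq:K_n_riesz_kernel_form} directly. Write $K^{(n+1)}=\mathbf K K^{(n)}$, divide by $\lambda_0^{n+1}$, and let $n\to\infty$. Pointwise, $\lambda_0^{-n}K^{(n)}\to\Pi$ with remainder bounded by $C_1(\theta/\lambda_0)^n$. To pass the limit under $\int K(x,\xi)\,\cdot\,d\xi$, I would use dominated convergence. This requires $\int K(x,\xi)C_1(\xi,y)\,d\xi<\infty$, and if that is not available, I would instead use the operator-level identity $\mathbf K\Pi=\lambda_0\Pi$ for the Riesz projection on $L^1$ and identify kernels a.e. Either route gives the eigen-relation, and the remainder of the theorem is the combination above.
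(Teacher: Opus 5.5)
Your proposal follows the paper's own proof. The paper also just chains Lemma~\ref{lem:Gamma_Bell_nonHS} for the coefficients, Corollary~\ref{cor:gamma_normalized_convergence} for pointwise absolute convergence, and Proposition~\ref{prop:dominant_eigenvalue_riesz} for identifying the sum with $U=\Pi/\kappa$ together with $\mathbf KU=\lambda_0U$. Your explicit reindexing between \eqref{eq:Gamma_Bell_nonHS} and \eqref{eq:HS_gamma}, and your two routes to $\mathbf KU=\lambda_0U$, supply detail that the paper asserts without comment.

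One side remark is false and should be dropped. Continuity of $K^{(j)}$ near $(x_0,y_0)$ does not follow from continuity of $K$ on $\overline U$ plus the $L^\infty$ bound. The integral $\int K(x,\xi)K^{(j-1)}(\xi,y)\,d\xi$ involves $K(x,\xi)$ for $\xi$ far from $y_0$, where no continuity is assumed. For a counterexample, take $\Omega=(0,1)$, $(x_0,y_0)=(\tfrac14,\tfrac14)$ and $K(x,\xi)=1+\mathbf 1_{\{x>1/4\}}\mathbf 1_{\{\xi>3/4\}}$. This kernel lies in $\mathbb K$, since it is identically $1$ near the reference pair. Yet $K^{(2)}(x,y)=1+\tfrac14\mathbf 1_{\{x>1/4\}}$ for $y$ near $\tfrac14$, which is discontinuous at the reference pair.

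The claim is not needed. The paper fixes pointwise representatives, so $b_j$ is simply the value of the defining integral at $(x_0,y_0)$. That value is finite because Assumption~\ref{ass:dominant_spectral_separation}(ii), evaluated at $(x_0,y_0)$, gives $b_j=\lambda_0^j\kappa+R_j(x_0,y_0)$.
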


\begin{proof}
Lemma~\ref{lem:Gamma_Bell_nonHS} gives the coefficient formula for every
$K\in\mathbb K$.  By
Corollary~\ref{cor:gamma_normalized_convergence}, the series
\[
\sum_{n=1}^{\infty}\frac{\Gamma_n(x,y)}{\rho(\mathbf K)^n}
\]
converges absolutely at the dominant spectral value. Proposition~\ref{prop:dominant_eigenvalue_riesz} therefore yields the representation
\[
w_\rho(x,y)
=
c_0\sum_{n=1}^{\infty}\frac{\Gamma_n(x,y)}{\rho(\mathbf K)^n}.
\]
The Bell-polynomial expression follows from
Lemma~\ref{lem:Gamma_Bell_nonHS}.
\end{proof}

\begin{remark}[Scope of the extension]
\label{rem:scope_extension}
\rm
Theorem~\ref{thm:nonHS_extension} should be read as a conditional extension
result. The kernel class $\mathbb K$ ensures that the reference-point
construction is meaningful at the kernel level, while
Assumptions~\ref{ass:dominant_spectral_separation} and
\ref{ass:reference_generating_nonresonance} supply, respectively, the spectral
separation and scalar denominator control needed for convergence at the Perron
root. In particular, the theorem does not assert that every kernel in
$\mathbb K$ is quasi-compact or admits such a decomposition.  The extension is
therefore an extension of the representation to some non-Hilbert--Schmidt
kernels, not a set-theoretic inclusion of every kernel covered by classical
Fredholm theory.  Conversely, a Hilbert--Schmidt kernel need not satisfy the
pointwise spectral bound or the reference condition without further
hypotheses.
\end{remark}
\section{Application to simple integral projection models at discrete time}
\label{sec:discrete-ipm}

In this section, we investigate the eigenvalue problem for the simplest discrete-time IPM, using the eigenfunction representation established in Section~\ref{sec:fredholm}. This representation also yields a Markovian viewpoint and facilitates biological interpretation.

\paragraph{Role of this section.}
\label{para:role_section3}

The purpose of this section is to specialize the reference-point construction of Section~\ref{sec:fredholm} to the discrete-time IPM setting and to interpret the resulting eigensystem in genealogical terms. Section~\ref{sec:fredholm} provides a constructive representation of the eigenfunction through the $\Gamma_n$-series under explicit spectral separation and reference non-resonance assumptions. In the present section, we apply that representation to the simplest discrete-time IPM, derive the asymptotic profile of the cohort, and reinterpret the resulting quantities by analogy with taboo probabilities in Markov chains.

The simplest kernel to keep in mind is
\[
K(x,y)=S(x,y)+F(x,y),
\]
where \(S(x,y)\) describes survival and transition from state \(y\) to state
\(x\), and \(F(x,y)\) describes the production of offspring in state \(x\) by
an individual in state \(y\).  The results below do not depend on this
particular decomposition, but it explains why iterated kernels \(K^{(n)}\)
have a natural genealogical reading: they collect contributions over \(n\)
successive transition or reproductive steps.
Thus the direct-contribution terms below measure how a focal initial state
feeds the stable population and reproductive value through repeated survival,
transition, and reproduction events.  This is the quantity that is typically
hidden when the continuous kernel is replaced immediately by a quadrature
matrix.

Let $P_t(x)$ denote the cohort density at state $x$ at time $t$. We consider the IPM
\begin{equation}\label{sipm1}
P_{t+1}(x)=\int_{\Omega^d}K(x,y)\,P_t(y)\,dy,
\qquad
P_0\in L^1(\Omega^d),\quad P_0\ge 0.
\end{equation}
Here the kernel $K\in\mathbb K$ is assumed to satisfy the structural assumptions introduced in Section~\ref{sec:fredholm}.

\begin{assumption}[Discrete-time spectral separation and reference non-resonance]
\label{assump:discrete_ipm}
Let $\mathbf K:L^1(\Omega^d)\to L^1(\Omega^d)$ be the integral operator induced by the kernel $K\in\mathbb K$. Assume that:
\begin{enumerate}
\item[(i)] $\lambda_0=\rho(\mathbf K)>0$ is an isolated algebraically simple eigenvalue of $\mathbf K$;
\item[(ii)] there exist a rank-one Riesz projection with kernel
\(\Pi(x,y)\),
a number $\theta\in(0,\lambda_0)$, and a function $C_1\in\mathcal X$ with $C_1(x,y)>0$ such that
\begin{equation}\label{eq:kernel_iterate_riesz_discrete}
K^{(n)}(x,y)=\lambda_0^n\Pi(x,y)+R_n(x,y),
\qquad
|R_n(x,y)|\le C_1(x,y)\theta^n,
\qquad n\ge1;
\end{equation}
\item[(iii)] we choose a diagonal reference point $y_0\in\Omega^d$ such that
\begin{equation}\label{eq:diag_ref_nonzero}
\kappa_0:=\Pi(y_0,y_0)>0,
\end{equation}
and normalize the right and left eigenfunctions at \(y_0\), equivalently
defining \(U:=\Pi/\kappa_0=u(x)v(y)\), so that
\begin{equation}\label{eq:diag_ref_normalized}
U(y_0,y_0)=1.
\end{equation}
\item[(iv)] with \(r_n:=R_n(y_0,y_0)\) and
\[
E(t):=\sum_{n=1}^{\infty}r_nt^n,
\]
there exists \(r_\ast\) such that
\[
\lambda_0^{-1}<r_\ast<\theta^{-1}
\]
and
\begin{equation}\label{eq:discrete_reference_denominator}
1+(\kappa_0-1)\lambda_0t+(1-\lambda_0 t)E(t)\neq0
\qquad (|t|\le r_\ast).
\end{equation}
\end{enumerate}
\end{assumption}

This is the diagonal-reference, model-specific restatement of the two
assumptions used in Section~\ref{sec:fredholm}: items (i)--(ii) give dominant
spectral separation, item (iii) fixes the normalized reference pair, and item
(iv) is reference non-resonance. The additional mixed-norm bound on $C_1$ is
satisfied on bounded state spaces and, more generally, when the transition and
reproduction kernels admit a uniform integrable envelope. It therefore retains
the standard finite-state, bounded-trait, and uniformly light-tailed IPM
settings used in demographic applications.

Under Assumption~\ref{assump:discrete_ipm}, Corollary~\ref{cor:gamma_normalized_convergence} applies with the reference pair $(y_0,y_0)$. Hence the reference-point series
\[
\sum_{n\ge1}\lambda^{-n}\Gamma_n(\cdot,\cdot;y_0,y_0)
\]
converges absolutely at $\lambda=\lambda_0=\rho(\mathbf K)$.

Assumption~\ref{assump:discrete_ipm} is imposed to keep the subsequent formulas both spectrally natural and biologically interpretable. In particular, the diagonal choice $y_0$ allows us to present the direct-contribution formulas in a transparent form. Without this convention, the same arguments remain valid, but the notation becomes more cumbersome because the three auxiliary points in the reference-point construction must be kept distinct.

\begin{remark}[A biological interpretation of the diagonal reference choice]
\rm
The normalization at a diagonal reference point is natural in many biological settings, for instance for long-lived organisms with slow growth such as trees, for species whose adult size is essentially fixed (as is often the case in mammals), and for organisms exhibiting strong site fidelity, that is, individuals that remain in the same habitat or patch with little movement.
\end{remark}

Before formulating the eigenvalue problem for \eqref{sipm1}, we introduce a convention regarding the reference points appearing in Section~\ref{sec:fredholm}. The representation of the eigenfunction there involves three auxiliary points, denoted by $y$, $x_0$, and $y_0$, which may be chosen freely as long as the associated reference pair satisfies the assumptions of Corollary~\ref{cor:gamma_normalized_convergence}. However, in view of the biological interpretation discussed below, we impose the convention
\[
y=x_0=y_0.
\]
Thus we fix a diagonal reference point $y_0$ satisfying \eqref{eq:diag_ref_nonzero}--\eqref{eq:diag_ref_normalized} and suppress the repeated variables from the notation as follows:
\begin{align*}
w_{0}(x,y_0)&=w_{0}(x,y_0,y_0,y_0),\\
v_{0}(y_0,x)&=v_{0}(y_0,y_0,y_0,x),\\
\Gamma_n(x,y_0)&=\Gamma_n(x,y_0;y_0,y_0),\\
\Gamma_n^{*}(y_0,x)&=\Gamma_n^{*}(y_0,y_0;y_0,x).
\end{align*}

Under this convention, Proposition~\ref{prop:dominant_eigenvalue_riesz} yields the eigenfunction $w_0(x,y_0)$ corresponding to \eqref{sipm1} in the form
\begin{align}
w_{0}(x,y_0)
&=
c_{w}(y_0)\left(\sum_{n=1}^{\infty}\frac{\Gamma_{n}(x,y_0)}{\lambda_{0}^{n}}\right),
\qquad c_{w}(y_0)\neq 0,\label{ef1}\\
\Gamma_{1}(x,y_0)
&=
K(x,y_0),\label{efi0}\\
\Gamma_{n+1}(x,y_0)
&=
\int_{\Omega^{d}}K(x,\xi)\Gamma_{n}(\xi,y_0)\,d\xi
-
K(x,y_0)\Gamma_{n}(y_0,y_0),
\qquad n\ge 1.\label{efi1}
\end{align}

For the adjoint eigenfunction $v_0(y_0,x)$, we adopt the same diagonal choice $y=x_0=y_0$, yielding
\begin{align}
v_{0}(y_0,x)
&=
c_v(y_0)\left(\sum_{n=1}^{\infty}\frac{\Gamma_{n}^{*}(y_0,x)}{\lambda_{0}^{n}}\right),
\qquad c_v(y_0)\neq 0,\label{aef2}\\
\Gamma_{1}^{*}(y_0,x)
&=
K(y_0,x),\label{aefi0}\\
\Gamma_{n+1}^{*}(y_0,x)
&=
\int_{\Omega^{d}}\Gamma_{n}^{*}(y_0,\eta)K(\eta,x)\,d\eta
-
\Gamma_{n}^{*}(y_0,y_0)K(y_0,x),
\qquad n\ge 1.\label{aefi1}
\end{align}

Note that, for notational consistency, the variables $x$ and $y$ in \eqref{aef2} have been interchanged to align with the convention of expressing functions with respect to $x$.

\subsection{Asymptotic characterization by the eigensystem}

Recall that the cohort dynamics \eqref{sipm1} can be written as
\[
P_{t}=\mathbf K^{\,t}P_{0}
\qquad (t\in\mathbb N),
\]
where $\mathbf K:L^{1}(\Omega^{d})\to L^{1}(\Omega^{d})$ is the integral operator
\[
(\mathbf K f)(x):=\int_{\Omega^{d}}K(x,y)f(y)\,dy.
\]
The dual pairing is
\begin{equation}\label{eq:pairing_x}
\langle f,g\rangle_x:=\int_{\Omega^d} f(x)g(x)\,dx,
\qquad f\in L^\infty(\Omega^d),\ g\in L^1(\Omega^d),
\end{equation}
and the adjoint operator $\mathbf K^{*}:L^{\infty}(\Omega^{d})\to L^{\infty}(\Omega^{d})$ is given by
\[
(\mathbf K^{*}v)(y)=\int_{\Omega^{d}} v(x)\,K(x,y)\,dx.
\]

\begin{theorem}[Asymptotics of the cohort]
\label{thm:asympt2}
Assume that Assumption~\ref{assump:discrete_ipm} holds. Let $w_{0}(\cdot,y_0)\in L^{1}(\Omega^{d})$ and $v_{0}(y_0,\cdot)\in L^{\infty}(\Omega^{d})$ be nontrivial eigenfunctions satisfying
\[
\mathbf K\,w_{0}(\cdot,y_0)=\lambda_{0}\,w_{0}(\cdot,y_0),
\qquad
\mathbf K^{*}v_{0}(y_0,\cdot)=\lambda_{0}\,v_{0}(y_0,\cdot),
\]
with $\langle v_{0},w_{0}\rangle_x\neq 0$. Then there exist constants $C>0$ and $\delta>0$ such that, for every $P_{0}\in L^{1}(\Omega^{d})$ with $P_{0}\ge 0$,
\begin{equation}\label{asympt2}
\left\|
P_t-
\frac{\langle v_{0},P_{0}\rangle_{x}}{\langle v_{0},w_{0}\rangle_{x}}
\lambda_{0}^{t}w_{0}(\cdot,y_0)
\right\|_{L^1}
\le C\lambda_0^t e^{-\delta t}\|P_0\|_{L^1},
\qquad t\in\mathbb N,
\end{equation}
where $\langle\cdot,\cdot\rangle_x$ is defined in \eqref{eq:pairing_x}.
\end{theorem}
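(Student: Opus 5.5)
The plan is to derive the estimate from the Riesz decomposition of $\mathbf K$ on $L^1(\Omega^d)$ given by Assumption~\ref{assump:discrete_ipm}(i). Since $\lambda_0=\rho(\mathbf K)$ is an isolated, algebraically simple eigenvalue, the Riesz projection
\[
\mathbf Q:=\frac{1}{2\pi i}\oint_{|\lambda-\lambda_0|=\varepsilon}(\lambda\mathbf I-\mathbf K)^{-1}\,d\lambda
\]
is rank one and commutes with $\mathbf K$. Its range is spanned by $w_0(\cdot,y_0)$, since the eigenspace is one-dimensional. The dual statement $\mathbf Q^*v_0=v_0$ follows because $\mathbf K^*$ has the same isolated simple eigenvalue, with eigenvector $v_0$.

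First I will identify the projection explicitly:
\[
\mathbf Q f=\frac{\langle v_0,f\rangle_x}{\langle v_0,w_0\rangle_x}\,w_0 .
\]
Write $\mathbf Qf=\ell(f)w_0$. Pairing with $v_0$ and using $\langle v_0,\mathbf Qf\rangle_x=\langle \mathbf Q^*v_0,f\rangle_x=\langle v_0,f\rangle_x$ determines $\ell(f)$; this is where the hypothesis $\langle v_0,w_0\rangle_x\neq0$ is used. The identification could be cross-checked with the kernel $\Pi$ of (ii) and the $\Gamma_n$-representations \eqref{ef1} and \eqref{aef2}, but nothing beyond simplicity is needed.

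Next, set $\mathbf R:=\mathbf K(\mathbf I-\mathbf Q)$. Then
\[
\mathbf K^t=\lambda_0^t\mathbf Q+\mathbf R^t,\qquad \sigma(\mathbf R)=\bigl(\sigma(\mathbf K)\setminus\{\lambda_0\}\bigr)\cup\{0\}.
\]
The key claim is $\rho(\mathbf R)<\lambda_0$, i.e.\ that no other spectrum of $\mathbf K$ lies on the circle $|\lambda|=\lambda_0$. Isolation of $\lambda_0$ alone does not give this.

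\textbf{This peripheral-spectrum step is the main obstacle.} I would obtain it from the pointwise estimate (ii) together with $C_1\in\mathcal X$. The kernel of $\mathbf K^n-\lambda_0^n\mathbf Q$ is $R_n$, since $\mathbf Q$ has kernel $\Pi$ (which I would check by matching both with the residue at $\lambda_0$). Hence
\[
\|\mathbf K^n(\mathbf I-\mathbf Q)\|_{L^1\to L^1}\le \operatorname*{ess\,sup}_y\int|R_n(x,y)|\,dx\le \|C_1\|_{\mathcal X}\,\theta^n .
\]
Gelfand's formula then gives $\rho(\mathbf R)\le\theta<\lambda_0$, and the bound also holds for every $n$ with an explicit constant.

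Finally, put $\delta:=\log(\lambda_0/\theta)>0$. Then
\[
\bigl\|P_t-\lambda_0^t\mathbf QP_0\bigr\|_{L^1}=\|\mathbf R^tP_0\|_{L^1}\le \|C_1\|_{\mathcal X}\,\theta^t\|P_0\|_{L^1}=C\lambda_0^te^{-\delta t}\|P_0\|_{L^1},
\]
for all $t\ge1$. The case $t=0$ is absorbed by enlarging $C$ to cover $\|\mathbf I-\mathbf Q\|$. Nonnegativity of $P_0$ plays no role in the estimate.
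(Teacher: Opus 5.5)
Your proposal is correct and follows the paper's proof essentially step for step. The rank-one Riesz projection at $\lambda_0$ is identified as $f\mapsto \langle v_0,f\rangle_x\,w_0/\langle v_0,w_0\rangle_x$ via $\mathbf Q^*v_0=v_0$, and the remainder $\mathbf K^t(\mathbf I-\mathbf Q)$, whose kernel is $R_t$, is bounded in $L^1$ operator norm by $\|C_1\|_{\mathcal X}\theta^t$ from Assumption~\ref{assump:discrete_ipm}(ii), with $\delta=\log(\lambda_0/\theta)$; your separate handling of $t=0$ (where $\mathbf K^0\neq\mathbf Q+\mathbf R^0$) is a small extra bit of care that the paper's write-up leaves implicit.
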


\begin{proof}
Since $\lambda_0$ is assumed to be an isolated simple eigenvalue, let $\Gamma$ be a positively oriented circle in $\mathbb C$ centered at $\lambda_0$ and enclosing no other point of $\sigma(\mathbf K)$. Define the Riesz projection
\[
\mathbf L
:=
\frac{1}{2\pi i}\oint_{\Gamma}(zI-\mathbf K)^{-1}\,dz
\qquad \text{on }L^1(\Omega^d).
\]
Then $\mathbf L$ is a bounded projection commuting with $\mathbf K$ and satisfying
\[
\mathbf K\mathbf L=\lambda_0\mathbf L.
\]
Since $\lambda_0$ is simple, $\mathrm{Ran}(\mathbf L)=\mathrm{span}\{w_{0}(\cdot,y_0)\}$. Hence there exists a bounded linear functional $\ell$ on $L^{1}(\Omega^{d})$ such that
\[
\mathbf L f=\ell(f)\,w_{0}(\cdot,y_0),
\qquad
f\in L^{1}(\Omega^{d}).
\]
Because $\mathbf L$ commutes with $\mathbf K$, its adjoint $\mathbf L^{*}$ commutes with $\mathbf K^{*}$. Moreover, $\mathbf K^{*}v_{0}=\lambda_{0}v_{0}$ implies $\mathbf L^{*}v_{0}=v_{0}$. Therefore, for every $f\in L^{1}(\Omega^{d})$,
\[
\langle v_{0},\mathbf L f\rangle_x
=
\langle \mathbf L^{*}v_{0},f\rangle_x
=
\langle v_{0},f\rangle_x.
\]
On the other hand, $\mathbf L f=\ell(f)w_{0}$ gives
\[
\langle v_{0},\mathbf L f\rangle_x
=
\ell(f)\,\langle v_{0},w_{0}\rangle_x.
\]
Hence
\[
\ell(f)=\frac{\langle v_{0},f\rangle_x}{\langle v_{0},w_{0}\rangle_x},
\qquad
\mathbf L f=\frac{\langle v_{0},f\rangle_x}{\langle v_{0},w_{0}\rangle_x}\,w_{0}.
\]

Now define
\[
\mathbf R:=\mathbf K(\mathbf I-\mathbf L).
\]
Then
\[
\mathbf L\mathbf R=\mathbf R\mathbf L=0,
\qquad
\mathbf K^t=\lambda_0^t\mathbf L+\mathbf R^t
\qquad (t\in\mathbb N).
\]
By Assumption~\ref{assump:discrete_ipm}(ii), the kernel of $\mathbf R^t$ is $R_t$ and hence
\[
\|\mathbf R^{t}\|_{L^{1}\to L^{1}}
\le
\|C_1\|_{\mathcal X}\theta^t
=
\|C_1\|_{\mathcal X}\lambda_0^t e^{-\delta t},
\qquad
\delta:=\log(\lambda_0/\theta)>0.
\]
Applying this decomposition to $P_{t}=\mathbf K^{t}P_{0}$ yields
\[
\left\|P_t-\lambda_0^t\mathbf L P_0\right\|_{L^1}
\le
\|C_1\|_{\mathcal X}\lambda_0^t e^{-\delta t}\|P_0\|_{L^1},
\]
which gives \eqref{asympt2}.
\end{proof}

\subsection{Reinterpretation of eigensystems by analogy with Markov chains}

In understanding the sequence appearing in \eqref{ef1}, the theory of Markov chains offers particularly valuable suggestions. In a Markov chain whose transition probability from state $j$ to state $i$ is denoted by $p_{ij}\ge0$, the following quantity, called the taboo probability, is known \cite{chung1960transition}:
\begin{align}
p^j_{ij}(n)
&:=
\mathbb P_j\!\left(X_n=i,\ X_k\neq j\ \text{for all }1\le k\le n-1\right)\nonumber\\
&=
\sum_{i_1\neq j}\sum_{i_2\neq j}\cdots\sum_{i_{n-1}\neq j}
p_{ii_0}p_{i_0i_1}\cdots p_{i_{n-1}j}.
\end{align}
It is well known that the following sequence constitutes the stationary distribution $\mu(i)$ of this Markov process:
\begin{equation}
\mu(i)=p_{ij}+\sum_{n=2}^{\infty}p^j_{ij}(n).
\end{equation}
If $(p_{ij})_{1\le i,j\le M}$ is an irreducible stochastic matrix, the stationary distribution $(\mu(i))_{1\le i\le M}$ can be equivalently described as the eigenvector corresponding to the largest eigenvalue $1$. Considering the recurrence relation that the $n$-step taboo probabilities satisfy, one formally obtains
\begin{equation}\label{sd1}
p^j_{ij}(n)=\sum_{k=1}^{M}p_{ik}p^j_{kj}(n-1)-p_{ij}p^j_{jj}(n-1),
\qquad
p^j_{ij}(1)=p_{ij}.
\end{equation}
This relation does not require the matrix to be stochastic; an irreducible nonnegative matrix similarly yields an eigenvector corresponding to its Frobenius root \cite{oizumi2022sensitivity}. Focusing on the right-hand side of \eqref{sd1}, we see that the first term sums over all paths from every state $k$ to state $i$ at the previous step, while the second term subtracts the contribution of the paths that pass through state $j$. Replacing the sum with an integral in \eqref{sd1}, we observe that the resulting relation resembles the recursion \eqref{efi1}.

However, from a measure-theoretic viewpoint, since a single point has Lebesgue measure zero, the expression \eqref{efi1} cannot literally be interpreted as ``subtracting the paths passing through $y_0$ from all paths leading to $x$.'' Therefore, when $w(y_0)=1$, we define the series on the right-hand side of \eqref{ef1} as the \emph{direct contribution} from $y_0$ to $x$. Similarly, we define the right-hand side of \eqref{aef2} as the \emph{adjoint direct contribution} from $y_0$ to $x$. These two direct contributions respectively represent the degree of contribution from a past state $y_0$ to a future state $x$, and the degree of dependence of a future state $y_0$ on a past state $x$. Furthermore, we define the \emph{self-direct contribution} as the direct contribution from a state $y_0$ to itself, where the direct contribution and its adjoint coincide. In a Markov process, a self-direct contribution of one indicates recurrence; in the IPM, the value of $\lambda_0$ that makes the self-direct contribution equal to one gives the intrinsic growth rate:
\begin{equation}\label{ele1}
\sum_{n=1}^{\infty}\frac{\Gamma_{n}(y_0,y_0)}{\lambda_{0}^{n}}
=
\sum_{n=1}^{\infty}\frac{\Gamma_{n}^{*}(y_0,y_0)}{\lambda_{0}^{n}}
=
1.
\end{equation}
In finite-dimensional models, namely transition matrix models, the self-direct contribution indeed reflects its name: it sums, over each number of steps, the paths that return to the same state for the first time.

\subsection{Initial population dependence and expected contribution steps}

Building on the previous subsection, the numerator $\langle v_0, P_0 \rangle_{x}$ of the expansion coefficient in \eqref{asympt2}---the pairing of the reproductive value $v_0$ with the initial population $P_0$---quantifies the dependence of a future state $y_0$ on the initial distribution $P_0(\cdot)$:
\begin{equation}
\langle v_0, P_0 \rangle_{x}
=
c_v(y_0)\sum_{n=1}^{\infty}\int_{\Omega^{d}}\frac{\Gamma_{n}^{*}(y_0,x)\,P_0(x)}{\lambda_{0}^{n}}\, dx.
\end{equation}
\begin{theorem}\label{th2}
Let $\Gamma_{m}^{*}(y_0,x)$ satisfy \eqref{aefi1} and $\Gamma_{n}(x,y_0)$ satisfy \eqref{efi1}. Then
\begin{equation}\label{rel1}
\langle \Gamma_{m}^{*}, \Gamma_{n} \rangle_{x}
=
\Gamma_{m+n}(y_0,y_0)+\Gamma_{m}(y_0,y_0)\Gamma_{n}(y_0,y_0),
\qquad m,n\ge 1.
\end{equation}
\end{theorem}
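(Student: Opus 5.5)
We prove \eqref{rel1} by induction on $m$, holding $n\ge1$ arbitrary, using only the recursions \eqref{efi1} and \eqref{aefi1}. No spectral assumption is needed. Throughout, $\langle\Gamma_m^*,\Gamma_n\rangle_x=\int_{\Omega^d}\Gamma_m^*(y_0,x)\Gamma_n(x,y_0)\,dx$, and we abbreviate $g_k:=\Gamma_k(y_0,y_0)$.

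\textbf{Preliminary facts.} First, the integrals are well defined and Fubini applies. Since $K\in\mathbb K$, the function $x\mapsto K(y_0,x)$ is bounded by $M$, and $x\mapsto K(x,y_0)$ is integrable; both evaluations are meaningful thanks to local continuity near the reference pair. An induction on \eqref{aefi1} shows that $\Gamma_m^*(y_0,\cdot)\in L^\infty$, and an induction on \eqref{efi1} shows that $\Gamma_n(\cdot,y_0)\in L^1$. The finite sums of iterated kernels from Lemma~\ref{lem:Gamma_Bell_nonHS} give the same conclusions directly. Hence every double integral below is absolutely convergent.

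Second, we need the scalar identity $\Gamma_k^*(y_0,y_0)=\Gamma_k(y_0,y_0)=g_k$. This follows from the transpose relation \eqref{eq:Gamma_star_transpose_relation} with diagonal reference. Equivalently, by \eqref{eq:a_Bell_nonHS} both sequences are the same polynomial in $b_j=K^{(j)}(y_0,y_0)$.

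\textbf{Base case $m=1$.} Evaluate \eqref{efi1} at $x=y_0$. This gives
\[
\int_{\Omega^d}K(y_0,\xi)\Gamma_n(\xi,y_0)\,d\xi=\Gamma_{n+1}(y_0,y_0)+K(y_0,y_0)\,g_n .
\]
Since $\Gamma_1^*(y_0,x)=K(y_0,x)$ and $K(y_0,y_0)=g_1$, this is exactly \eqref{rel1} for $m=1$.

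\textbf{Induction step.} Assume \eqref{rel1} holds for $m$ and all $n$. Insert \eqref{aefi1} for $\Gamma^*_{m+1}$ into the pairing. The second term of \eqref{aefi1} contributes
\[
-g_m\int K(y_0,x)\Gamma_n(x,y_0)\,dx=-g_m\bigl(g_{n+1}+g_1g_n\bigr),
\]
by the base-case computation.

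For the first term of \eqref{aefi1}, apply Fubini and integrate in $x$ first. Using \eqref{efi1},
\[
\int K(\eta,x)\Gamma_n(x,y_0)\,dx=\Gamma_{n+1}(\eta,y_0)+K(\eta,y_0)g_n .
\]
This splits the first term into two pieces:
\begin{itemize}
\item the piece $\langle\Gamma_m^*,\Gamma_{n+1}\rangle_x$, which by the induction hypothesis equals $g_{m+n+1}+g_mg_{n+1}$;
\item the piece $g_n\int\Gamma_m^*(y_0,\eta)K(\eta,y_0)\,d\eta$.
\end{itemize}
For the second piece, evaluate \eqref{aefi1} at $x=y_0$ and use the scalar identity; the piece equals $g_n(g_{m+1}+g_mg_1)$.

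Summing the three contributions, the terms $g_mg_{n+1}$ cancel and the terms $g_1g_mg_n$ cancel. What remains is $g_{m+n+1}+g_{m+1}g_n$, which is \eqref{rel1} for $m+1$.

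\textbf{Main obstacle.} The main point to get right is the identification $\Gamma_k^*(y_0,y_0)=\Gamma_k(y_0,y_0)$. It is what makes the $g_n$ terms cancel, and it relies on the diagonal convention $x_0=y_0$. The remaining care concerns the integrability justifications needed for Fubini.
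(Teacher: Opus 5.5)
Your proof is correct and is essentially the paper's argument. The paper expands $\Gamma_m^*$ via \eqref{aefi1}, swaps the integrals, and uses \eqref{efi1} to obtain $\langle\Gamma_m^*,\Gamma_n\rangle_x=\langle\Gamma_{m-1}^*,\Gamma_{n+1}\rangle_x+g_mg_n-g_{m-1}g_{n+1}$, i.e.\ $\phi(m,n)=\phi(m-1,n+1)$, and then telescopes down to the same base case $m=1$; this is exactly your induction on $m$ with the hypothesis used at $(m,n+1)$. Beyond that, you explicitly justify $\Gamma_k^*(y_0,y_0)=\Gamma_k(y_0,y_0)$, which the paper only asserts, and you justify the Fubini step; there, note that membership in $\mathbb K$ controls the slices $K(\cdot,y_0)$ and $K(y_0,\cdot)$ only for almost every point, so the integrability at the specific reference point is really an implicit hypothesis, as it is in the paper.
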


\begin{proof}
Set
\[
\Gamma_{n}:=\Gamma_{n}(y_0,y_0)=\Gamma^{*}_{n}(y_0,y_0),
\qquad n\ge 1.
\]
For integers $m\ge 1$ and $n\ge 1$, define
\begin{equation}\label{eq:def-phi}
\phi(m,n)
:=
\langle \Gamma_{m}^{*}, \Gamma_{n} \rangle_{x}
-
\Gamma_{m+n}
-
\Gamma_{m}\Gamma_{n}.
\end{equation}
Equations \eqref{efi1} and \eqref{aefi1} give
\begin{align}
\langle \Gamma_{m}^{*}, \Gamma_{n} \rangle_{x}
&=
\int_{\Omega^d}
\left(
\int_{\Omega^d}\Gamma_{m-1}^{*}(y_0,\eta)K(\eta,x)\,d\eta
-
\Gamma_{m-1}K(y_0,x)
\right)
\Gamma_{n}(x,y_0)\,dx \nonumber\\
&=
\langle \Gamma_{m-1}^{*}, \Gamma_{n+1} \rangle_{x}
+
\int_{\Omega^d}\Gamma_{m-1}^{*}(y_0,\eta)K(\eta,y_0)\Gamma_{n}\,d\eta \nonumber\\
&\quad
-
\Gamma_{m-1}\int_{\Omega^d}K(y_0,x)\Gamma_{n}(x,y_0)\,dx \nonumber\\
&=
\langle \Gamma_{m-1}^{*}, \Gamma_{n+1} \rangle_{x}
+
\Gamma_{m}\Gamma_{n}
-
\Gamma_{m-1}\Gamma_{n+1}.\label{in1}
\end{align}
Substituting \eqref{in1} into \eqref{eq:def-phi} yields
\[
\phi(m,n)=\phi(m-1,n+1),
\qquad m,n\ge1.
\]
Iterating this identity $m-1$ times gives $\phi(m,n)=\phi(1,m+n-1)$.
For \(m=1\), since
\(\Gamma_1^*(y_0,x)=K(y_0,x)\), the recursion \eqref{efi1} evaluated at
\((y_0,y_0)\) gives
\[
\Gamma_{k+1}
=
\int_{\Omega^d}K(y_0,x)\Gamma_k(x,y_0)\,dx
-
K(y_0,y_0)\Gamma_k .
\]
Using \(K(y_0,y_0)=\Gamma_1\), we therefore obtain
\[
\langle \Gamma_1^*,\Gamma_k\rangle_x
=
\int_{\Omega^d}K(y_0,x)\Gamma_k(x,y_0)\,dx
=
\Gamma_{k+1}+\Gamma_1\Gamma_k .
\]
Consequently
\[
\phi(1,k)
=
\langle \Gamma_1^*,\Gamma_k\rangle_x-\Gamma_{k+1}-\Gamma_1\Gamma_k
=0,
\qquad k\ge1.
\]
Hence $\phi(m,n)=0$ for all $m,n\ge 1$.
\end{proof}

\begin{remark}
\rm
The condition $y=x_{0}=y_{0}$ imposed in Theorem~\ref{th2} is essential. If, instead, one keeps the general reference-point setting of Section~\ref{sec:fredholm} and allows the variables in the direct contribution and its adjoint to vary independently, additional summation terms appear on the right-hand side of \eqref{rel1}. Such terms not only obscure the biological interpretation of the eigensystem but also considerably complicate the associated computations. It is also worth noting that imposing $y=x_{0}=y_{0}$ alters the result at most by a multiplicative constant.
\end{remark}

In the following calculation we assume that the two eigenfunction series may
be paired termwise; for example, it is sufficient that they converge
absolutely in $L^1$ and $L^\infty$, respectively.  By invoking
Theorem~\ref{th2}, $\langle v_{0},w_{0}\rangle$ is then computed as follows:
\begin{align}
\langle v_{0},w_{0}\rangle_{x}
&=
\int_{\Omega^d}v_{0}(y_0,x)\,w_{0}(x,y_0)\,dx \nonumber\\
&=
c_v(y_0)c_w(y_0)\int_{\Omega^d}
\left(\sum_{m=1}^{\infty}\frac{\Gamma_{m}^{*}(y_0,x)}{\lambda_{0}^{m}}\right)
\left(\sum_{n=1}^{\infty}\frac{\Gamma_{n}(x,y_0)}{\lambda_{0}^{n}}\right)\,dx \nonumber\\
&=
c_v(y_0)c_w(y_0)\sum_{m=1}^{\infty}\sum_{n=1}^{\infty}\frac{1}{\lambda_{0}^{m+n}}
\int_{\Omega^d}\Gamma_{m}^{*}(y_0,x)\Gamma_{n}(x,y_0)\,dx \nonumber\\
&=
c_v(y_0)c_w(y_0)\sum_{m=1}^{\infty}\sum_{n=1}^{\infty}\frac{1}{\lambda_{0}^{m+n}}
\left(\Gamma_{m+n}(y_0,y_0)+\Gamma_{m}(y_0,y_0)\Gamma_{n}(y_0,y_0)\right) \nonumber\\
&\quad \text{(by Theorem~\ref{th2})} \nonumber\\
&=
c_v(y_0)c_w(y_0)\left(
\underbrace{\sum_{k=2}^{\infty}\frac{(k-1)\Gamma_k(y_0,y_0)}{\lambda_0^{k}}}_{(\ast)}
+
\underbrace{\left(\sum_{n=1}^{\infty}\frac{\Gamma_{n}(y_0,y_0)}{\lambda_{0}^{n}}\right)^2}_{=\,1\text{ by }\eqref{ele1}}
\right).\label{es1}
\end{align}
Define the expected number of contributing steps by
\begin{equation}\label{em}
\mathbb{E}_{y_0}[m]
:=
\sum_{n=1}^{\infty}n\,\frac{\Gamma_{n}(y_0,y_0)}{\lambda_{0}^{n}}.
\end{equation}
Then
\[
(\ast)=\mathbb E_{y_0}[m]-\sum_{n\ge1}\frac{\Gamma_n(y_0,y_0)}{\lambda_0^n}
=
\mathbb E_{y_0}[m]-1,
\]
hence \eqref{es1} becomes
\begin{equation}\label{es1b}
\langle v_{0},w_{0}\rangle_{x}
=
c_v(y_0)c_w(y_0)\,\mathbb{E}_{y_0}[m].
\end{equation}

\begin{remark}
\rm
The individual coefficients in \eqref{em} need not be nonnegative.  Nevertheless,
choosing the positive eigenfunctions with $c_v(y_0),c_w(y_0)>0$ and using
\eqref{es1b} gives
\[
\mathbb E_{y_0}[m]
=\frac{\langle v_0,w_0\rangle_x}{c_v(y_0)c_w(y_0)}>0.
\]
We therefore call this quantity the expected number of contributing steps by
analogy with the mean recurrence time, not because its individual summands
form a probability distribution.  Its precise demographic interpretation is
model dependent.

Substituting \eqref{ef1}, \eqref{rel1}, and \eqref{es1b} into \eqref{asympt2} yields, in particular under the leading eigenvalue condition $\lambda_0=1$, demographic coefficients that determine the steady-state total population size:
\begin{equation}
\lim_{t\to +\infty}\int_{\Omega^d} P_{t}(x)\,dx
=
\frac{1}{\mathbb{E}_{y_0}[m]}
\sum_{n=1}^{\infty}\sum_{k=1}^{\infty}
\int_{\Omega^{d}}\Gamma_{k}(x,y_0)\,dx
\int_{\Omega^{d}}\Gamma_{n}^{*}(y_0,\xi)P_0(\xi)\,d\xi.
\end{equation}
These coefficients represent the reproductive contribution of the initial population at state $y_0$, multiplied by the total direct contribution from state $y_0$ to all other states $x$, and normalized by the expected number of transition steps. Equivalently, the total population can be interpreted as the product of the expected reproduction and survival for the cohort at the initial state $y_0$ and the per-step contribution rate at state $y_0$. A larger expected step count $\mathbb{E}_{y_0}[m]$ implies a smaller per-step contribution of descendants. Since $\lambda_0=1$ corresponds to population replacement, \eqref{ele1} yields
\[
\sum_{n=1}^{\infty}\Gamma_{n}(y_0,y_0)=1.
\]
\end{remark}

We define
\begin{equation}
T_{y_0}:=\sum_{n=1}^{\infty}\Gamma_n(y_0,y_0).
\end{equation}
Then the following proposition holds.

\begin{proposition}
If $0<T_{y_0}\le 1$, then $0<\lambda_0\le 1$.
\end{proposition}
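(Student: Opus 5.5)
The plan is to argue by contraposition through the scalar generating function at the diagonal reference point. Put
\[
G(t):=\Gamma(t;y_0,y_0)=\sum_{n\ge1}\Gamma_n(y_0,y_0)t^n,
\qquad
\mathcal K_0(t):=\sum_{n\ge1}K^{(n)}(y_0,y_0)t^n,
\]
so that $T_{y_0}=G(1)$, understood as the sum of the series, which I take to be convergent. By Lemma~\ref{lem:reference_generating_identity}, $G=\mathcal K_0/(1+\mathcal K_0)$ near $0$. Equivalently,
\[
G(t)-1=-\frac{1}{1+\mathcal K_0(t)}.
\]
The whole argument is a sign analysis of $1+\mathcal K_0(t)$ on the real axis.

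\textbf{Step 1: $0\le t<\lambda_0^{-1}$.} Here $b_n=K^{(n)}(y_0,y_0)>0$ by strict positivity of $K$. Hence $\mathcal K_0(t)\in[0,\infty)$ converges and is increasing, and $0\le G(t)<1$. By Assumption~\ref{assump:discrete_ipm}(ii), $b_n=\kappa_0\lambda_0^n+r_n$, so $\mathcal K_0(t)\to+\infty$ and $G(t)\uparrow1$ as $t\uparrow\lambda_0^{-1}$. This matches $G(\lambda_0^{-1})=U(y_0,y_0)=1$ from Corollary~\ref{cor:gamma_normalized_convergence}.

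\textbf{Step 2: $\lambda_0^{-1}<t\le r_\ast$.} Use the continued form from the proof of Corollary~\ref{cor:gamma_normalized_convergence}:
\[
1+\mathcal K_0(t)=\frac{N(t)}{1-\lambda_0t},
\qquad
N(t):=1+(\kappa_0-1)\lambda_0t+(1-\lambda_0t)E(t).
\]
$N$ is real on real $t$ and nonvanishing on $[0,r_\ast]$ by Assumption~\ref{assump:discrete_ipm}(iv). Since $N(\lambda_0^{-1})=\kappa_0>0$, continuity gives $N>0$ on all of $[0,r_\ast]$. For $t>\lambda_0^{-1}$ we have $1-\lambda_0t<0$, so $1+\mathcal K_0(t)<0$, and therefore $G(t)-1>0$. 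Thus $G<1$ strictly before $\lambda_0^{-1}$ and $G>1$ strictly after it, within the disk of Assumption~\ref{assump:discrete_ipm}.

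\textbf{Step 3: the contrapositive.} Suppose $\lambda_0>1$, i.e.\ $\lambda_0^{-1}<1$. If $1\le r_\ast$, Step 2 gives $T_{y_0}=G(1)>1$, contradicting $T_{y_0}\le1$. If $1>r_\ast$, convergence of $\sum\Gamma_n(y_0,y_0)$ forces the radius of convergence of $G$ to be at least $1$, so $G$ is analytic on $|t|<1$. Then $G-1=-1/(1+\mathcal K_0)$ is analytic on $(\lambda_0^{-1},1)$, hence $1+\mathcal K_0$ has no zero there. It stays negative, so $G>1$ on $(\lambda_0^{-1},1)$, and Abel's theorem gives $T_{y_0}=\lim_{t\uparrow1}G(t)\ge1$. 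Hence $\lambda_0\le1$. Positivity $\lambda_0>0$ is part of Assumption~\ref{assump:discrete_ipm}(i), and the case $\lambda_0=1$ is consistent since then $T_{y_0}=1$ by \eqref{ele1}.

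\textbf{Main obstacle.} The delicate point is the second case of Step 3, where the hypotheses only give $T_{y_0}\ge1$ rather than $T_{y_0}>1$. Equality $G(1)=1$ would require $1+\mathcal K_0(t)\to-\infty$ as $t\uparrow1$, i.e.\ a pole of the continued $\mathcal K_0$ at $t=1$. I would try to exclude this by noting that the only singularity of the Perron part of $\mathcal K_0$ is at $\lambda_0^{-1}$. The remainder $E(t)$ is analytic only for $|t|<\theta^{-1}$, however, so if $1\ge\theta^{-1}$ an extra hypothesis may be needed. In that case I would either impose $1<r_\ast$ (natural when $\lambda_0$ is near $1$), or state the conclusion as: $T_{y_0}<1$ forces $\lambda_0<1$, and $T_{y_0}=1$ forces $\lambda_0=1$ when $1$ lies in the non-resonance disk.
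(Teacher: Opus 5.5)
Your Steps 1--2 and the subcase $1\le r_\ast$ of Step 3 are correct. They already go further than the paper's one-line proof. That proof uses only strict decrease of $F(\lambda)=\sum_n\lambda^{-n}\Gamma_n(y_0,y_0)$ on $(\lambda_0,\infty)$, which is your Step 1, and this by itself gives only the converse implication $\lambda_0<1\Rightarrow T_{y_0}<1$. The stated direction needs the region $t>\lambda_0^{-1}$, and your Step 2 handles that region inside the non-resonance disk.

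The gap is in the subcase $r_\ast<1$, at ``$1+\mathcal K_0$ has no zero there. It stays negative.'' Beyond $\theta^{-1}$ you only know that $1+\mathcal K_0=1/(1-G)$ is meromorphic, and it can change sign through a pole without vanishing. Such a pole occurs at $t=\mu^{-1}$ whenever a subdominant eigenvalue $\mu\in(1,\lambda_0)$ is visible at $(y_0,y_0)$. At that point $G$ crosses $1$ while remaining analytic. This is not only an issue for the equality case $T_{y_0}=1$: in this regime the proposition itself is false.

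Here is a counterexample. Take $K(x,y)=\tfrac32u_0(x)v_0(y)+\tfrac65u_1(x)v_1(y)$ with $u_0=v_0=\sqrt\rho$, $u_1=s\sqrt\rho$, $v_1=\sigma\sqrt\rho$. Let $\rho,s,\sigma$ be step functions on three intervals, the first containing $y_0$ in its interior, with:
\begin{itemize}
\item $\rho$-masses $\tfrac1{50},\tfrac{49}{100},\tfrac{49}{100}$ and $\rho=\tfrac73$ on the first interval;
\item $s=1,\tfrac{48}{49},-\tfrac{50}{49}$;
\item $\sigma=-\tfrac47,\sigma_1,\sigma_2$, where $\sigma_1\approx1.044$ and $\sigma_2\approx-1.021$ solve $\int\sigma\rho=0$ and $\int s\sigma\rho=1$.
\end{itemize}
Then $\int v_iu_j=\delta_{ij}$, and $K>0$ because $s(x)\sigma(y)>-\tfrac54$ for all $x,y$. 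Moreover $K^{(n)}=(\tfrac32)^nu_0v_0+(\tfrac65)^nu_1v_1$ exactly, so $\lambda_0=\tfrac32$, $\kappa_0=\tfrac73$ and $r_n=-\tfrac43(\tfrac65)^n$. Consequently
\[
1+\mathcal K_0(t)=\frac{1-\frac45t}{(1-\frac32t)(1-\frac65t)},
\qquad
G(t)=1-\frac{(1-\frac32t)(1-\frac65t)}{1-\frac45t}.
\]
Assumption~\ref{assump:discrete_ipm} holds with $\theta=1.3$ and $r_\ast=0.75$, where $N(t)=(1-\frac45t)/(1-\frac65t)$. The series $G$ has radius of convergence $\tfrac54$, with $\Gamma_1(y_0,y_0)=\tfrac{19}{10}$ and $\Gamma_n(y_0,y_0)=-\tfrac7{25}(\tfrac45)^{n-2}$ for $n\ge2$. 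Hence $T_{y_0}=G(1)=\tfrac12$, although $\lambda_0=\tfrac32$. Indeed $G>1$ on $(\tfrac23,\tfrac56)$ and $G<1$ on $(\tfrac56,1]$.

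So this subcase cannot be repaired. Your fallback ``$T_{y_0}<1$ forces $\lambda_0<1$'' also fails on the same example.

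What your argument does establish is the statement under the extra hypothesis $\theta<1$, which $r_\ast\ge1$ implies. In that case you even get strict inequality, so the equality worry in your last paragraph disappears:
\begin{itemize}
\item $N$ is analytic on $|t|<\theta^{-1}\supset[0,1]$.
\item Since $G-1=(\lambda_0t-1)/N$ and $G$ is analytic on $|t|<1$, $N$ has no zero on $[0,1)$. As $N(0)=1$, it follows that $N>0$ there.
\item $N(1)=0$ would force $G(t)\to+\infty$ as $t\uparrow1$, contradicting Abel's theorem.
\item Therefore $T_{y_0}-1=(\lambda_0-1)/N(1)>0$ whenever $\lambda_0>1$.
\end{itemize}
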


\begin{proof}
This follows from the fact that the right-hand side of
\[
F(\lambda)=\sum_{n=1}^{\infty}\lambda^{-n}\Gamma_n(y_0,y_0)
\]
is strictly decreasing in $\lambda$ on $(\lambda_0,\infty)$, together with the normalization identity \eqref{ele1}.
\end{proof}

\begin{remark}
\rm
The quantity $T_{y_0}$ is referred to as the \emph{type reproduction number} (TRN) \cite{heesterbeek2007type,inaba2013definition} at state $y_0$. By analogy with Markov chains, it aggregates, over all $n$, the total contribution from individuals originating in state $y_0$ who either return to $y_0$ or produce descendants that reach $y_0$ for the first time at step $n$. From a measure-theoretic viewpoint, a single point in $\mathbb{R}^d$ has zero recurrence measure; thus this interpretation is heuristic. Nevertheless, given the meaning of the quantity and its relation to the dominant spectral value $\lambda_0$, it is natural to regard $T_{y_0}$ as the type reproduction number associated with a single state.

Thus far, analytical insight has been obtained through the spectral analysis of the discrete-time IPM \eqref{sipm1}, including the characteristic equation \eqref{ele1}, the eigensystems \eqref{ef1} and \eqref{aef2}, and the construction of the type reproduction number via the reference-point representation of Section~\ref{sec:fredholm}. However, empirical IPMs often abstract away age-structured life history due to observational constraints, limiting biological interpretation. To address life-history, demographic, and evolutionary questions, it is therefore necessary to incorporate age structure explicitly into the mathematical formulation.
\end{remark}
\section{Multi-state McKendrick equation}
\label{sec:mckendrick}

\paragraph{Connection with the general framework.}
\label{para:bridge_section2_section4}

The analysis developed in Section~\ref{sec:fredholm} applies directly to the multistate McKendrick model through the Laplace-transformed kernel
\[
\psi(x,y;r)
=
\int_0^\alpha e^{-ra}
\int_{\Omega^d}
F(x\leftarrow \xi;a)\,K(a,\xi\leftarrow 0,y)\,d\xi\,da.
\]
For each fixed $r$ such that $\psi(\cdot,\cdot;r)\in\mathbb K$, this kernel defines a positive integral operator on the space $\mathcal X$. Hence the reference-point construction introduced in Section~\ref{sec:fredholm}---in particular the $\Gamma_n$ recursion and the Bell-polynomial representation---applies without modification at the kernel level.

The spectral conclusions of Section~\ref{sec:fredholm}, however, require additional spectral separation and reference non-resonance assumptions. In the present section, we first derive the age-zero renewal kernel $\psi(\cdot,\cdot;r)$ from the multistate McKendrick equation, and then impose the corresponding assumptions only at the biologically relevant parameter values. Under these assumptions, the determinant-free Fredholm framework developed in Section~\ref{sec:fredholm} yields explicit genealogical representations of the stable birth-state distribution and the reproductive value.

From a biological viewpoint, the kernel $\psi(x,y;r)$ represents the expected contribution from state $y$ at birth to state $x$ at the next generation, discounted by the exponential factor $e^{-ra}$. Therefore, the spectral problem for $\psi$ corresponds to an Euler--Lotka-type equation \cite{euler1760mortality,sharpe1911problem}, and the genealogical expansion derived in Section~\ref{sec:fredholm} provides a multigenerational decomposition of reproductive contributions.

This connection justifies applying the abstract theory of Section~\ref{sec:fredholm} to the concrete demographic model introduced below.

\begin{remark}[Interpretation of $\psi$ as a next-generation kernel]
\label{rem:psi_next_generation}
\rm
The kernel $\psi(x,y;r)$ admits a natural interpretation as a next-generation operator. For a given growth rate $r$, the quantity
\[
\psi(x,y;r)
=
\int_0^\alpha e^{-ra}
\int_{\Omega^d}
F(x\leftarrow \xi;a)\,K(a,\xi\leftarrow 0,y)\,d\xi\,da
\]
represents the expected contribution to individuals in state $x$ at birth from a single individual initially in state $y$, aggregated over all ages and discounted by $e^{-ra}$.

Thus, $\psi(\cdot,\cdot;r)$ plays the role of a reproduction kernel across generations. In particular, the spectral radius $\rho(\hat{\Psi}(r))$ corresponds to a type reproduction number in a weighted state-structured sense. The characteristic equation
\[
\rho(\hat{\Psi}(r))=1
\]
therefore plays the role of an Euler--Lotka condition determining the intrinsic growth rate $r$.

Within the reference-point framework, this condition is refined into the genealogical identity
\[
\sum_{n=1}^\infty \Gamma_n(y_0,y_0;r)=1
\]
at the normalized reference point, where the terms $\Gamma_n$ represent multigenerational contributions. In this sense, the present construction may be viewed as a genealogical refinement of the classical next-generation operator approach.
\end{remark}

\subsection{Assumptions of the multi-state McKendrick equation}

\noindent\textbf{Multi-state age-structured IPM.}
We consider the transition kernel
\[
K : [0,\alpha) \times \Omega^d \times [0,\alpha) \times \Omega^d \to [0,\infty),\quad
(a,x,s,y) \mapsto K(a, x \leftarrow s, y),
\]
representing the probability density of transitioning from state $y$ at age $s$ to state $x$ at age $a$. We assume:
\begin{itemize}
\item $K(a,x\leftarrow s,y)\in\mathbb K$ for $a>s$;
\item for all $N\in\mathbb N$ and $a>s$,
\begin{equation}
\sup_{y\in\Omega^d}|x|^N K(a,x\leftarrow s,y)\to0
\quad\text{as }|x|\to\infty;
\end{equation}
\item $\lim_{a\downarrow s}K(a,x\leftarrow s,y)=\delta^d(x-y)$;
\item $K(a,x\leftarrow s,y)=0$ for $a<s$;
\item (Chapman--Kolmogorov equation) for $s\le\tau\le a<\alpha$,
\begin{equation}\label{eq:mck_ck}
K(a,x\leftarrow s,y)
=
\int_{\Omega^d}K(a,x\leftarrow \tau,z)\,K(\tau,z\leftarrow s,y)\,dz;
\end{equation}
\item (Boundary) $K(\alpha,x\leftarrow s,y)=0$.
\end{itemize}
Note that $K(a,x\leftarrow s,y)$ is a (sub-)Markov transition density in the sense that
\begin{equation}
\int_{\Omega^d}K(a,x\leftarrow s,y)\,dx\le1.
\end{equation}

Let $P_t(a,x)$ denote the age-state density of a population at time $t\in[0,\infty)$, where $a\in[0,\alpha)$ is chronological age and $x\in\Omega^d\subseteq\mathbb R^d$ is a $d$-dimensional state variable. We define the multi-state age-structured IPM governed by $K$ by
\begin{equation}\label{eq:mck_ipm}
P_{t+\varepsilon}(a+\varepsilon,x)
=
\int_{\Omega^d}K(a+\varepsilon,x\leftarrow a,y)\,P_t(a,y)\,dy,
\qquad \varepsilon>0,
\end{equation}
with initial condition
\begin{equation}\label{eq:mck_initial}
P_0(a,x)=\varphi(a,x)\in L^1\big([0,\alpha)\times\Omega^d\big),
\qquad
\varphi(a,x)>0.
\end{equation}

To construct a renewal equation from \eqref{eq:mck_ipm}, we introduce a fertility function
\[
F:[0,\alpha)\times\Omega^d\times\Omega^d\to[0,\infty)
\]
that yields the inhomogeneous birth rate in age and state. We assume:
\begin{itemize}
\item $F(x\leftarrow y;a)>0$;
\item for each $a\in[0,\alpha)$, $F(x\leftarrow y;a)$ is measurable and continuous in $(x,y)\in\Omega^d\times\Omega^d$;
\item for each fixed $(a,y)$, $F(\cdot\leftarrow y;a)\in L^1(\Omega^d)$;
\item if $\Omega^{d}$ is (partially) unbounded, let $\Omega^\ell\subseteq\Omega^d$ be an unbounded subset ($0\le\ell\le d$). There exist constants $C>0$, $\beta\in\mathbb R$, and $m_i\ge0$ such that
\begin{equation}\label{eq:mck_F_growth}
F(x\leftarrow y;a)
\le
C\,e^{\beta a}\left(1+\sum_{i=1}^{\ell}|y_i|^{m_i}\right),
\end{equation}
where $y_i$ denotes the $i$th entry of $y\in\Omega^\ell$.
\end{itemize}
The generation of newborns is formulated by
\begin{equation}\label{eq:mck_birth}
P_t(0,x)
=
\int_0^\alpha\int_{\Omega^d}F(x\leftarrow y;a)\,P_t(a,y)\,dy\,da.
\end{equation}

\begin{remark}[On structural assumptions]
\label{rem:mck_assumption_structure}
\rm
In earlier formulations of multistate McKendrick-type models, monotonicity assumptions on the transition kernel were sometimes imposed for technical convenience. Such assumptions are not required for the present analysis.

In this paper, we do not assume monotonicity of the kernel $K(a,x\leftarrow s,y)$. All arguments rely instead on positivity, integrability conditions ensuring that $\psi(\cdot,\cdot;r)\in\mathbb K$, and explicit spectral separation and reference non-resonance assumptions imposed later on the age-zero renewal kernel. Thus monotonicity is not part of the standing assumptions of the theory.
\end{remark}

\subsection{Renewal equation and the role of the Laplace transform}

By the Chapman--Kolmogorov equation \eqref{eq:mck_ck}, \eqref{eq:mck_ipm} rewrites as
\begin{equation}\label{eq:mck_pt}
P_t(a,x)=
\begin{cases}
\displaystyle
\int_{\Omega^d}K(a,x\leftarrow a-t,\xi)\,\varphi(a-t,\xi)\,d\xi,
& \text{if } a\ge t,\\[2mm]
\displaystyle
\int_{\Omega^d}K(a,x\leftarrow 0,\xi)\,P_{t-a}(0,\xi)\,d\xi,
& \text{if } a<t.
\end{cases}
\end{equation}
Substituting \eqref{eq:mck_pt} into \eqref{eq:mck_birth} yields the renewal equation \cite{feller1941integral}:
\begin{align}
P_t(0,x)
&=
G_t(x)+\int_0^t\Psi(a)P_{t-a}(0,x)\,da,\label{eq:mck_renewal}\\
G_t(x)
&:=
\int_t^\alpha\int_{\Omega^d}\int_{\Omega^d}
F(x\leftarrow \xi;a)\,K(a,\xi\leftarrow a-t,y)\,\varphi(a-t,y)\,dy\,d\xi\,da,\nonumber\\
\Psi(a)f(x)
&:=
\int_{\Omega^d}\int_{\Omega^d}
F(x\leftarrow y;a)\,K(a,y\leftarrow 0,\xi)\,f(\xi)\,d\xi\,dy,\nonumber
\end{align}
for $f\in L^1(\Omega^d)$.

The Laplace transform is introduced here not as the main asymptotic tool of the section, but as the natural device for identifying the age-zero renewal kernel that governs the dominant root. For $g\in L^1(0,\infty)$ and $r\in\mathbb C$, set
\begin{equation}
\hat g(r):=\int_0^\infty e^{-rt}g(t)\,dt.
\end{equation}
Taking Laplace transforms in \eqref{eq:mck_renewal} gives
\begin{equation}\label{eq:mck_laplace}
\hat P(0,x;r)=\hat G(x;r)+\hat\Psi(r)\hat P(0,x;r),
\end{equation}
where $\hat\Psi(r):L^1(\Omega^d)\to L^1(\Omega^d)$ is
\begin{equation}\label{eq:mck_Psihat}
\hat\Psi(r)f(x)
:=
\int_0^\alpha e^{-ra}\int_{\Omega^d}\int_{\Omega^d}
F(x\leftarrow \xi;a)\,K(a,\xi\leftarrow 0,y)\,f(y)\,dy\,d\xi\,da.
\end{equation}
Formally,
\begin{equation}\label{eq:mck_formal_resolvent}
\hat P(0,x;r)=\bigl(\mathbf I-\hat\Psi(r)\bigr)^{-1}\hat G(x;r).
\end{equation}

Define the age-zero kernel
\begin{equation}\label{eq:mck_psi_kernel}
\psi(x,y;r)
:=
\int_0^\alpha e^{-ra}\int_{\Omega^d}
F(x\leftarrow \xi;a)\,K(a,\xi\leftarrow 0,y)\,d\xi\,da.
\end{equation}
Then $\hat\Psi(r)$ is precisely the integral operator induced by $\psi(\cdot,\cdot;r)$.

\begin{remark}[Scope of the Laplace-transform argument]
\rm
The formal representation \eqref{eq:mck_formal_resolvent} is useful for identifying the age-zero renewal operator, but we do not take a full inverse Laplace expansion as the main tool of this section. Indeed, the classical pole or residue expansion of the renewal solution may diverge in general \cite{schumitzky1975operator,verduyn1989exponential,lunel1990series}. The purpose of the Laplace transform here is therefore to produce the kernel $\psi(\cdot,\cdot;r)$, to which the determinant-free reference-point theory of Section~\ref{sec:fredholm} can be applied directly.
\end{remark}

\subsection{The dominant root and its eigenstructure}

For $a\in(0,\alpha)$, set
\[
B_a(x,y):=\int_{\Omega^d}F(x\leftarrow\xi;a)K(a,\xi\leftarrow0,y)\,d\xi,
\]
so that $\psi(x,y;r)=\int_0^\alpha e^{-ra}B_a(x,y)\,da$.

\begin{proposition}\label{prop:mck_psi_in_K}
Fix $r\in\mathbb R$. Assume that $(a,x,y)\mapsto B_a(x,y)$ is jointly measurable, that $B_a$ is continuous on $\overline U$ for almost every $a$, and that there is a measurable set $A\subset(0,\alpha)$ of positive measure such that $B_a(x,y)>0$ for every $a\in A$ and all $x,y\in\Omega^d$. If
\[
\int_0^\alpha e^{-ra}\left(
\|B_a\|_{\mathcal X}+\|B_a\|_{L^\infty(\Omega^d\times\Omega^d)}
+\sup_{(x,y)\in\overline U}|B_a(x,y)|
\right)da<\infty,
\]
then the kernel $\psi(\cdot,\cdot;r)$ defined by \eqref{eq:mck_psi_kernel} belongs to $\mathbb K$.
\end{proposition}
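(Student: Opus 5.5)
We verify the four defining properties of $\mathbb K$ for $\psi(\cdot,\cdot;r)=\int_0^\alpha e^{-ra}B_a\,da$ one at a time. In each case we use only the joint measurability of $(a,x,y)\mapsto B_a(x,y)$ and the single integrability hypothesis. Write $\Phi(a):=\|B_a\|_{\mathcal X}+\|B_a\|_{L^\infty}+\sup_{\overline U}|B_a|$, so that $\int_0^\alpha e^{-ra}\Phi(a)\,da<\infty$. Joint measurability together with Tonelli's theorem shows that $\psi$ is a measurable function of $(x,y)$. It is finite wherever the integral converges absolutely, and this holds a.e., and on $\overline U$ everywhere, by the bounds below.

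\textbf{Positivity.} Since $F>0$ and $K\ge0$, every $B_a\ge0$, and $e^{-ra}>0$. For fixed $(x,y)$ the integrand $a\mapsto e^{-ra}B_a(x,y)$ is therefore nonnegative and strictly positive on the set $A$, which has positive measure. Hence $\psi(x,y;r)>0$ for all $(x,y)$.

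\textbf{Mixed-norm and $L^\infty$ bounds.} For a.e.\ $y$, Tonelli gives
\[
\int_{\Omega^d}\psi(x,y;r)\,dx=\int_0^\alpha e^{-ra}\int_{\Omega^d}B_a(x,y)\,dx\,da\le\int_0^\alpha e^{-ra}\|B_a\|_{\mathcal X}\,da.
\]
One technical point needs care here. The bound $\int B_a(x,y)\,dx\le\|B_a\|_{\mathcal X}$ holds only for $y$ outside a null set $N_a$ that depends on $a$. To handle this, apply Fubini on $(0,\alpha)\times\Omega^d$ to the measurable set $\{(a,y):\int B_a(x,y)\,dx>\|B_a\|_{\mathcal X}\}$. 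Every $a$-section of this set is null, so the set itself is null, and hence almost every $y$-section is null. This gives $\|\psi\|_{\mathcal X}<\infty$. The same argument, applied to $\{(a,x,y):B_a(x,y)>\|B_a\|_{L^\infty}\}$, yields $\psi(x,y)\le\int_0^\alpha e^{-ra}\|B_a\|_{L^\infty}\,da=:M$ for a.e.\ $(x,y)$. This is the bound required in \eqref{eq:uniform-Linfty-second}.

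\textbf{Continuity on $\overline U$.} For a.e.\ $a$, the function $B_a$ is continuous on $\overline U$ and satisfies $|B_a(x,y)|\le\sup_{\overline U}|B_a|$ there. This dominating function is independent of $(x,y)$, and $e^{-ra}\sup_{\overline U}|B_a|$ is integrable in $a$. Take $(x_k,y_k)\to(x,y)$ in $\overline U$. Dominated convergence then gives $\psi(x_k,y_k;r)\to\psi(x,y;r)$, so $\psi$ is continuous on $\overline U$. This step also shows that the values of $\psi$ on $\overline U$ are defined pointwise, not merely a.e., which is what the reference-point evaluation needs.

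\textbf{Main obstacle.} The only delicate step is the measure-theoretic one: exceptional null sets depend on $a$, and they must be combined with the pointwise positivity and continuity statements. The Fubini argument above settles this. Because positivity is required for all $(x,y)$, $\psi$ should be taken as the pointwise integral defined everywhere (possibly equal to $+\infty$ off a null set). If necessary, we redefine it on a null set away from $\overline U$ so that it becomes finite while staying positive, and this modification does not affect membership in $\mathcal X$.
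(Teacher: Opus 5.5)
Your proof is correct and follows the same route as the paper: positivity from $B_a\ge0$ together with strict positivity on the positive-measure set $A$, the mixed-norm and $L^\infty$ bounds obtained by integrating the corresponding norms of $B_a$ against $e^{-ra}$, and continuity on $\overline U$ by dominated convergence with $e^{-ra}\sup_{\overline U}|B_a|$ as majorant. Your Fubini argument for the $a$-dependent exceptional null sets, and your observation that $\psi$ is finite everywhere on $\overline U$ but may need redefinition on a null set elsewhere to be real-valued, fill in measure-theoretic details that the paper's brief proof leaves implicit.
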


\begin{proof}
Positivity follows from the assumed positivity of $B_a$ on a set of ages of positive measure. Moreover,
\[
\|\psi(\cdot,\cdot;r)\|_{\mathcal X}
\le\int_0^\alpha e^{-ra}\|B_a\|_{\mathcal X}\,da,
\qquad
\|\psi(\cdot,\cdot;r)\|_{L^\infty}
\le\int_0^\alpha e^{-ra}\|B_a\|_{L^\infty}\,da.
\]
Both bounds are finite. The last term in the assumed integral is a dominating function on $\overline U$, so dominated convergence gives continuity there. These are precisely the defining properties of $\mathbb K$.
\end{proof}

\begin{remark}[Sufficient conditions for $\psi\in\mathbb K$]
\label{rem:mck_psi_conditions}
\rm
The preceding condition can be verified directly from age-dependent mixed-norm, uniform, and local bounds on the composite reproduction kernel $B_a$; it need not be deduced from a single global pointwise bound on $F$ and $K$. We impose it only at the parameter values used below: at $r_0$ for the stable eigensystem, at $0$ for cohort quantities, and on a neighbourhood of either value when derivatives are required. Thus stronger demographic conclusions may be obtained by adding the corresponding moment and spectral assumptions without restricting the basic construction to $r>\beta$.
\end{remark}

The condition $\psi(\cdot,\cdot;r)\in\mathbb K$ alone is not enough to apply the convergence theory of Section~\ref{sec:fredholm}. We therefore impose the corresponding spectral separation and reference non-resonance assumptions at the relevant root.

\begin{assumption}[Spectral separation and reference non-resonance for the renewal kernel]
\label{ass:mck_dominant_separation}
There exists $r_0>\beta$ at which the hypotheses of Proposition~\ref{prop:mck_psi_in_K} hold and such that the operator $\hat\Psi(r_0)$ induced by $\psi(\cdot,\cdot;r_0)$ satisfies:
\begin{enumerate}
\item[(i)] $1=\rho(\hat\Psi(r_0))$ is an isolated algebraically simple eigenvalue;
\item[(ii)] there exist a rank-one Riesz projection with kernel
\(\Pi_{r_0}(x,y)\),
a number $\theta_\psi\in(0,1)$, and a function $C_\psi\in\mathcal X$ with $C_\psi(x,y)>0$ such that
\begin{equation}\label{eq:mck_psi_iterates}
\psi^{(n)}(x,y;r_0)
=
\Pi_{r_0}(x,y)+R_n^\psi(x,y),
\qquad
|R_n^\psi(x,y)|\le C_\psi(x,y)\theta_\psi^n,
\qquad n\ge1;
\end{equation}
\item[(iii)] the reference point $y_0$ is chosen so that
\(\kappa_\psi:=\Pi_{r_0}(y_0,y_0)>0\), and we define the normalized
eigenfunction product \(U_{r_0}:=\Pi_{r_0}/\kappa_\psi\), for which
\begin{equation}\label{eq:mck_ref_normalization}
U_{r_0}(y_0,y_0)=1.
\end{equation}
\item[(iv)] with \(r_n^\psi:=R_n^\psi(y_0,y_0)\) and
\[
E_\psi(t):=\sum_{n=1}^{\infty}r_n^\psi t^n,
\]
there exists \(r_\ast\) such that
\[
1<r_\ast<\theta_\psi^{-1}
\]
and
\begin{equation}\label{eq:mck_reference_denominator}
1+(\kappa_\psi-1)t+(1-t)E_\psi(t)\neq0
\qquad (|t|\le r_\ast).
\end{equation}
\end{enumerate}
\end{assumption}

\begin{remark}
\rm
Assumption~\ref{ass:mck_dominant_separation} restates, for the age-zero renewal
operator, the two assumptions used in Section~\ref{sec:fredholm}: items (i)--(ii)
provide dominant spectral separation, item (iii) fixes the normalized reference
pair, and item (iv) is reference non-resonance. It separates the analytic issue
$\psi(\cdot,\cdot;r)\in\mathbb K$ from the spectral and scalar denominator
conditions needed for convergence at the dominant root. The corresponding
mixed-norm and exponential-moment conditions are natural in demographic
applications. In particular, the exponential-moment requirement follows from
the preceding bound when the maximal age is finite, and it also covers
infinite-age models with sufficiently light survival and reproduction tails.
\end{remark}

Under Assumption~\ref{ass:mck_dominant_separation}, define $\psi^{(1)}:=\psi$ and, for $n\ge1$,
\[
\psi^{(n+1)}(x,y;r_0)
:=
\int_{\Omega^d}\psi(x,\xi;r_0)\psi^{(n)}(\xi,y;r_0)\,d\xi.
\]
Define, for $n\ge1$,
\begin{align}
\psi_n(x,y;r_0)
:=\ &\psi^{(n)}(x,y;r_0)\nonumber\\
&+
\sum_{\ell=1}^{n-1}(-1)^\ell\sum_{k=\ell}^{n-1}
\psi^{(n-k)}(x,y;r_0)\,
\widehat B_{k,\ell}\!\bigl(\psi^{(1)},\psi^{(2)},\dots,\psi^{(k)};y,y\bigr)\Bigr|_{r=r_0},
\label{eq:mck_psin}\\
\psi_1(x,y;r_0)&:=\psi(x,y;r_0).\nonumber
\end{align}

\begin{proposition}\label{prop:mck_dominant_root}
Suppose there exists $r_0\in\mathbb R$ such that $\rho(\hat\Psi(r_0))=1$ and Assumption~\ref{ass:mck_dominant_separation} holds. Assume also that, for some $\varepsilon>0$, the hypotheses of Proposition~\ref{prop:mck_psi_in_K} hold at $r_0-\varepsilon$. Then $r_0$ is a simple root of the real equation
\[
\rho(\hat\Psi(r))=1,
\]
and $1\notin\sigma(\hat\Psi(z))$ for every $z\in\mathbb C$ with $\Re z>r_0$. Thus $r_0$ is a rightmost characteristic value.
\end{proposition}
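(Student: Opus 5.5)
The plan is to combine monotonicity of \(r\mapsto\hat\Psi(r)\) in the positive-operator order with the Perron structure in Assumption~\ref{ass:mck_dominant_separation}. The argument has three parts: strict monotonicity of the real spectral radius, a reduction of the complex problem to the real line, and simplicity of the root.

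\textbf{Strict monotonicity.} For real \(r_1<r_2\) in the range where Proposition~\ref{prop:mck_psi_in_K} applies, we have the kernel inequality
\[
\psi(x,y;r_1)-\psi(x,y;r_2)=\int_0^\alpha (e^{-r_1a}-e^{-r_2a})B_a(x,y)\,da>0,
\]
because \(B_a>0\) on a set of ages \(A\) of positive measure. Hence \(\hat\Psi(r_1)\ge\hat\Psi(r_2)\), and the difference is strictly positive as a kernel. This order comparison only requires \(r_1,r_2\) where \(\psi\) is defined. The standing hypothesis at \(r_0-\varepsilon\), together with the decay of \(e^{-ra}\) as \(r\) grows, gives \(\psi(\cdot,\cdot;r)\in\mathbb K\) on \([r_0-\varepsilon,\infty)\). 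I will then show \(\rho(\hat\Psi(r))<1\) for \(r>r_0\) by pairing with the Perron eigenfunctions. Write \(\Pi_{r_0}=\kappa_\psi u(x)v(y)\) with \(u,v>0\), \(\hat\Psi(r_0)u=u\) and \(\hat\Psi(r_0)^*v=v\). For \(r>r_0\), suppose \(\hat\Psi(r)\) had a nonnegative eigenfunction \(\phi\) with eigenvalue \(\mu=\rho(\hat\Psi(r))\ge1\). Then
\[
\mu\langle v,\phi\rangle=\langle v,\hat\Psi(r)\phi\rangle<\langle v,\hat\Psi(r_0)\phi\rangle=\langle v,\phi\rangle,
\]
which is a contradiction. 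If the Perron eigenfunction of \(\hat\Psi(r)\) is not available, I will instead use the iterate bound \(\psi^{(n)}(\cdot,\cdot;r)\le\psi^{(n)}(\cdot,\cdot;r_0)=\Pi_{r_0}+R^\psi_n\). This gives boundedness of the iterates, hence \(\rho\le1\). A strict inequality then follows from a perturbation argument: \(\hat\Psi(r)\le\hat\Psi(r_0)-\eta\,\mathbf{1}_A\otimes\mathbf{1}_B\)-type, so \(\langle v,\hat\Psi(r)^n u\rangle\le(1-\eta')^n\).

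\textbf{Complex \(z\).} For \(\Re z>r_0\), the pointwise inequality \(|\psi(x,y;z)|\le\psi(x,y;\Re z)\) holds. If \(\hat\Psi(z)f=f\) with \(f\ne0\), then \(|f|\le\hat\Psi(\Re z)|f|\). Pairing with \(v\) and using \(\hat\Psi(\Re z)\le\hat\Psi(r_0)\) strictly gives \(\langle v,|f|\rangle<\langle v,|f|\rangle\), a contradiction. Hence \(1\notin\sigma_p(\hat\Psi(z))\). To upgrade from point spectrum to the full spectrum near \(1\), the plan is to note that \(\rho(\hat\Psi(z))\le\rho(\hat\Psi(\Re z))<1\) by domination of the iterates. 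This makes \(1\) automatically a resolvent point and is the cleaner route.

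\textbf{Simplicity of \(r_0\).} I will use the Perron perturbation formula. Because \(1\) is algebraically simple, the eigenvalue \(\lambda(r)\) of \(\hat\Psi(r)\) near \(1\) is analytic in \(r\) near \(r_0\); this uses the hypothesis at \(r_0-\varepsilon\) to get a two-sided neighbourhood. Its derivative is
\[
\lambda'(r_0)=\frac{\langle v,\partial_r\hat\Psi(r_0)u\rangle}{\langle v,u\rangle}=-\frac{\int\!\!\int\!\int_0^\alpha a\,e^{-r_0a}v(x)B_a(x,y)u(y)\,da\,dx\,dy}{\langle v,u\rangle}<0.
\]
Thus \(\rho(\hat\Psi(r))-1\) changes sign transversally at \(r_0\), which is simplicity. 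The main obstacle is justifying the analytic dependence and differentiability of \(\hat\Psi(r)\) in operator norm near \(r_0\). The exponential-moment factor \(a e^{-r_0a}\) must be controlled by \(e^{-(r_0-\varepsilon)a}\), which is exactly where the integrability at \(r_0-\varepsilon\) is used. Positivity and finiteness of \(\langle v,u\rangle\) follow from \(\kappa_\psi>0\) and the Riesz projection normalization \(\Pi^2=\Pi\).
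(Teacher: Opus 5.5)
You are on the paper's route in two places: the domination step \(|\hat\Psi(z)^nf|\le\hat\Psi(\Re z)^n|f|\) for complex \(z\), and the simplicity argument via \(\lambda'(r_0)<0\). The gap is in your first step. The strict inequality \(\rho(\hat\Psi(r))<1\) for real \(r>r_0\) is not established, and your complex-\(z\) step depends on it.

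\begin{itemize}
\item The pairing argument needs a nonnegative eigenfunction \(\phi\) of \(\hat\Psi(r)\) at \(\mu=\rho(\hat\Psi(r))\). Nothing in the hypotheses gives compactness or quasi-compactness of \(\hat\Psi(r)\) for \(r\neq r_0\), so the spectral radius need not be an eigenvalue. Producing \(\phi\) would require a domination theorem for the essential spectral radius together with a Krein--Rutman-type result for poles, and you invoke neither.
\item The fallback fails too. A bound \(\hat\Psi(r_0)-\hat\Psi(r)\ge\eta\,\mathbf{1}_A\otimes\mathbf{1}_B\) only removes mass locally and does not give a uniform geometric decay. Even a bound \(\langle v,\hat\Psi(r)^nu\rangle\le(1-\eta')^n\) would not control \(\|\hat\Psi(r)^n\|_{L^1\to L^1}\). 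This is because \(u\in L^1\) is not an order unit, and \(v\) need not be bounded away from zero on an unbounded \(\Omega^d\).
\end{itemize}

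The missing step follows from ingredients you already have, and this is how the paper argues.
\begin{itemize}
\item Since \(C_\psi\in\mathcal X\), you have \(\|\hat\Psi(r_0)^n-\Pi_{r_0}\|\le\|C_\psi\|_{\mathcal X}\theta_\psi^n\). Hence \(\sigma(\hat\Psi(r_0))\subset\{1\}\cup\{|\mu|\le\theta_\psi\}\).
\item Your simplicity step already needs operator-norm continuity of \(r\mapsto\hat\Psi(r)\) near \(r_0\). Combined with upper semicontinuity of the spectrum and continuity of the Riesz projection, this shows that for \(s\) near \(r_0\), \(\sigma(\hat\Psi(s))\) is the simple eigenvalue \(\lambda(s)\) plus a part inside a disk of radius \(<1\). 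The eigenvalue \(\lambda(s)\) is real because \(\hat\Psi(s)\) is a real operator.
\item Then \(\lambda'(r_0)<0\) gives \(\rho(\hat\Psi(s))=\lambda(s)<1\) for \(s\in(r_0,r_0+\delta)\).
\item To reach all real \(s>r_0\), you only need the non-strict comparison you already proved: \(0\le\hat\Psi(s)\le\hat\Psi(s_1)\) for \(s\ge s_1>r_0\). Strict monotonicity under strict domination is never required.
\end{itemize}
With this in place, your domination argument for complex \(z\) completes the proof. Your claim that \(\rho(\hat\Psi(r))-1\) changes sign transversally at \(r_0\) silently uses the same persistence of the spectral gap, so state it explicitly there as well.
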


\begin{proof}
The additional integrability assumption makes $z\mapsto\hat\Psi(z)$ holomorphic near $r_0$. Let $w_0$ and $v_0$ be positive right and left eigenfunctions at $r_0$. Since $1$ is an isolated simple eigenvalue, it has a local analytic eigenvalue branch $\lambda(r)$ with $\lambda(r_0)=1$. If $\hat B_a$ denotes the operator induced by $B_a$, then
\[
\lambda'(r_0)
=
\frac{\langle v_0,\hat\Psi'(r_0)w_0\rangle_x}
{\langle v_0,w_0\rangle_x}
=-
\frac{\displaystyle\int_0^\alpha a e^{-r_0a}
\langle v_0,\hat B_aw_0\rangle_x\,da}
{\langle v_0,w_0\rangle_x}<0.
\]
Thus $r_0$ is a simple root. Moreover, $C_\psi\in\mathcal X$ gives an operator-norm spectral gap at $r_0$, so $\rho(\hat\Psi(s))<1$ for real $s>r_0$ sufficiently close to $r_0$. Since $0\le\hat\Psi(s)\le\hat\Psi(s_1)$ whenever $s\ge s_1>r_0$, monotonicity of the spectral radius extends this inequality to every real $s>r_0$.

For $z\in\mathbb C$ with $s:=\Re z>r_0$, kernel domination gives
\[
|\hat\Psi(z)^n f|\le\hat\Psi(s)^n|f|,
\qquad n\ge1.
\]
Hence $\rho(\hat\Psi(z))\le\rho(\hat\Psi(s))<1$, and the Neumann series for $(\mathbf I-\hat\Psi(z))^{-1}$ converges. Therefore $1\notin\sigma(\hat\Psi(z))$.
\end{proof}
\begin{remark}[Reference-point characteristic equation and Euler--Lotka generalization]
\label{rem:mck_euler_lotka}
\rm
The spectral condition
\[
\rho(\hat\Psi(r_0))=1
\]
admits an explicit representation in terms of the reference-point iterates.

Indeed, evaluating the normalized expansion at the reference point $y_0$ yields
\begin{equation}\label{eq:mck_characteristic_refpoint}
\sum_{n=1}^{\infty} \psi_n(y_0,y_0;r_0) = 1.
\end{equation}
This identity can be interpreted as a characteristic equation expressed entirely in terms of genealogical contributions at the reference state.

Equation \eqref{eq:mck_characteristic_refpoint} provides a continuous-state generalization of the classical Euler--Lotka equation \cite{euler1760mortality,sharpe1911problem}. In the one-dimensional age-structured setting, the Euler--Lotka equation determines the growth rate $r$ through a balance of discounted reproduction. In the present framework, this balance is encoded at the kernel level: the total multigenerational contribution returning to the reference state equals one.

Thus, the condition $\rho(\hat\Psi(r_0))=1$ is not merely a spectral statement, but admits a concrete representation as a genealogical renewal identity.
\end{remark}

By Proposition~\ref{prop:mck_psi_in_K}, Assumption~\ref{ass:mck_dominant_separation}, and the results of Section~\ref{sec:fredholm} applied to $\psi(\cdot,\cdot;r_0)$, there exist a nonzero, nonnegative function $w_0(x,y)\in L^1(\Omega^d\times\Omega^d)$ and a nonzero, nonnegative function $v_0(y,x)\in L^\infty(\Omega^d\times\Omega^d)$ such that
\[
w_0(x,y)=\int_{\Omega^d}\psi(x,\xi;r_0)\,w_0(\xi,y)\,d\xi,
\qquad
v_0(y,x)=\int_{\Omega^d}v_0(y,\xi)\,\psi(\xi,x;r_0)\,d\xi.
\]

Choose a diagonal reference point $y_0\in\Omega^d$ such that $w_0(y_0)\neq0$. Then Proposition~\ref{prop:dominant_eigenvalue_riesz} and Corollary~\ref{cor:gamma_normalized_convergence}, applied to $\psi(\cdot,\cdot;r_0)$, show that
\begin{equation}\label{eq:mck_stable_birth}
w_0(x,y_0)=w_0(y_0)\sum_{n=1}^{\infty}\psi_n(x,y_0;r_0),
\qquad
w_0(y_0)\neq0.
\end{equation}

Similarly, define, for $m\ge1$,
\begin{align}
\psi_m^*(y_0,x;r_0)
:=\ &\psi^{(m)}(y_0,x;r_0)\nonumber\\
&+
\sum_{\ell=1}^{m-1}(-1)^\ell\sum_{k=\ell}^{m-1}
\widehat B_{k,\ell}\!\bigl(\psi^{(1)},\psi^{(2)},\dots,\psi^{(k)};y_0,y_0\bigr)\Bigr|_{r=r_0}\,
\psi^{(m-k)}(y_0,x;r_0),
\label{eq:mck_psistar}
\end{align}
and thus
\begin{equation}\label{eq:mck_adj_birth}
v_0(y_0,x)=v_0(y_0)\left(\sum_{m=1}^\infty \psi_m^*(y_0,x;r_0)\right),
\qquad
v_0(y_0)\neq0.
\end{equation}

\subsection{Asymptotics and demographic interpretation}

The main object of this section is the age-zero eigenstructure encoded by $\psi(\cdot,\cdot;r_0)$. Once the dominant root $r_0$ is determined, the corresponding stable age-state profile is propagated by the survival kernel:
\[
w_0(a,x,y_0)
:=
e^{-r_0a}\int_{\Omega^d}K(a,x\leftarrow 0,\xi)\,w_0(\xi,y_0)\,d\xi.
\]
Accordingly, \eqref{eq:mck_stable_birth} expresses the stable state distribution at age zero:
\[
w_0(0,x,y_0)=w_0(x,y_0).
\]

In the simple IPM, the direct contribution is a discrete sum over time steps. In the multistate McKendrick equation with continuous age, the kernel $\psi(x,y;r_0)$ represents the total lifetime reproductive contribution of an individual starting from state $y$ to offspring with initial state $x$. The functions $\psi_n$ in \eqref{eq:mck_psin}, which constitute $w_0$, represent the contribution of each generation. Thus, in the model where $F(x\leftarrow \xi;a)$ determines both the number and the initial state of offspring, the stable density \eqref{eq:mck_stable_birth} aggregates contributions over all generations.

For the contribution of the initial condition, we compute
\begin{align}
\langle v_0,\hat G(r_0)\rangle_x
&=
\int_{\Omega^d}v_0(y_0,x)\,\hat G(x;r_0)\,dx \nonumber\\
&=
\int_0^\alpha\int_{\Omega^d}v_0(a,y_0,\eta)\,\varphi(a,\eta)\,d\eta\,da,
\label{eq:mck_prefactor_num}
\end{align}
with
\begin{equation}\label{eq:mck_v0_age}
\begin{split}
v_0(a,y_0,x)
:=&
\int_{\Omega^d}v_0(y_0,\xi)
\int_a^\alpha e^{-r_0(\tau-a)}
\int_{\Omega^d}
F(\xi\leftarrow \eta;\tau)\,K(\tau,\eta\leftarrow a,x)\,d\eta\,d\tau\,d\xi.
\end{split}
\end{equation}
Thus $\langle v_0,\hat G(r_0)\rangle_x$ is the direct contribution of the initial age-state distribution $\varphi$ to the descendants of the reference state.

Whenever derivatives of order $k$ are used below, we additionally assume the bound in Proposition~\ref{prop:mck_psi_in_K} with the integrand multiplied by $a^k$; for a full cumulant expansion, we assume the corresponding exponential moment locally in $r$.

In all series manipulations below, we assume absolute rearrangeability with the integrals and dual pairings involved.
This rearrangeability condition is automatic in finite-state models and is satisfied in the usual bounded-kernel settings under norm-dominated convergence.

For the generation-time denominator, we write
\begin{align}
&-
\left\langle
v_0,\frac{d}{dr}\hat\Psi(r)\Big|_{r=r_0}w_0
\right\rangle_x
=
-\int_{\Omega^d}\int_{\Omega^d}
v_0(y_0,x)\,
\left.\frac{d}{dr}\psi(x,\xi;r)\right|_{r=r_0}\,
w_0(\xi,y_0)\,d\xi\,dx \nonumber\\
&=
-v_0(y_0)w_0(y_0)
\sum_{n=2}^{\infty}\sum_{m=1}^{n-1}
\int_{\Omega^d}\int_{\Omega^d}
\psi_{n-m}^{*}(y_0,x;r_0)\,
\left.\frac{d}{dr}\psi(x,\xi;r)\right|_{r=r_0}\,
\psi_{m}(\xi,y_0;r_0)\,d\xi\,dx,
\label{eq:mck_prefactor_den}
\end{align}
where
\begin{equation}\label{eq:mck_dpsi}
-\left.\frac{d}{dr}\psi(x,\xi;r)\right|_{r=r_0}
=
\int_0^\alpha
a\,e^{-r_0a}
\int_{\Omega^d}
F(x\leftarrow \eta;a)\,K(a,\eta\leftarrow 0,\xi)\,d\eta\,da.
\end{equation}

To normalize, set
\begin{equation}\label{eq:mck_En_norm}
\langle v_0,\hat\Psi(r_0)w_0\rangle_x
=
\langle v_0,w_0\rangle_x
=
v_0(y_0)w_0(y_0)\,E_n(y_0),
\end{equation}
where the mean contributing generation number is
\begin{equation}\label{eq:mck_En}
E_n(y_0):=\sum_{n=1}^{\infty}n\,\psi_n(y_0,y_0;r_0).
\end{equation}
Although the individual coefficients may have either sign, \eqref{eq:mck_En_norm}
shows that
\[
E_n(y_0)=\frac{\langle v_0,w_0\rangle_x}{v_0(y_0)w_0(y_0)}>0.
\]
The term ``mean'' is used by analogy with the mean recurrence generation in a
taboo decomposition; it does not assert that the individual coefficients form
a probability mass function.

Dividing \eqref{eq:mck_prefactor_den} by \eqref{eq:mck_En_norm} yields the average generation interval
\[
\bar g_{\mathrm L}
:=
\frac{-1}{E_n(y_0)}
\sum_{n=2}^{\infty}\sum_{m=1}^{n-1}
\int_{\Omega^d}\int_{\Omega^d}
\psi_{n-m}^{*}(y_0,x;r_0)
\left.\frac{d}{dr}\psi(x,\xi;r)\right|_{r=r_0}
\psi_m(\xi,y_0;r_0)\,d\xi\,dx.
\]

Using the interpretations of \eqref{eq:mck_v0_age} and \eqref{eq:mck_adj_birth}, equation \eqref{eq:mck_prefactor_num} yields the direct contribution
\[
\bar R_{\mathrm L}(y_0)
:=
\frac{\langle v_0,\varphi\rangle_{a,x}}{v_0(y_0)},
\qquad
\langle v_0,\varphi\rangle_{a,x}
:=
\int_0^\alpha\int_{\Omega^d}v_0(a,y_0,x)\varphi(a,x)\,dx\,da.
\]

For the cohort quantities below, assume additionally that the hypotheses of Proposition~\ref{prop:mck_psi_in_K} hold at $r=0$ and that the corresponding spectral separation and reference non-resonance assumptions hold for $\hat\Psi(0)$.
We may then define cohort-based quantities in terms of the spectral radius
\[
\lambda_0:=\rho(\hat\Psi(0))
\]
of the next-generation operator $\hat\Psi(0)$.  Let $\bar w_0$ and $\bar v_0$
be positive right and left eigenfunctions of $\hat\Psi(0)$, with eigenvalue
$\lambda_0$, represented by the corresponding normalized $\psi_n$-series, and
write $\bar w(y_0):=\bar w_0(y_0)>0$ and
$\bar v(y_0):=\bar v_0(y_0)>0$.
Thus
\[
E_n^c(y_0)
=\frac{\langle\bar v_0,\bar w_0\rangle_x}
{\bar v(y_0)\bar w(y_0)}>0.
\]
The cohort-based quantities are:
\begin{align}
E_n^c(y_0)
&:=
\sum_{n=1}^{\infty}\frac{n}{\lambda_0^n}\,\psi_n(y_0,y_0;0),\label{eq:mck_Enc}\\
\bar g_0
&:=
\frac{-1}{E_n^c(y_0)}
\sum_{n=2}^{\infty}\frac{1}{\lambda_0^{n+1}}
\sum_{m=1}^{n-1}
\int_{\Omega^d}\int_{\Omega^d}
\psi_{n-m}^{*}(y_0,x;0)
\left.\frac{d}{dr}\psi(x,\xi;r)\right|_{r=0}
\psi_m(\xi,y_0;0)\,d\xi\,dx,\nonumber\\
\bar R_0(y_0)
&:=
\sum_{n=1}^{\infty}\frac{1}{\lambda_0^{n}}
\int_0^\alpha\int_{\Omega^d}\int_{\Omega^d}
\psi_n^*(y_0,\xi;0)
\int_a^\alpha\int_{\Omega^d}
F(\xi\leftarrow \eta;\tau)\,
K(\tau,\eta\leftarrow a,x)\,
\varphi(a,x)\,d\eta\,d\tau\,d\xi\,dx\,da.\nonumber
\end{align}

In this multistate setting, $\lambda_0$ corresponds to the basic (net) reproduction number \cite{inaba2017age}. The average life expectancy adjusted for the population growth rate is
\[
e_0(r_0)
:=
\frac{\int_0^\alpha\int_{\Omega^d}w_0(a,x,y_0)\,dx\,da}
{\int_{\Omega^d}w_0(0,\xi,y_0)\,d\xi}.
\]
As a new demographic indicator, we define the per-generation total contribution of the cohort with initial state $y_0$, denoted by $\Upsilon(y_0,r_0)$, by
\begin{equation}
\Upsilon(y_0,r_0)
:=
\frac{\int_{\Omega^d}w_0(0,\xi,y_0)\,d\xi}{w_0(y_0)E_n(y_0)}.
\end{equation}
At replacement level ($r_0=0$ so $\lambda_0=1$),
\[
E_n(y_0)=E_n^c(y_0),
\qquad
\bar g_{\mathrm L}=\bar g_0,
\qquad
\bar R_{\mathrm L}(y_0)=\bar R_0(y_0).
\]
Thus, at replacement level, the stationary population is characterized by the contribution to descendants with a specific initial state $y_0$, the generation interval, the life expectancy at birth, and the total contribution of the cohort with initial state $y_0$, consistent with the classical McKendrick/Leslie theory. Furthermore, the introduction of the average contributory generation number $E_n^c(y_0)$ together with $\Upsilon(y_0,0)$ provides genealogical resolution beyond earlier models.

\subsection{Other demographic indicators and statistical quantities}

Similarly, the type reproduction number is the direct contribution from an ancestor with the same initial condition $y_0$:
\begin{equation}
T_{y_0}
=
\sum_{n=1}^{\infty}\psi_n(y_0,y_0;0).
\end{equation}
Let
\[
\lambda_0=\rho(\Psi(0)),
\]
where $\rho(\Psi(0))$ denotes the spectral radius of the next-generation operator. 
This quantity can be interpreted as the basic reproduction number, representing the total expected reproductive contribution generated by a single individual.
\begin{corollary}
 By the definition of the basic reproduction number $\lambda_0$,
\begin{equation}
1
=
\sum_{n=1}^{\infty}\frac{1}{\lambda_0^{n}}\psi_n(y_0,y_0;0).
\end{equation}
Hence, if $\lambda_0=1$, then $T_{y_0}=1$ and $r_0=0$. Moreover, by monotonicity of $\rho(\hat\Psi(r))$ in $r$, it follows that $\lambda_0<1$ implies $T_{y_0}<1$ and $r_0<0$.
\end{corollary}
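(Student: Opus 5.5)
The corollary has three claims: the normalization identity at $r=0$, the consequence $T_{y_0}=1$ and $r_0=0$ when $\lambda_0=1$, and the subcritical implication. I will prove them in that order.

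\textbf{Normalization identity.} I apply Proposition~\ref{prop:dominant_eigenvalue_riesz} to the kernel $\psi(\cdot,\cdot;0)$ with the diagonal reference pair $(y_0,y_0)$. This is legitimate because the standing cohort hypotheses require that Proposition~\ref{prop:mck_psi_in_K} holds at $r=0$, so $\psi(\cdot,\cdot;0)\in\mathbb K$. They also impose spectral separation and reference non-resonance for $\hat\Psi(0)$ with dominant eigenvalue $\lambda_0=\rho(\hat\Psi(0))$. I first observe that the coefficients $\psi_n(\cdot,\cdot;0)$ defined through \eqref{eq:mck_psin} coincide with the $\Gamma_n(\cdot,\cdot;y_0,y_0)$ built from $\psi(\cdot,\cdot;0)$. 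This follows from Lemma~\ref{lem:Gamma_Bell_nonHS}: the Bell-polynomial expression in \eqref{eq:mck_psin} is exactly \eqref{eq:HS_gamma}, with reference scalars $b_k=\psi^{(k)}(y_0,y_0;0)$. Then \eqref{eq:reference_identity_rho} gives $1=\sum_{n\ge1}\lambda_0^{-n}\psi_n(y_0,y_0;0)$. Absolute convergence of this series is supplied by Corollary~\ref{cor:gamma_normalized_convergence}.

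\textbf{The case $\lambda_0=1$.} Setting $\lambda_0=1$ in the identity gives $T_{y_0}=\sum_n\psi_n(y_0,y_0;0)=1$ immediately. For $r_0=0$, I use that $\rho(\hat\Psi(0))=1$ means $0$ solves the characteristic equation $\rho(\hat\Psi(r))=1$. The map $r\mapsto\rho(\hat\Psi(r))$ is nonincreasing on the reals: $\psi(x,y;s)\le\psi(x,y;s_1)$ pointwise for $s\ge s_1$, and the spectral radius of positive kernel operators is monotone under this ordering. Proposition~\ref{prop:mck_dominant_root} shows that the real root is simple and rightmost, since $\rho<1$ to its right. I then argue that the root is unique. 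Strict decrease would follow from positivity of the kernel together with the derivative formula for $\lambda'(r)$ in that proof, applied at any root. This identifies $r_0=0$.

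\textbf{The case $\lambda_0<1$.} For $r_0<0$: if $r_0\ge0$, monotonicity would give $\rho(\hat\Psi(0))\ge\rho(\hat\Psi(r_0))=1$, a contradiction. So $r_0<0$.

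For $T_{y_0}<1$, I use the function $F(\lambda)=\sum_n\lambda^{-n}\psi_n(y_0,y_0;0)$, which satisfies $F(\lambda_0)=1$. By Lemma~\ref{lem:reference_generating_identity} with $t=1/\lambda$,
\[
F(\lambda)=\frac{\mathcal K(1/\lambda)}{1+\mathcal K(1/\lambda)},
\qquad
\mathcal K(t)=\sum_{n\ge1}\psi^{(n)}(y_0,y_0;0)\,t^n .
\]
The coefficients $\psi^{(n)}(y_0,y_0;0)$ are positive, so $\mathcal K$ is increasing in $t$ while finite, and $s\mapsto s/(1+s)$ is increasing. Hence $F$ is strictly decreasing in $\lambda$ on the range where $\mathcal K(1/\lambda)$ is finite, which is $\lambda>\lambda_0$. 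Since $\lambda_0<1$, we have $1>\lambda_0$, so $T_{y_0}=F(1)<F(\lambda_0^{+})$.

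\textbf{Main obstacle.} The difficulty is the comparison at the endpoint $\lambda_0$, because $\mathcal K(1/\lambda)$ diverges as $\lambda\downarrow\lambda_0$. I would handle this via the analytic continuation in Corollary~\ref{cor:gamma_normalized_convergence}: the quotient has a removable singularity at $t=\lambda_0^{-1}$ and tends to $1$ there. Therefore $F(\lambda)\uparrow1$ as $\lambda\downarrow\lambda_0$, and strict monotonicity gives $F(1)<1$, that is, $T_{y_0}<1$. The same identity $F(\lambda)=\mathcal K/(1+\mathcal K)<1$ for every $\lambda>\lambda_0$ gives this conclusion directly.
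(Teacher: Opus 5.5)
Your proof is correct. For most of the statement it follows the paper's own, very brief, justification:
\begin{itemize}
\item the identity is the normalization \eqref{eq:reference_identity_rho} of Proposition~\ref{prop:dominant_eigenvalue_riesz}, applied to $\psi(\cdot,\cdot;0)$ at the diagonal pair;
\item $T_{y_0}=1$ is the case $\lambda_0=1$ of that identity;
\item $r_0<0$ follows from $r\mapsto\rho(\hat\Psi(r))$ being nonincreasing.
\end{itemize}

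You also supply the uniqueness of the real root that the paper silently uses to conclude $r_0=0$. The derivative at a single root already suffices. If $r_0>0$, monotonicity forces $\rho(\hat\Psi(r))\equiv1$ on $[0,r_0]$. This contradicts $\lambda'(r_0)<0$, since by the spectral gap the analytic branch is the spectral radius near $r_0$. If $r_0<0$, Proposition~\ref{prop:mck_dominant_root} gives $\rho(\hat\Psi(0))<1$.

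The genuine difference is the step $T_{y_0}<1$. The paper gives no separate proof and attributes this to monotonicity of $\rho(\hat\Psi(r))$ in $r$. That alone does not control a series whose coefficients $\psi_n(y_0,y_0;0)$ may be signed. The same unproved step underlies the ``strictly decreasing'' claim behind the Section~3 proposition on $T_{y_0}$.

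Your use of Lemma~\ref{lem:reference_generating_identity} replaces that step with the quotient $\mathcal K/(1+\mathcal K)$, whose coefficients $\psi^{(n)}(y_0,y_0;0)$ are positive. This gives $0<T_{y_0}<1$ outright, so your endpoint discussion in the last paragraph is unnecessary.

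What you should add is one sentence explaining why the lemma, proved only for small $|t|$, holds at $t=1$. By Corollary~\ref{cor:gamma_normalized_convergence}, the series $\sum_n\psi_n(y_0,y_0;0)t^n$ has radius at least $r_\ast>\lambda_0^{-1}>1$. By non-resonance for $\hat\Psi(0)$,
\[
1+\mathcal K(t)=\frac{1+(\kappa-1)\lambda_0t+(1-\lambda_0t)E(t)}{1-\lambda_0t}\neq0
\qquad(|t|<\lambda_0^{-1}).
\]
Hence both sides are analytic on $|t|<\lambda_0^{-1}$, and they coincide there by the identity theorem. This also shows exactly where non-resonance at $r=0$ enters: it guarantees that $T_{y_0}$ is an ordinary convergent series equal to that quotient.
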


\noindent\textbf{Associated probabilities (definitions).}
We recall the two associated probability measures on $[0,\alpha)$:
\begin{equation}
\begin{aligned}
\mathbb P_{\mathrm L}([0,a);y_0)
&:=
\frac{1}{E_n(y_0)}
\sum_{n=2}^{\infty}\sum_{m=1}^{n-1}
\int_0^a e^{-r_0\tau}
\int_{\Omega^d}\int_{\Omega^d}
\psi_{n-m}^{*}(y_0,x;r_0)\\
&\qquad\times
\int_{\Omega^d}
F(x\leftarrow \eta;\tau)\,
K(\tau,\eta\leftarrow 0,\xi)\,
\psi_m(\xi,y_0;r_0)\,d\eta\,d\xi\,dx\,d\tau,
\end{aligned}
\label{eq:mck_PL}
\end{equation}
\begin{equation}
\begin{aligned}
\mathbb P_{0}([0,a);y_0)
&:=
\frac{1}{E_n^{c}(y_0)}
\sum_{n=2}^{\infty}\frac{1}{\lambda_0^{n+1}}
\sum_{m=1}^{n-1}\int_0^a
\int_{\Omega^d}\int_{\Omega^d}
\psi_{n-m}^{*}(y_0,x;0)\\
&\qquad\times
\int_{\Omega^d}
F(x\leftarrow \eta;\tau)\,
K(\tau,\eta\leftarrow 0,\xi)\,
\psi_m(\xi,y_0;0)\,d\eta\,d\xi\,dx\,d\tau,
\end{aligned}
\label{eq:mck_P0}
\end{equation}
with the normalizing constants
\[
E_n(y_0):=\sum_{n=1}^{\infty}n\,\psi_n(y_0,y_0;r_0),
\qquad
E_n^{c}(y_0):=\sum_{n=1}^{\infty}\frac{n}{\lambda_0^{n}}\psi_n(y_0,y_0;0).
\]
Resummation gives the nonnegative densities
\[
\begin{aligned}
\frac{d\mathbb P_{\mathrm L}}{da}
&=\frac{e^{-r_0a}}{\langle v_0,w_0\rangle_x}
\int v_0(y_0,x)F(x\leftarrow\eta;a)
K(a,\eta\leftarrow0,\xi)w_0(\xi,y_0)\,d\eta\,d\xi\,dx,\\
\frac{d\mathbb P_0}{da}
&=\frac{1}{\lambda_0\langle\bar v_0,\bar w_0\rangle_x}
\int \bar v_0(y_0,x)F(x\leftarrow\eta;a)
K(a,\eta\leftarrow0,\xi)\bar w_0(\xi,y_0)\,d\eta\,d\xi\,dx.
\end{aligned}
\]
Their total masses are one by the corresponding eigen-equations.  Thus these
are positive measures even when their coefficient expansions contain signed
terms.

\medskip
\noindent\textbf{Cumulant expansions under the associated probabilities.}
Write the expectations under \eqref{eq:mck_PL}--\eqref{eq:mck_P0} as
\[
\mathbb E_{y_0}[\cdot]:=\int(\cdot)\,\mathbb P_{\mathrm L}(da;y_0),
\qquad
\mathbb E^c_{y_0}[\cdot]:=\int(\cdot)\,\mathbb P_{0}(da;y_0).
\]
Define the cumulant generating functions
\[
\Theta_{\mathrm L}(t;y_0):=\ln \mathbb E_{y_0}[e^{ta}],
\qquad
\Theta_{0}(t;y_0):=\ln \mathbb E^c_{y_0}[e^{ta}],
\]
and the cumulants $\kappa_k(y_0):=\partial_t^k\Theta_{\mathrm L}(0;y_0)$, $\kappa_k^c(y_0):=\partial_t^k\Theta_0(0;y_0)$ for $k\ge1$.
Then, for $r$ near $r_0$ and $r$ near $0$, respectively,
\begin{equation}\label{eq:mck_cumulant_L}
\begin{aligned}
\ln\!\Biggl(
\frac{\langle v_0,\hat\Psi(r)w_0\rangle_x}{v(y_0)w(y_0)}
\Biggr)
&=
\ln E_n(y_0)+\Theta_{\mathrm L}\bigl(-(r-r_0);y_0\bigr)\\
&=
\ln E_n(y_0)
+
\sum_{k=1}^{\infty}\frac{\kappa_k(y_0)}{k!}(-1)^k(r-r_0)^k,
\end{aligned}
\end{equation}
\begin{equation}\label{eq:mck_cumulant_0}
\begin{aligned}
\ln\!\Biggl(
\frac{\big\langle \bar v_0,\frac{1}{\lambda_0}\hat\Psi(r)\bar w_0\big\rangle_x}
{\bar v(y_0)\bar w(y_0)}
\Biggr)
&=
\ln E_n^c(y_0)+\Theta_0(-r;y_0)\\
&=
\ln E_n^c(y_0)
+
\sum_{k=1}^{\infty}\frac{\kappa_k^c(y_0)}{k!}(-1)^k r^k.
\end{aligned}
\end{equation}

\medskip
\noindent\textbf{Second-order truncation (generation-time statistics).}
Retaining only the first two cumulants in \eqref{eq:mck_cumulant_L}--\eqref{eq:mck_cumulant_0} yields
\begin{equation}
\ln\!\Biggl(
\frac{\langle v_0,\hat\Psi(r)w_0\rangle_x}{v(y_0)w(y_0)}
\Biggr)
=
\ln E_n(y_0)-\bar g_{\mathrm L}(r-r_0)
+\frac{1}{2}\sigma_{\mathrm L}^{2}(r-r_0)^2
+o\!\big((r-r_0)^2\big),
\end{equation}
\begin{equation}
\ln\!\Biggl(
\frac{\big\langle \bar v_0,\frac{1}{\lambda_0}\hat\Psi(r)\bar w_0\big\rangle_x}
{\bar v(y_0)\bar w(y_0)}
\Biggr)
=
\ln E_n^c(y_0)-\bar g_{0}\,r
+\frac{1}{2}\sigma_{0}^{2}r^2
+o(r^2),
\end{equation}
where the generation-time mean and variance under the associated probabilities are
\begin{align*}
\bar g_{\mathrm L}&:=\kappa_1(y_0)=\mathbb E[a],\qquad
\sigma_{\mathrm L}^{2}:=\kappa_2(y_0)=\mathbb V_{y_0}(a),\\
\bar g_{0}&:=\kappa_1^c(y_0)=\mathbb E^c[a],\qquad
\sigma_{0}^{2}:=\kappa_2^c(y_0)=\mathbb V^c_{y_0}(a).
\end{align*}

\begin{remark}
\rm
Accordingly, any representative demographic indicator derived from the multi-state McKendrick equation inherently reflects the entire sequence of intergenerational transitions and cannot be characterized solely by cohort-based quantities, as in the classical McKendrick or Leslie models.
\end{remark}

\subsection{Consistency of the reference-point normalization with the Euler--Lotka equation}
\label{app:EL-ref0}

In the main text we determine the eigenvalue by normalizing the eigenfunctions so that their values at the reference point equal $1$. For the classical one-state McKendrick--von Foerster model, taking the reference point at age $0$ shows that this normalization reproduces the Euler--Lotka equation.

Let $\ell(a)$ be the survival function and $\beta(a)$ the fertility rate, and write $\lambda=e^{r}>0$. In the classical theory, the stable age density has the form
\[
w(a)=c_w e^{-ra}\ell(a),
\qquad a\ge0,
\]
so that its value at age $0$ is $w(0)=c_w$. Hence the newborn production is
\begin{equation}\label{eq:mck_w0_EL}
w(0)=\int_{0}^{\infty}\beta(a)w(a)\,da
=
c_w\int_0^\infty e^{-ra}\beta(a)\ell(a)\,da.
\end{equation}
Therefore, imposing the reference-point normalization $w(0)=1$ is equivalent to
\[
\int_0^\infty e^{-ra}\beta(a)\ell(a)\,da=1,
\]
which is exactly the Euler--Lotka equation.

Likewise, under the same reference-point viewpoint, the reproductive value at age $0$ is a constant multiple of the Euler--Lotka functional:
\[
v(0)=c_v\int_0^\infty e^{-ra}\beta(a)\ell(a)\,da,
\]
for a constant $c_v>0$ depending only on the chosen normalization. Recalling the scalar identity \eqref{eq:key_relation_cD},
\[
\frac{1}{c_w}c(\lambda)D(\lambda)=1-\frac{w(0)}{c_w},
\]
we see that, in the classical McKendrick theory, choosing age $0$ as the reference point amounts precisely to the reference-point eigenstructure: the eigenvalue is recovered by fixing the eigenfunction value at the reference point.
\section{Discussion}
\label{sec:discussion}

This paper develops a determinant-free framework for describing the dominant
eigenstructure of positive Fredholm operators through a reference-point
construction.  The main result is a kernel-level recursion whose normalized
series represents the leading right and left eigenfunctions under two explicit
conditions: dominant spectral separation and scalar non-resonance for the
chosen reference pair.

The reference-point operator transfers the renewal intuition behind discrete
taboo decompositions to continuous kernels.  In discrete Markov chains, taboo
probabilities decompose paths according to visits to a distinguished state. In
the present setting the analogous construction is made on kernels.  The
resulting coefficient growth is controlled by the reference-point generating
identity and the \(\kappa\)-dependent non-resonance condition.

The kernel class \(\mathbb K\) makes the reference-point
recursion meaningful, but it does not by itself imply convergence at the
Perron root.  The result is therefore a conditional representation theorem for
operators whose dominant spectral decomposition and reference denominator can
be checked.

For discrete-time IPMs, the expansion expresses the stable distribution and
the reproductive value through iterated kernels, which collect multistep
transition and reproductive contributions.  For multi-state McKendrick
equations, the age-zero renewal kernel reduces the dominant-root problem to an
Euler--Lotka-type equation at the kernel level.  These applications are meant
to illustrate the operator-theoretic construction; they do not require a
separate biological modelling assumption beyond the stated kernel and
spectral hypotheses.
In biological terms, the genealogical expansion decomposes the stable
population and reproductive value into contributions indexed by the number of
successive reproductive or transition generations represented by the iterated
kernels.

Several questions remain.  It would be useful to replace point references by
reference sets of positive measure, which may lead to more robust numerical
schemes and a more direct applied interpretation.  Another direction is to
identify model-specific conditions under which the dominant spectral
separation and reference non-resonance assumptions can be verified directly
from the ingredients of empirical integral projection models.

Overall, the reference-point construction provides a new representation of
dominant eigenstructures that applies to both IPMs and age-structured renewal
models without determinant expansions or discretization, while retaining a
direct decomposition of the contributions that build the leading eigensystem.
\section*{Acknowledgements}
The authors are grateful to Kumiko Oizumi, Shin Oizumi, Ko Oizumi, and Hiroko Oizumi for their support and encouragement. The authors also thank Hisashi Inaba for valuable advice and Youichi Enatsu for helpful discussions. The authors pay tribute to the late Professor Nobuhiko Fuji, who passed away in December 2025, and gratefully acknowledge the many fruitful discussions and the mathematical knowledge he shared.

\section*{Statements and Declarations}

\paragraph{Funding.}
This work was supported by Health, Labour and Welfare Sciences Research Grants from the Ministry of Health, Labour and Welfare of Japan (Grant Number JPMH26AA2009; project title: ``Population and household projections and social restructuring focusing on diversifying household structures''). Yuki Chino acknowledges support from the NSTC grant 111-2115-M-A49-009-MY3.

\paragraph{Competing interests.}
The authors declare that they have no competing interests relevant to the content of this article.

\end{document}